\numberwithin{equation}{section}
\newenvironment{red}{\relax\color{red}}{\relax}
\newenvironment{blue}{\relax\color{blue}}{\hspace*{.5ex}\relax}
\newcommand{\ber}{\begin{red}}
\newcommand{\er}{\end{red}}
\newcommand{\beb}{\begin{blue}}
\newcommand{\eb}{\end{blue}}
\theoremstyle{plain}
\newtheorem{lemma}{Lemma}[section]
\newtheorem{proposition}[lemma]{Proposition}
\newtheorem{theorem}[lemma]{Theorem}
\newtheorem{corollary}[lemma]{Corollary}
\theoremstyle{definition}
\newtheorem{remark}[lemma]{Remark}
\newtheorem{example}[lemma]{Example}
\newtheorem{definition}[lemma]{Definition}
\newcommand{\n}{\mathfrak{n}}
\newcommand{\g}{\mathfrak{g}}
\newcommand{\C}{\mathbb{C}}
\newcommand{\Z}{\mathbb{Z}}
\newlength{\mylength}
\title{Supersymmetric bi-Hamiltonian systems}
\author{Sylvain Carpentier}
\address[S. Carpentier]
{Department of Mathematics, Columbia University in the City of New York, Broadway and 116th, New York, NY 10027 }
\email{sc.4278@columbia.edu}
\author{Uhi Rinn Suh}
\address[U.R. Suh]
{ Department of Mathematical Sciences and Research institute of Mathematics, Seoul National University, GwanAkRo 1, Gwanak-Gu, Seoul 08826,
Korea}
\email{uhrisu1@snu.ac.kr}
\date{\today}
\begin{document}

\begin{abstract}
We construct super Hamiltonian integrable systems within the theory of Supersymmetric Poisson vertex algebras (SUSY PVAs).  We provide a powerful tool for the understanding of SUSY PVAs called the super master formula.  We attach some Lie superalgebraic data to a generalized SUSY W-algebra and show that it is equipped with two compatible SUSY PVA brackets. We reformulate these brackets in terms of odd differential operators and obtain super bi-Hamiltonian hierarchies after performing a supersymmetric analog of the Drinfeld-Sokolov reduction on these operators. As an example, an integrable system is constructed from $\g=\mathfrak{osp}(2|2)$.
\end{abstract}

\maketitle

\section{Introduction}

 In \cite{DS}, Drinfeld and Sokolov introduced a new local Poisson algebra later called a classical W-algebra. For a  finite semisimple Lie algebra $\g$ and its principle nilpotent element $f_{\text{pr}}$, the classical W-algebra $W(\g, f_{\text{pr}})$
 is obtained by Hamiltonian reduction from the Poisson algebra $S(\hat{\g})$. These algebras have been studied both in physics and mathematics \cite{BP93}.
 In the same paper \cite{DS}, Drinfeld and Sokolov constructed a hierarchy of integrable bi-Hamiltonian systems on $W(\g, f_{\text{pr}})$ by considering two compatible Lie brackets on its quotient space $W(\g, f_{\text{pr}})/\partial W(\g, f_{\text{pr}})$. 
  Ever since, the so-called generalized Drinfeld-Sokolov hierarchies have flourished in the literature, see the papers \cite{BDHM, DFG98, DHM, DKV13, DKV14, FHM93, FM97, Ped95, LWZ11, S18, W17} and references there.

\vskip 2mm

In the recent articles  \cite{DKV13, DKV14}  by De Sole, Kac, Valeri and the second author \cite{S18},  generalized Drinfeld-Sokolov hierarchies have been tackled within the theory of Poisson vertex algebras
(PVAs). This framework introduced in \cite{BDK} proved to be successful as the authors of \cite{DKV18,  DKV16, DKV15,DKV16_2, DKV15_2}  were able to construct integrable biHamiltonian hierarchies on the classical W-algebra $W(\g,f)$, for any semisimple Lie algebra $\g$ of classical type and any nilpotent element $f \in \g$. Explicit formulas for these equations and their solutions were found in \cite{CDKVV19} by de Sole, Kac, Valeri, van de Leur and the first author in the case $\g=\mathfrak{gl}_N$.

\vskip 2mm

In this paper, we investigate super Hamiltonian equations in terms of Supersymmetric Poisson Vertex Algebras (SUSY PVAs), invented in \cite{HK06}. A classic and well-known example of such equations is the super KdV equation (see \cite{Mc92} for instance)
\begin{equation} \label{sKdV-6}
\psi_t=\psi^{(6)} + 3 \psi'' \psi' + 3\psi \psi^{(3)}, \, \, \psi \, \,  \text{odd}, \, \, \psi^{(n)}:=D^n(\psi),
\end{equation}
who can be cast in a super Hamiltonian form using the (local) Neveu-Schwarz SUSY PVA bracket
\[ \{\psi{}_\chi\psi\}= \textstyle  (D^2-\frac{3}{2}\chi^2+ \frac{1}{2}\chi D)\psi -\chi^5\]
as follows 
\begin{equation} \label{sKdV-2}
\psi_t=\textstyle\{  \psi \psi' {}_\chi \psi\}|_{\chi=0}.
\end{equation}

\vskip 2mm
Super Hamiltonian equations have been studied in various articles \cite{DG98,  IK92,  IK91, I91,  IK91_2, Kuper,  KZ, ManRa85,Mc92, OP90}.  Oevel and Popowicz discovered in \cite{OP90} that the super KdV equation \eqref{sKdV-6} is  super Hamiltonian for a second nonlocal bracket. However in this paper we are exclusively interested in systems that are bi-Hamiltonian for two local compatible brackets. We plan on studying nonlocal SUSY PVAs in a future publication.
\vskip 2 mm

 In \cite{IK92, IK91}, Inami and Kanno constructed integrable systems associated to  a family of odd differential operators taking values in the affine Lie superalgebras $\mathfrak{sl}(n|n)^{(1)}$ and $\mathfrak{osp}(2|2)^{(2)}$. 
Our overarching goal in this paper is to extend their method to a much larger class of systems and to prove that these systems are bi-Hamiltonian in the framework of local SUSY PVAs. This generalization is complementary to the approach of Delduc and Gallot in \cite{DG98}.

\if

In Section \ref{Sec:PVA and HIS}, we review the notions in PVAs theory and Hamiltonian integrable systems via $\lambda$-brackets. If we know a biHamiltonian structure of an integrable system, the integrability can be proved  via Lenard-Magri scheme. At the end of the section, we provide the statement of Lenard-Magri scheme.

\fi

\vskip 2mm
Simply speaking, a SUSY PVA structure on a differential superalgebra $\mathcal{P}$ with an odd derivation $D$ is a linear map $ \mathcal{P} \otimes_{\C} \mathcal{P} \rightarrow \mathcal{P}[\chi]$, where $\chi$ is an odd indeterminate, satisfying the sesquilinearity, skew-symmetry, Jacobi identity and Leibniz rule axioms (see Definition \ref{Def:SPVA}).
We prove in Section \ref{Sec: SUSY PVA and sHIS} that, when $\mathcal{P}$ is a superalgebra of differential polynomials, these axioms hold if and only if they hold on its generators. We give in this case an explicit formula, called the \textit{super master formula} \eqref{Eqn:Master formula2}, for the $\chi$-bracket of any two elements in $\mathcal{P}$ in terms of the $\chi$-brackets of its generators.

\vskip 2mm
The induced bracket $\{ \smallint a, \smallint b\}:=\smallint \{a _\chi b \}_{\chi=0}$ on the quotient space $\mathcal{P}/D \mathcal{P}$ is a Lie superalgebra of degree $\bar{1}$ bracket. Moreover, $\mathcal{P}/D \mathcal{P}$ naturally acts by derivations on $\mathcal{P}$ itself as follows: $(\smallint a).b :=\{ \smallint a _\chi b \}_{\chi=0}$. In this language, the super Hamiltonian equation associated to the Hamiltonian $\smallint h$ is the system 
\[
\frac{du_i}{dt}=\{\smallint h _{\chi} u_i \}_{\chi=0}, \, i \in I, \, \, \, \, \text{where }  \mathcal{P}=\C[D] \otimes \text{Span}_{\C} \{u_i | i \in I \}.
\]
A (super) integrable system on $(\mathcal{P}, \{ \ {}_\chi \ \})$ is an infinite dimensional abelian subalgebra of $(\mathcal{P}/D \mathcal{P}, \{ , \})$ or, equivalently, a family of pairwise compatible super Hamiltonian equations whose Hamiltonians form an infinite dimensional subspace of $\mathcal{P}/D \mathcal{P}$.

\vskip 2 mm

Let $\g$ be a simple Lie superalgebra. Consider a nondegenerate invariant supersymmetric bilinear form $( \, | \, )$ and  an odd element $s_0$ in $\g$. The superalgebra $\mathcal{P}(\overline \g)$ of differential polynomials on the parity reversed vector superspace $\overline \g$ admits two compatible SUSY PVA brackets, which are defined on its generators as follows:
\begin{equation}
\begin{split}
\{\bar{a} {}_\chi \bar{b} \}_1&=(-1)^{p(a)}( \overline{ [a,b]} +\chi (a|b)),\\
\{\bar{a} {}_\chi \bar{b} \}_2&=(-1)^{p(a)+1}(s_0|[a,b]).
\end{split}
\end{equation}
$\mathcal{P}( \overline \g)$ is called the affine SUSY PVA associated to $\g$.
We pick an arbitrary, not necessarily principal, $\Z$-grading of $\g$
\begin{equation}
\g=\bigoplus_{ k \in \Z} \g_k.
\end{equation}
Let $\n \subset \g_{>0}$ be a subalgebra of $\g$, $\mathfrak{m} \subset \n$  be an $\n$-module and let $f \in \g_{-i}, s \in \g_j, i,j >0$ be two odd elements such that
\begin{enumerate}
\item[(A-1)] $\text{ad} f|_{\n}$ is injective,
\item[(A-2)]$(f|[\n, \mathfrak{m}])=0$,
\item[(A-3)] $\g_{ \geq i} \subset \mathfrak{m}$,
\item[(A-4)] $[s, \n]=0$,
\item[(A-5)] $\text{ad} \, \, \Lambda^2$ is semisimple in $\hat \g=\g(\!(z^{-1})\!)$ where $\Lambda:=f+zs$.
\end{enumerate}
Let $\mathfrak{b}$ be the orthogonal of $\mathfrak{m}$ for the bilinear form and $\mathfrak{b}_-$ be a complement of $\mathfrak{m}$ in $\g$. The key object of our construction is the odd differential operator 
\[ \mathcal{L}(\mathcal{Q})= D+\mathcal{Q} + \Lambda\otimes 1 \in \C[D]\ltimes \g(\!( z^{-1})\!) \otimes \mathcal{P},  \]
where $\mathcal{P}$ is the superalgebra  of differential polynomials generated by $\overline{\mathfrak{b}}_-$  and $\mathcal{Q} \in \mathfrak{b} \otimes \mathcal{P}$. The Lie algebra $(\n \otimes \mathcal{P})_{\bar{0}}$ acts by gauge transformations on this class of operators, which induces a well-defined $\n$-action 
 on the superalgebra $\mathcal{P}$. We show in Section \ref{Sec: DSred} that the set of invariant elements in $\mathcal{P}$ forms a superalgebra
 $\mathcal{W}(\g,\n, \mathfrak{m},f)$, called the \emph{generalized W-algebra associated to} $\g, \n, \mathfrak{m}$ and $f$. Moreover,  $\mathcal{W}(\g,\n, \mathfrak{m},f)$  admits two compatible SUSY PVA brackets. We give an equivalent construction of these brackets as a reduction of the two compatible brackets on the affine SUSY PVA associated to $\g$. The generalized W-algebra is generated by $\text{dim} \,  \g-\text{dim}\, \n-\text{dim} \,  \mathfrak{m}$ elements.
\vskip 2 mm

In Section \ref{Sec:DS hier}, we construct bi-Hamiltonian hierarchies attached to our superalgebraic data. To each element $C$ in the center $\mathcal{Z}$ of $\text{ker} \, \text{ad} \, \Lambda^2$ we associate a functional $\smallint h_C$ in $\mathcal{W}(\g,\n, \mathfrak{m},f)/D \mathcal{W}(\g,\n, \mathfrak{m},f)$ and an element $M_C \in \hat{\g} \otimes \mathcal{P}$ which supercommutes with the \textit{universal Lax operator} 
\begin{equation}
\mathcal{L}:=\mathcal{L}(\mathcal{Q}_u), \, \, \mathcal{Q}_u=\sum_{t \in I} q^t \otimes \bar{q}_t, \, \, (q^t|q_{t'})=\delta_{t,t'}
\end{equation}
where $(q^t)_{t \in I}$ ($\text{resp.} (q_t)_{t \in I}$) is a basis of $\mathfrak{b}$ (resp. $\mathfrak{b}_-$).
Our main results are the followings:

\begin{theorem}\label{theintro}
The equations
\begin{equation} \label{toprove}
\frac{d \mathcal{L}}{dt_C}=[M_{C}^+,\mathcal{L}], \, \, C \in \mathcal{Z}
\end{equation}
are well-defined on $\mathfrak{b} \otimes \mathcal{P}$ and pairwise compatible. They induce a hierarchy of pairwise compatible super Hamiltonian equations on $\mathcal{P}$
\begin{equation}
\frac{d \bar{q}_t}{dt_C}=\{ \smallint h_C {}_\chi \bar{q}_t \}_2 |_{\chi=0},\, \, t \in I.
\end{equation}
\end{theorem}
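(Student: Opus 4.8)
The plan is to establish the three assertions in turn: that the Lax flows $\frac{d\mathcal{L}}{dt_C}=[M_C^+,\mathcal{L}]$ are well-defined on $\mathfrak{b}\otimes\mathcal{P}$, that they pairwise commute, and that reading them off in the $\bar q_t$-components reproduces the super Hamiltonian equations for the second bracket.

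First, for well-definedness, I would start from the supercommutation $[M_C,\mathcal{L}]=0$ and the splitting $M_C=M_C^++M_C^-$ into its non-negative and strictly negative parts in powers of $z$. Since $\mathcal{L}=D+\mathcal{Q}_u+f+zs$ and only $\mathcal{Q}_u$ carries the time dependence, the left-hand side of \eqref{toprove} is $\frac{d\mathcal{Q}_u}{dt_C}$. Writing $[M_C^+,\mathcal{L}]=-[M_C^-,\mathcal{L}]$, I would count powers of $z$: on the right, $[M_C^-,D+\mathcal{Q}_u+f]$ stays in strictly negative degree while $[M_C^-,zs]$ raises the degree by exactly one, so the right-hand side has $z$-degree at most $0$; on the left, $[M_C^+,D+\mathcal{Q}_u+f]$ has degree $\geq 0$ and $[M_C^+,zs]$ has degree $\geq 1$, so the left-hand side has degree $\geq 0$. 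Comparing forces $[M_C^+,\mathcal{L}]$ to lie in $\g\otimes\mathcal{P}$, independent of $z$. The remaining point, that this element actually lands in $\mathfrak{b}\otimes\mathcal{P}$, I expect to follow from the gauge-invariance and reduction structure of Section \ref{Sec: DSred}, since $\mathcal{Q}_u$ is valued in $\mathfrak{b}$ and the flow must descend to $\mathcal{W}(\g,\n,\mathfrak{m},f)$.

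Next, for pairwise compatibility, I would reduce $[\partial_{t_C},\partial_{t_{C'}}]\mathcal{L}=0$ to the zero-curvature identity
\[
\partial_{t_C} M_{C'}^+ - \partial_{t_{C'}} M_C^+ - [M_C^+,M_{C'}^+]=0
\]
on the relevant projection. The inputs are that $M_C$ and $M_{C'}$ are built from the abelian center $\mathcal{Z}$ of $\ker\operatorname{ad}\Lambda^2$ and hence supercommute, together with $[M_C,\mathcal{L}]=[M_{C'},\mathcal{L}]=0$; projecting $[M_C,M_{C'}]=0$ onto its non-negative and strictly negative $z$-parts and applying the super Jacobi identity yields the claim, with care taken for the Koszul signs introduced by the odd elements $f,s$ and the odd derivation $D$.

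The main obstacle, and the heart of the proof, is identifying the Lax flow with the super Hamiltonian equation for $\{\,\cdot\,{}_\chi\,\cdot\,\}_2$. Here I would invoke the reformulation of the second bracket in terms of the odd differential operator $\mathcal{L}$ developed earlier, together with the fact that the functional $\smallint h_C$ is defined (via a supertrace/residue pairing with $C\in\mathcal{Z}$) precisely so that $M_C$ represents its variational derivative with respect to $\mathcal{Q}_u$. Extracting the $q^t$-component of $[M_C^+,\mathcal{L}]$ using $(q^t|q_{t'})=\delta_{t,t'}$ and applying the super master formula \eqref{Eqn:Master formula2} should convert the Lax commutator into $\{\smallint h_C{}_\chi\bar q_t\}_2|_{\chi=0}$. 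The delicate parts are tracking the Koszul signs through the odd indeterminate $\chi$ and the odd derivation $D$, verifying that the supertrace pairing delivers exactly the gradient $M_C$, and checking that the computation is compatible with the Drinfeld--Sokolov reduction so that it descends to the generalized W-algebra; the pairwise commutativity of the Hamiltonians $\smallint h_C$ then follows from the already-established commutativity of the flows.
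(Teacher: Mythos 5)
Your overall route coincides with the paper's: use $[M_C,\mathcal{L}_s]=0$ and a $z$-degree count to reduce $[M_C^+,\mathcal{L}_s]$ to the single $z$-independent term $-[M_C^{-1},s\otimes 1]$, identify that term with the second-bracket Hamiltonian flow through the gradient formula $(M_C^0)_{\mathfrak{b}_-}=\sum_k q_k\otimes\frac{\delta h_C}{\delta\bar q_k}$ of Lemma \ref{Lemma: h_C}, and prove commutativity of the flows by a Drinfeld--Sokolov zero-curvature argument. One small correction: the reason the flow is well defined on $\mathfrak{b}\otimes\mathcal{P}$ is not the gauge-invariance/reduction structure, but the elementary facts that $s$ centralizes $\mathfrak{m}\subset\n$, so $[M_C^{-1},s\otimes 1]=[(M_C^{-1})_{\mathfrak{b}_-},s\otimes 1]$, and that the second affine bracket annihilates $\overline{\n}$, so the components of this commutator along $q^{t'}$ with $t'\in\tilde I\setminus I$ vanish; this is exactly the computation \eqref{y} in Lemma \ref{newlemma}. (There is also a $z$-shift you elide: the flow of $\smallint h_C$ for the second bracket is governed by $M_{z^{-1}C}^+$, since $M_C^{-1}=M_{zC}^0$.)

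More seriously, the inputs you list for pairwise compatibility --- that $[M_{C_1},M_{C_2}]=0$ and $[M_{C_i},\mathcal{L}_s]=0$ --- do not by themselves yield the zero-curvature identity. You also need the evolution law $\frac{dM_{C_2}}{dt_{C_1}}=[M_{C_1}^+,M_{C_2}]$, and that is where the actual work lies. The paper obtains it in Theorem \ref{bigthe} by conjugating the flow with the dressing operator $e^{\text{ad}\,T}$: writing $\frac{dK}{dt_{C_i}}=[\tilde{A}_i,D+K+\Lambda\otimes 1]$, one shows by the argument of Lemma \ref{second lemma_DS} that $\tilde{A}_i\in\mathcal{K}\otimes\mathcal{P}$, hence $[\tilde{A}_i,C_j\otimes 1]=0$ because $C_j$ lies in the \emph{center} $\mathcal{Z}$ of $\mathcal{K}$; undressing then gives $\frac{dM_{C_j}}{dt_{C_i}}=[M_{C_i}^+,M_{C_j}]$. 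Without this step (or an equivalent rigidity statement for the centralizer of $\mathcal{L}_s$), projecting $[M_{C_1},M_{C_2}]=0$ onto non-negative and negative $z$-degrees does not close into the claimed identity. Once this dressing argument is inserted, the remainder of your outline matches the paper's proof.
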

Moreover, we show that these equations induce a bi-Hamiltonian hierarchy on $\mathcal{W}(\g, \n, \mathfrak{m},f)$.

\begin{theorem}
 The super Hamiltonian equations associated to elements $C \in \mathcal{Z}$
\begin{equation}\label{redSHEintro}
\frac{d \phi_k}{dt_C} = \{\smallint h_{C}{}_\chi  \phi_k \}_1^{\mathcal{W}}|_{\chi=0}, \, \, \, \, k=1,..., r
\end{equation} 
where $\mathcal{W}(\g, \n, \mathfrak{m},f)=S(\C[D] \otimes \text{Span}_{\C}\{ \phi_1,..., \phi_r \}$, are pairwise compatible. In particular, the Hamiltonians $(\int h_{D})_{D \in \mathcal{Z}}$ are integrals of motion for each of these equations. Moreover, \eqref{redSHEintro} can be rewritten using the second reduced bracket:
\begin{equation*}
\frac{d \phi_k}{dt_C} = \{\smallint h_{zC}{}_\chi \phi_k \}_2^{\mathcal{W}}|_{\chi=0}, \, \, \, \, k=1,...,r.
\end{equation*}
\end{theorem}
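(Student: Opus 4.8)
The whole statement will follow from a single \emph{recursion identity} on $\mathcal{W}:=\mathcal{W}(\g,\n,\mathfrak{m},f)$, namely
\begin{equation}\label{recplan}
\{\smallint h_C {}_\chi a\}_1^{\mathcal{W}}|_{\chi=0}=\{\smallint h_{zC} {}_\chi a\}_2^{\mathcal{W}}|_{\chi=0},\qquad a\in\mathcal{W},\ C\in\mathcal{Z},
\end{equation}
which is exactly the ``Moreover'' reformulation once we specialize $a=\phi_k$, and which exhibits multiplication by $z$ on the center $\mathcal{Z}$ as the recursion operator of the bi-Hamiltonian pencil. So the plan is: (i) check that the hierarchy of Theorem \ref{theintro} descends to $\mathcal{W}$ and there takes the first-bracket Hamiltonian form of \eqref{redSHEintro}; (ii) prove \eqref{recplan}; (iii) deduce pairwise compatibility and the conservation laws. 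For step (i) I would push the affine-level result of Theorem \ref{theintro} through the supersymmetric Drinfeld--Sokolov reduction of Section \ref{Sec: DSred}. Since each Hamiltonian density $h_C$ is gauge invariant, $\smallint h_C$ defines a functional on $\mathcal{W}/D\mathcal{W}$, and since $\{\cdot_\chi\cdot\}_1^{\mathcal{W}}$ is by construction the reduction of the first affine bracket, the flow of Theorem \ref{theintro} restricts to $\mathcal{W}$ and is computed there by $\{\smallint h_C {}_\chi \phi_k\}_1^{\mathcal{W}}|_{\chi=0}$.

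Step (ii) is the main obstacle. I would prove \eqref{recplan} by showing that both sides produce the \emph{same} Lax equation $[M_C^+,\mathcal{L}]$. On the left, step (i) identifies the first-bracket flow of $\smallint h_C$ with that Lax equation. On the right, I would use that $M_{zC}=zM_C$ together with the precise way the spectral parameter enters $\Lambda=f+zs$: the second bracket is governed by the odd element $s_0$, which is tied to the $s$-component of $\Lambda$, so trading $C$ for $zC$ exactly converts the $s_0$-term of $\{\cdot_\chi\cdot\}_2^{\mathcal{W}}$ into the $zs$-contribution carried by $\mathcal{L}$, and the second-bracket flow of $\smallint h_{zC}$ again collapses to $[M_C^+,\mathcal{L}]$. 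The delicate points here are the bookkeeping of the super signs coming from the parities of $a$, $f$ and $s$, and verifying that any correction terms in $M_{zC}-zM_C$ are $D$-exact and therefore invisible after the reduction and the evaluation at $\chi=0$.

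Finally, for step (iii) I would combine \eqref{recplan} with the commutativity of $\mathcal{Z}$. Because $M_C$ and $M_{C'}$ are obtained from the commutative algebra $\mathcal{Z}\subset\ker\operatorname{ad}\Lambda^2$ by the same dressing, they supercommute, $[M_C,M_{C'}]=0$; the same zero-curvature computation as in Theorem \ref{theintro}, now carried out on $\mathcal{W}$, then shows that the flows $\tfrac{d}{dt_C}$ and $\tfrac{d}{dt_{C'}}$ commute, which is the asserted pairwise compatibility. Moreover, using $[\mathcal{L},M_C]=0$ and the cyclicity of the supertrace, the $t_{C'}$-derivative of $\smallint h_C$ is the residue of a supercommutator and hence lies in $D\mathcal{W}$, so every $\smallint h_C$ is an integral of motion for the whole hierarchy; read through the Hamiltonian form this is the involution $\{\smallint h_{C'},\smallint h_C\}_1^{\mathcal{W}}=0$ for all $C,C'\in\mathcal{Z}$. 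Equation \eqref{recplan} with $a=\phi_k$ is the asserted second-bracket form, completing the proof. One should finally check that the diagonal terms $\{\smallint h_C,\smallint h_C\}_i^{\mathcal{W}}$ vanish on parity grounds, which is where the degree-$\bar1$ skew-symmetry of the induced bracket on $\mathcal{W}/D\mathcal{W}$ is used.
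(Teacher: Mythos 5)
Your strategy coincides with the paper's: everything is reduced to the Lenard-type recursion $\{\smallint h_C{}_\chi\,\cdot\,\}_1^{\mathcal W}|_{\chi=0}=\{\smallint h_{zC}{}_\chi\,\cdot\,\}_2^{\mathcal W}|_{\chi=0}$, which is obtained by pairing $[M_C,\mathcal L_s]=0$ against $\sum_t q_t\otimes\frac{\delta\phi}{\delta\bar q_t}$ and using $M_{zC}=zM_C$ so that $M_C^{-1}$ carries the variational derivative of $h_{zC}$ (cf.\ Lemma \ref{Lemma: h_C}); pairwise compatibility and the conservation laws then follow from the affine-level commutativity of Theorem \ref{bigthe} together with Lemma \ref{eqcom}. (One small simplification: $M_{zC}-zM_C$ is exactly zero, since $e^{-\mathrm{ad}\,T}$ is $\C(\!(z^{-1})\!)$-linear, so there are no correction terms to track.)

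Two steps in your sketch are genuinely under-justified. First, the first-bracket form of the reduced flow is not automatic ``by construction'': the reduced bracket $\{\,\cdot\,{}_\chi\,\cdot\,\}_1^{\mathcal W}$ only sees the $\mathfrak b_-\otimes\mathcal P$-component of $M_C^0$, while the Lax equation involves all of $M_C^0$. One must prove that the $\mathfrak m\otimes\mathcal P$-component of $M_C^0$ pairs to zero against $[\mathcal L_s,\sum_t q_t\otimes\frac{\delta\phi}{\delta\bar q_t}]$, and this uses precisely the gauge invariance of $\phi$, namely $\{\bar{\mathfrak n}{}_\chi\phi\}_1\subset\mathcal I_{f,\mathfrak m}[\chi]$ combined with $\mathcal P\cap\mathcal I_{f,\mathfrak m}=0$; this is the content of Lemma \ref{Lemma:first_Hamiltonian}. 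Second, the Lax-operator formulas for the reduced brackets (Proposition \ref{Prop: Lax_bracket_W}) are intrinsically integrated: they compute $\{\smallint\phi,\smallint\psi\}_i^{\mathcal W}$, not the local quantities $\{\phi{}_\chi\psi\}_i^{\mathcal W}|_{\chi=0}$. Your argument therefore only delivers $\smallint\{h_C{}_\chi\phi\}_1^{\mathcal W}|_{\chi=0}=\smallint\{h_{zC}{}_\chi\phi\}_2^{\mathcal W}|_{\chi=0}$, and equality of the induced brackets on $\mathcal W/D\mathcal W$ does not formally imply equality of the underlying derivations; an extra de-integration step is required, which the paper supplies via Corollary \ref{cor_1&2}. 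Your remark that $D$-exact terms are ``invisible after the reduction and the evaluation at $\chi=0$'' is exactly where this gap hides: such terms are invisible only under $\smallint$.
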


 Finally, we apply in Section \ref{Sec:example} our results to the simple Lie superalgebra $\g=\mathfrak{osp}(2|2)$. In particular, we find a bi-Hamiltonian hierarchy on two even variables $w_1, w_3$ and two odd variables $w_2,w_4$ whose first equation is 
\begin{equation}\label{ex_eq1 intro}
\begin{split}
\frac{dw_1}{dt_1}&=w_1''-2(w_1w_2)'-2w_1w_3, \, \, \, \frac{dw_2}{dt_1}=0, \\
\frac{dw_3}{dt_1}&=-8 w_1 w_2 w_4, \, \, \,\frac{dw_4}{dt_1}=w_4''+2w_2w_4'+2w_3w_4.
\end{split}
\end{equation}

\vskip 2 mm
\textit{Acknowlegments}. The first
author was supported by a Junior Fellow award from the Simons Foundation. He is extremely grateful to the Seoul National University for its hospitality during a short visit in the fall of 2019, where most of the work for this paper was accomplished. 
The  second author was supported by the New Faculty Startup Fund from Seoul National University and the NRF Grant \# NRF-2019R1F1A1059363.

\section{Supersymmetric Poisson vertex algebras and Hamiltonian Systems} \label{Sec: SUSY PVA and sHIS}

\subsection{Supersymmetric poisson vertex algebras}

In this section, we review the theory of Supersymmetric Poisson Vertex Algebras (SUSY PVAs), who first appeared in \cite{HK06}. We introduce and derive the super master formula  \eqref{Eqn:Master formula2} in the case of superalgebras of differential polynomials.

\begin{definition} \hfill
\begin{enumerate}
\item A {\it vector superspace} $V=V_{\bar{0}}\oplus V_{\bar{1}}$ is a $\Z/2\Z$-graded vector space. An element in $V_{\bar{0}}$ (resp. $V_{\bar{1}}$) is called {\it even} (resp. {\it odd}). The parity of a homogeneous element $a$ is defined by 
$ p(a)= \left\{ \begin{array}{ll} 0  & \text{if $a$ is even}, \\ 1 & \text{if $a$ is odd.} \end{array} \right.$
\item A {\it superalgebra} A is a vector superspace with a  binary operation $A\otimes A \to A$ satisfying 
$ A_{\bar{i}} A_{\bar{j}} \subset A_{\bar{i}+\bar{j}}$ for $\bar{i}, \bar{j}\in \Z/2\Z.$
\item A {\it Lie superalgebra} $\g$ is a superalgebra endowed with the binary operation called Lie bracket  $[\ , \ ]: \g\otimes \g \to \g$ satisfying the following properties:
\begin{itemize}
\item(skewsymmetry) \quad $[a,b]= (-1)^{p(a)p(b)+1}[b,a],$
\item(Jacobi identity) \quad $[a,[b,c]]= [[a,b],c]+(-1)^{p(a)p(b)} [b,[a,c]].$
\end{itemize}
\end{enumerate}
\end{definition}

 In the sequel, a  
$\mathbb{C}[D]$-module is a vector superspace on which $D$ acts as an odd derivation.

\begin{definition}  \label{Def:SLCA}
A $\chi$-bracket on a $\mathbb{C}[D]$-module $\mathcal{R}$ is a linear map  
\[ \, [ \ _\chi \ ] : \mathcal{R} \otimes_\C \mathcal{R} \to \mathcal{R}[ \chi] \]
\[  a \otimes b \mapsto    \sum_{n \geq 0} \chi^{n}  a_{(n)}b  \]
such that $ p(a_{(n)}b)=p(a)+p(b)+n+1$ for all $n \geq 0$ and $a, b \in \mathcal{R}$. $\chi$ is an odd indeterminate and  we give  the structure of a $\mathbb{C}[D]$-module to $\mathcal{R}[ \chi]$ via the equation
\begin{equation}
D \chi+\chi D=-2 \chi^2.
\end{equation}

A $\mathbb{C}[D]$-module $\mathcal{R}$ endowed with a $\chi$-bracket is called a {\it ($N_k=1$) supersymmetric Lie conformal algebra (SUSY LCA)} if it satisfies the following properties for any $a,b,c \in \mathcal{R}$:
\begin{itemize}
\item (sesquilinearlity) \quad  $[Da_\chi b]= \chi[a_\chi b], \quad [a_\chi Db]= (-1)^{p(a)+1} (D+\chi)[a_\chi b]$.
\item (skew-symmetry) \quad $[a_\chi b]= (-1)^{p(a)p(b)} [b_{-\chi-D}a],$ where 
\[\textstyle [b_{-\chi-D} \, a]=\sum_{n \geq 0} (-D-\chi)^{n} b_{(n)}a.\]
\item (Jacobi-identity) \quad $[a_\chi [b_\gamma c]]=(-1)^{p(a)+1} [[a_\chi b]_{\chi+\gamma} c]+(-1)^{(p(a)+1)(p(b)+1)}[b_\gamma[a_\chi c]]$. 
\end{itemize}
$\gamma$ is an odd indeterminate which supercommutes with $\chi$. The LHS and the second term of the RHS in the Jacobi identity can be computed using the formulas
\begin{equation}
\begin{aligned}
& \textstyle [a_\chi \gamma b]= (-1)^{p(a)+1}\gamma [a_\chi b] \\
& \textstyle [a_{\, \gamma} \chi b]= (-1)^{p(a)+1}\chi [a_{\, \gamma} b]
\end{aligned}
\end{equation}
for any $a,b \in \mathcal{R}$. As for the first term of the RHS, we use
\begin{equation}
[ \chi a_{\chi+\gamma}b]=- \chi [a_{\chi+\gamma}b].
\end{equation}
\end{definition}

\begin{definition}
We denote by $\smallint a$  the projection of an element $a \in \mathcal{R}$ onto $\mathcal{R}/D \mathcal{R}$ and call elements of $\mathcal{R}/D \mathcal{R}$ \textit{functionals}.
\end{definition}
\begin{lemma} \cite{HK06}
Let $(\mathcal{R}, [ \,\,  {}_{\chi} \, \, ],D)$ be a SUSY LCA. The following bracket is a Lie superalgebra bracket of degree $\bar{1}$ on $\mathcal{R}/D \mathcal{R}$:
\begin{equation}\label{indbra}
[ \smallint a, \smallint b]:=\smallint [a_{\chi} b ]|_{\chi=0}, \, \, a,b \in \mathcal{R}.
\end{equation}
Namely, we have for all $\smallint a, \smallint b \in \mathcal{R}/D \mathcal{R}$,
\begin{equation*}
\begin{split}
& [\smallint a, \smallint b ]=(-1)^{p(a)p(b)} [ \smallint b, \smallint a ],\\
& [ \smallint a, [ \smallint b , \smallint c ] ]=(-1)^{p(a)+1} [ [ \smallint a, \smallint b ], \smallint c ]+(-1)^{(p(a)+1)(p(b)+1)} [\smallint b, [ \smallint a, \smallint c ] ].
\end{split}
\end{equation*}
\end{lemma}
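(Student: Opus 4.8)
The plan is to transport the three defining axioms of the $\chi$-bracket (sesquilinearity, skew-symmetry, Jacobi identity) down to the quotient $\mathcal{R}/D\mathcal{R}$ by applying the projection $\smallint$ and evaluating at $\chi=0$, and also at $\gamma=0$ when a second odd variable appears. The single fact I will invoke repeatedly is that $\smallint Dx=0$ for every $x\in\mathcal{R}$, so that any $D$-exact contribution is discarded after projection; combined with sesquilinearity this is what makes the induced bracket, which is just $[\smallint a,\smallint b]=\smallint a_{(0)}b$, behave well.

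First I would settle well-definedness, i.e. that $\smallint[a_\chi b]|_{\chi=0}$ depends only on the classes $\smallint a$ and $\smallint b$. Replacing $a$ by $a+Dx$ and using $[Dx{}_\chi b]=\chi[x_\chi b]$ shows the change has vanishing $\chi^0$-coefficient. Replacing $b$ by $b+Dy$ and using $[a_\chi Dy]=(-1)^{p(a)+1}(D+\chi)[a_\chi y]$ shows, after noting that the relation $D\chi+\chi D=-2\chi^2$ forces $D(\chi^n\cdot)$ to be divisible by $\chi^n$ for $n\geq 1$, that the only surviving $\chi^0$-term is $(-1)^{p(a)+1}D(a_{(0)}y)\in D\mathcal{R}$, which $\smallint$ kills. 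Since $a_{(0)}b$ has parity $p(a)+p(b)+1$, the descended bracket has degree $\bar 1$. Skew-symmetry then follows by setting $\chi=0$ in $[a_\chi b]=(-1)^{p(a)p(b)}[b_{-\chi-D}a]$: the right side becomes $(-1)^{p(a)p(b)}\sum_{n\geq 0}(-D)^n b_{(n)}a$, all $n\geq 1$ terms lie in $D\mathcal{R}$, and projection leaves precisely $(-1)^{p(a)p(b)}\smallint b_{(0)}a$.

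The crux is the Jacobi identity, which I would obtain by applying $\smallint$ to the $\chi$-bracket Jacobi identity and extracting the coefficient of $\chi^0\gamma^0$. On the left-hand side $[a_\chi[b_\gamma c]]$ I would expand $[b_\gamma c]=\sum_m\gamma^m b_{(m)}c$ and pull each $\gamma$ out through $[a_\chi\gamma^m(\cdot)]=\pm\gamma^m[a_\chi(\cdot)]$, so that at $\gamma=0$ and then $\chi=0$ only $a_{(0)}(b_{(0)}c)$ survives, whose class is $[\smallint a,[\smallint b,\smallint c]]$. The same manoeuvre on $[b_\gamma[a_\chi c]]$ yields $b_{(0)}(a_{(0)}c)$, and the rule $[\chi a{}_{\chi+\gamma}c]=-\chi[a{}_{\chi+\gamma}c]$ applied to $[[a_\chi b]_{\chi+\gamma}c]$ reduces it, after setting $\chi=0$ and then $\gamma=0$, to $(a_{(0)}b)_{(0)}c$. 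Matching these against $[[\smallint a,\smallint b],\smallint c]$ and $[\smallint b,[\smallint a,\smallint c]]$ reproduces exactly the signs $(-1)^{p(a)+1}$ and $(-1)^{(p(a)+1)(p(b)+1)}$ in the statement.

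The main obstacle is the bookkeeping in this last step: one must argue cleanly that evaluation at $\chi=\gamma=0$ commutes with $\smallint$ and annihilates every positive-power term in the two odd variables. This is precisely where the pull-out formulas for $\gamma$ and $\chi$ and the sesquilinearity-driven vanishing of stray $D$'s are indispensable, and one has to keep careful track of the odd signs generated when commuting $\chi$ and $\gamma$ past elements of $\mathcal{R}$. Once this vanishing is justified, the three axioms on $\mathcal{R}$ transfer to $\mathcal{R}/D\mathcal{R}$ with the claimed signs, completing the proof.
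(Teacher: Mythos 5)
Your proposal is correct and follows exactly the route the paper intends: the paper's proof is the one-line remark ``Immediate from the axioms of a SUSY LCA,'' and what you have written is precisely the detailed unpacking of that remark --- projecting the sesquilinearity, skew-symmetry and Jacobi axioms to $\mathcal{R}/D\mathcal{R}$ by taking the $\chi^0$ (and $\gamma^0$) coefficient and discarding $D$-exact terms. The well-definedness and sign checks are all accurate, so nothing is missing.
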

\begin{proof}
Immediate from the axioms of a SUSY LCA.
\end{proof}
\begin{definition} \label{Def:SPVA}
A tuple $(\mathcal{P}, \{\, _\chi \, \},1, \cdot, D)$ is called a {\it ($N_k=1$) supersymmetric Poisson vertex algebra (SUSY PVA)}  if 
\begin{itemize}
\item $(\mathcal{P}, \{\, _\chi \, \}, D)$ is a SUSY LCA,
\item $(\mathcal{P}, 1, \cdot, D)$ is a unital associative supercommutative superalgebra endowed with an odd derivation $D$. 
\item (right) Leibniz rule: $\{a_\chi bc\}= \{a_\chi b\} c +(-1)^{p(b)p(c)}\{a_\chi c\} b$ for all $a,b,c \in \mathcal{P}.$
\end{itemize}
\end{definition}
\begin{remark}
A SUSY PVA $(\mathcal{P}, \{\, _\chi \, \},1, \cdot, D)$ has a natural PVA structure where  $\partial= D^2$, $\lambda=-\chi^2$ and 
\begin{equation} \label{chi to lambda bracket}
\textstyle  \{a{}_\lambda b\}= \sum_{n \geq 0} (-\lambda)^n a_{(2n+1)}b, \text{ for } a,b\in \mathcal{P}.
\end{equation}
 Indeed, one can prove that \eqref{chi to lambda bracket} satisfies each axiom of a PVA using the corresponding axiom of a SUSY PVA. We refer to \cite{BK,DK} for the $\lambda$-bracket formalism of VAs and PVAs.
\end{remark}

We are going to show that the (right) Leibniz rule and the skew-symmetry axioms imply the left Leibniz rule \eqref{leftleibniz}. In order to do so we begin with an auxiliary Lemma.

\begin{lemma} \label{Lem:LLR_2}
Let $\mathcal{P}$ be a SUSY PVA. Then, for any $a,b,c \in \mathcal{P}$ we have
\[ \{a_{\chi+D}b\}_\to c=(-1)^{p(a)p(b)}\{b_{-\chi-D}a\}c \]
where
$\{a_{\, \chi+D} \, b\}_{\to}:= \sum_{n \geq 0} (-1)^{n\cdot (p(a)+p(b))} a_{(n)}b \, (\chi+D)^{n}.$
\end{lemma}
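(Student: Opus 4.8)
The plan is to prove this as a formal identity of elements of $\mathcal{P}[\chi]$, reading both sides with the convention that the operators $\chi+D$ and $-\chi-D$ act to the right: on the left-hand side $(\chi+D)^n$ is applied to $c$ only, whereas on the right-hand side the element $c$ is placed \emph{inside} the scope of $-\chi-D$, so that $\{b_{-\chi-D}a\}c=\sum_{m\ge 0}(-D-\chi)^m\big((b_{(m)}a)\,c\big)$. With this reading the statement is an instance of skew-symmetry ``fed'' an extra argument $c$, and the engine of the proof is the interplay of three facts: the skew-symmetry axiom $\{a_{\chi}b\}=(-1)^{p(a)p(b)}\{b_{-\chi-D}a\}$, which exchanges the coefficients $a_{(n)}b$ and $b_{(m)}a$; the commutation relation $D\chi=-\chi D-2\chi^2$, which I will use in the condensed form $(\chi+D)^2=D^2-\chi^2$ (hence $(\chi+D)^{2k}=(D^2-\chi^2)^k$ and $(\chi+D)^{2k+1}=(\chi+D)(D^2-\chi^2)^k$); and the super-Leibniz rule for the odd derivation $D$, needed to distribute the $D$'s coming from $(-D-\chi)^m$ across the product $(b_{(m)}a)\,c$.

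First I would record the normalization $\chi^n a_{(n)}b=(-1)^{n(p(a)+p(b))}(a_{(n)}b)\chi^n$, which holds because $n(n+1)$ is even, so that $\{a_{\chi+D}b\}_\to c$ is literally obtained from $\{a_{\chi}b\}=\sum_n\chi^n a_{(n)}b$ by the substitution $\chi^n\mapsto(\chi+D)^n c$. The cleanest way to organize the argument is then to show that the operation ``append $c$ and let all $D$'s act through onto $c$'' is well defined on $\mathcal{P}[\chi]$, i.e.\ that it depends only on the underlying element of $\mathcal{P}[\chi]$ and not on the chosen presentation. Granting this, one applies the operation to both sides of the skew-symmetry identity: the presentation $\sum_n\chi^n a_{(n)}b$ yields $\{a_{\chi+D}b\}_\to c$, while the presentation $(-1)^{p(a)p(b)}\sum_m(-D-\chi)^m b_{(m)}a$ yields $(-1)^{p(a)p(b)}\{b_{-\chi-D}a\}c$, and the lemma follows immediately.

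To establish this well-definedness concretely I would expand both sides as polynomials in $\chi$ with coefficients in $\mathcal{P}$ and match the coefficient of each $\chi^k$. The terms on the right in which \emph{no} derivative lands on $c$ assemble into $\big(\sum_m(-D-\chi)^m b_{(m)}a\big)\,c=(-1)^{p(a)p(b)}\{a_{\chi}b\}\,c$, which, after the prefactor, is exactly $\sum_n\chi^n(a_{(n)}b)c$; this is precisely the pure-$\chi^n$ part of $\sum_n(-1)^{n(p(a)+p(b))}(a_{(n)}b)(\chi+D)^n c$ on the left, so these pieces agree \emph{by skew-symmetry alone}. The remaining terms, in which some $D$'s do land on $c$, must then be matched by the binomial structure of $(\chi+D)^n$ against that of the super-Leibniz expansion of $(-D-\chi)^m\big((b_{(m)}a)c\big)$. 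A convenient check that fixes every sign is the rank-one affine example $\{\bar{e}_{\chi}\bar{f}\}=-\chi$, $\{\bar{f}_{\chi}\bar{e}\}=\chi$ with $a=\bar{e}$, $b=\bar{f}$, $c=\bar{e}$, for which both sides evaluate to $-(\chi+D)\bar{e}$.

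I expect the main obstacle to be purely the bookkeeping: keeping the Koszul signs correct through the super-Leibniz expansion when $\chi$ and $D$ are both odd and obey $D\chi=-\chi D-2\chi^2$, and verifying that the binomial coefficients produced by $(\chi+D)^n$ on the left coincide with those produced by distributing $(-D-\chi)^m$ across $(b_{(m)}a)\,c$ on the right. An alternative that may streamline the combinatorics is an induction on the number of $D$'s acting on $c$, reducing a single power via sesquilinearity; but the combinatorial core is the same, and I would default to the coefficient-matching computation above.
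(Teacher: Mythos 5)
Your reading of the two sides is the correct one and your strategy --- derive the identity from skew-symmetry by appending $c$ and tracking where the $D$'s land --- is the same as the paper's; the difference is in execution. The paper does no coefficient matching at all: it multiplies the skew-symmetry identity on the right by $c$, writing $D_1$ for the derivation that acts only on the bracket and not on $c$, and then substitutes $\chi\mapsto -\chi-D_2$, where $D_2$ acts on everything to its right. Since $D_2-D_1$ acts as $D$ on $c$ alone by the Leibniz rule, the substituted identity is
\[
\sum_{n\ge 0}(-\chi-D)^n\bigl(b_{(n)}a\cdot c\bigr)=(-1)^{p(a)p(b)}\sum_{n\ge 0}(-1)^{n(p(a)+p(b))}\,a_{(n)}b\,(\chi+D)^n c,
\]
which is the lemma. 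This is your ``append $c$'' operation in disguise, but packaged so that the binomial bookkeeping you anticipate never has to be carried out. The one genuine soft spot in your write-up is the well-definedness claim: the operation ``append $c$ and let all $D$'s act through onto $c$'' is \emph{not} presentation-independent as literally stated (the presentations $Dp$ and $p'$ of the same element of $\mathcal{P}$ give $D(pc)\neq p'c$ if the $D$ is allowed to act on the product). What is actually invariant is the substitution $\chi\mapsto\chi+D_c$ with $D_c$ acting on $c$ only while the $D$'s of the presentation keep acting where they did --- equivalently the paper's $\chi\mapsto-\chi-D_2$ applied after $c$ has been appended with $D_1$. With that precise formulation your evaluation of the operation on the two presentations is correct and the lemma follows; without it, the ``well-definedness'' step is exactly the gap that your coefficient-matching fallback would have to fill, and filling it term by term amounts to re-deriving the displayed substitution identity, with skew-symmetry needed not only for the pure-$\chi$ terms but also (through the relation between $b_{(m)}a$ and $D^k(a_{(m+k)}b)$) for all the cross terms where derivatives do reach $c$.
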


\begin{proof}
By skewsymmetry,  we have for all $a,b,c \in \mathcal{P}$
\begin{equation} \label{sseq}
\sum_{n \geq 0} \chi^n b_{(n)}a . c=(-1)^{p(a)p(b)} \sum_{n \geq 0} (-\chi-D_1)^n a_{(n)}b. c \, .
\end{equation}
We write $D_1$ to emphasize that it only acts as $D$ on $a_{(n)}b$, not on $c$. We can replace $\chi$ by $-\chi-D_2$ in \eqref{sseq}
\begin{equation} \label{ssseq}
(-\chi-D_2)^n b_{(n)}a . c=(-1)^{p(a)p(b)}(\chi+D_2-D_1)^n a_{(n)}b c
\end{equation}
where $D_2$ acts as $D$ on everything on its right in \eqref{ssseq}. Therefore we have
\begin{equation} \label{seq}
\sum_{n \geq 0} (-\chi-D)^n(b_{(n)}a. \,c)=(-1)^{p(a)p(b)+np(ab)}a_{(n)}b (\chi+D)^n c,
\end{equation} 
proving the claim.
\end{proof}

We are now ready to prove that a SUSY PVA satisfies the left Leibniz rule.
\begin{lemma}\label{Lem:left leibniz}
Let $\mathcal{P}$ be a SUSY PVA. Then, for any $a,b,c \in \mathcal{P}$ we have
\begin{equation} \label{leftleibniz}
 \{ab_{\, \chi }\, c\}= (-1)^{p(b)p(c)} \{a_{\, \chi+D} \, c\}_\to b+(-1)^{p(a)p(bc)}\{b_{\, \chi+D} \, c\}_\to a.
 \end{equation}
\end{lemma}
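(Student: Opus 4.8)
The strategy is to derive the left Leibniz rule from the right Leibniz rule via skew-symmetry, using the auxiliary identity of Lemma 1.10 to control the arrow-placement of the derivation $D$. The key observation is that $\{ab\,{}_\chi c\}$ can be computed by first applying skew-symmetry to move $ab$ to the right argument, then using the already-known right Leibniz rule on the product $ab$, and finally applying skew-symmetry again to each resulting term.

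First I would write, using skew-symmetry,
\[
\{ab\,{}_\chi c\} = (-1)^{p(ab)p(c)}\{c_{-\chi-D}\,ab\}.
\]
Then I would expand $\{c_{-\chi-D}\,ab\}$ by the right Leibniz rule, which gives two terms, one with $\{c_{-\chi-D}\,a\}\,b$ and one with $\{c_{-\chi-D}\,b\}\,a$, each carrying the appropriate Koszul sign coming from moving $c$ past the factors of the product. The delicate point is that the substitution $\chi \mapsto -\chi-D$ in the right Leibniz rule must be interpreted correctly: the operator $-\chi-D$ acts on the full product $ab$, so when the Leibniz rule splits the product, the derivation $D$ must still be understood as acting to the right on the trailing factor, which is exactly the situation that the arrow notation $\{-_{\chi+D}-\}_\to$ in Lemma 1.10 is designed to encode.

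Next I would re-apply skew-symmetry to each of the two terms $\{c_{-\chi-D}\,a\}$ and $\{c_{-\chi-D}\,b\}$ to convert them back into brackets of the form $\{a_\chi c\}$ and $\{b_\chi c\}$. Here is where Lemma 1.10 does the real work: the identity
\[
\{a_{\chi+D}b\}_\to c = (-1)^{p(a)p(b)}\{b_{-\chi-D}a\}\,c
\]
lets me rewrite each skew-symmetrized bracket, with its trailing factor $b$ (respectively $a$), into the arrow form $\{a_{\chi+D}c\}_\to b$ (respectively $\{b_{\chi+D}c\}_\to a$), producing precisely the two terms on the right-hand side of \eqref{leftleibniz}. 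The bulk of the remaining labor is bookkeeping: carefully tracking the Koszul signs $(-1)^{p(a)p(b)}$, $(-1)^{p(b)p(c)}$, $(-1)^{p(a)p(bc)}$ that accumulate through the two applications of skew-symmetry and the one application of the right Leibniz rule, together with the sign $(-1)^{n\,p(ab)}$ hidden in the definition of the arrow bracket.

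The main obstacle will be exactly this sign and derivation-placement accounting. The odd indeterminate $\chi$, the odd derivation $D$, and the relation $D\chi + \chi D = -2\chi^2$ mean that one cannot freely commute $\chi$ and $D$ past each other or past odd elements, so each transposition must be justified and its sign recorded. The cleanest way to manage this is to keep everything in the explicit power-series form $\sum_{n\geq 0}$ used in the proof of Lemma 1.10, verify the identity coefficient-by-coefficient in powers of $\chi$, and only at the end repackage the result into the arrow notation. Once the $n=0,1$ cases confirm the sign conventions, the general $n$ follows by the same pattern, and the two terms assemble into \eqref{leftleibniz}.
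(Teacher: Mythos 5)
Your plan is correct and follows exactly the paper's own proof: apply skew-symmetry to move $ab$ to the right slot, expand with the right Leibniz rule, and then invoke Lemma \ref{Lem:LLR_2} to convert each term $\{c_{-\chi-D}a\}b$ into the arrow form $\{a_{\chi+D}c\}_\to b$. No substantive difference from the paper's argument.
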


\begin{proof}
By sesquilinearity, Lemma \ref{Lem:LLR_2} and the right Leibniz rule,
\begin{equation*}
\begin{aligned}
 \{ab_{\, \chi }\, c\}& =(-1)^{p(c)p(ab)}\{c_{\, -\chi-D} \, ab\} \\
 & =  (-1)^{p(a)p(c)+p(b)p(c)}  \{c_{-\chi-D}a\}b+(-1)^{p(a)p(bc)+p(b)p(c)} \{c_{-\chi-D}b\}a\\
 &=  (-1)^{p(b)p(c)} \{a_{\, \chi+D} \, b\}_\to c + (-1)^{p(a)p(bc)}\{b_{\, \chi+D} \, c\}_\to a.
 \end{aligned}
 \end{equation*}
\end{proof}

\begin{definition} \label{defXf}
Let $\mathcal{P}$ be a SUSY PVA. It follows immediately from the sesquilinearity axiom and the Leibniz rule that for all $ a \in \mathcal{P}$, the map $b \rightarrow \{  a _\chi b \}|_{ \chi=0}$ is a derivation of $\mathcal{P}$ of parity $p(a)+1$ which supercommutes with $D$. We denote this derivation by $X_{\smallint a}$ since it only depends on the class $\smallint a$ by sesquilinearity.
\end{definition}

The main corollary of the following Lemma is that a pair of functionals in involution $\{ \smallint a , \smallint b \}=0$ defines a pair of supercommuting derivations of the superalgebra $\mathcal{P}$.
\begin{lemma} \label{eqcom}
For all $ a, b \in \mathcal{P}$, we have
\begin{equation} \label{eqncom}
[X_{\smallint a}, X_{\smallint b}]=(-1)^{p(a)+1}X_{ \{ \smallint a, \smallint b\}}.
\end{equation}
\end{lemma}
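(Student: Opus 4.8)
The plan is to prove the commutator identity \eqref{eqncom} by evaluating both sides on an arbitrary element $c \in \mathcal{P}$ and invoking the Jacobi identity for the $\chi$-bracket, then specializing the free indeterminates to zero. First I would unwind the definitions: by Definition \ref{defXf}, $X_{\smallint a}(c) = \{a{}_\chi c\}|_{\chi=0}$ and $X_{\smallint b}(c) = \{b{}_\gamma c\}|_{\gamma=0}$. Hence the left-hand side applied to $c$ is
\[
[X_{\smallint a}, X_{\smallint b}](c) = \{a{}_\chi \{b{}_\gamma c\}\}|_{\chi=\gamma=0} - (-1)^{(p(a)+1)(p(b)+1)}\{b{}_\gamma \{a{}_\chi c\}\}|_{\chi=\gamma=0},
\]
where the sign is the Koszul sign coming from the fact that $X_{\smallint a}$ and $X_{\smallint b}$ have parities $p(a)+1$ and $p(b)+1$ respectively, so $[X_{\smallint a},X_{\smallint b}] = X_{\smallint a}X_{\smallint b} - (-1)^{(p(a)+1)(p(b)+1)}X_{\smallint b}X_{\smallint a}$.

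Next I would apply the Jacobi identity from Definition \ref{Def:SLCA} to the first term, which reads
\[
\{a{}_\chi\{b{}_\gamma c\}\} = (-1)^{p(a)+1}\{\{a{}_\chi b\}{}_{\chi+\gamma} c\} + (-1)^{(p(a)+1)(p(b)+1)}\{b{}_\gamma\{a{}_\chi c\}\}.
\]
Substituting this into the expression above causes the double-bracket term $(-1)^{(p(a)+1)(p(b)+1)}\{b{}_\gamma\{a{}_\chi c\}\}$ to cancel exactly against the second summand of the commutator, leaving
\[
[X_{\smallint a}, X_{\smallint b}](c) = (-1)^{p(a)+1}\{\{a{}_\chi b\}{}_{\chi+\gamma} c\}\big|_{\chi=\gamma=0}.
\]
Setting $\chi=\gamma=0$ inside the remaining term collapses $\{a{}_\chi b\}|_{\chi=0}$ to the $n=0$ term $a_{(0)}b$, and by definition of the induced bracket \eqref{indbra} this is the representative whose class is $\{\smallint a,\smallint b\}$; thus the right-hand side becomes $(-1)^{p(a)+1}\{(a_{(0)}b){}_{\chi+\gamma}c\}|_{\chi=\gamma=0} = (-1)^{p(a)+1}X_{\{\smallint a,\smallint b\}}(c)$, which is precisely the claim.

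The one genuinely delicate point — the place I expect to be the main obstacle — is the careful bookkeeping around the specialization $\chi=\gamma=0$ in the outer and inner brackets and making sure the Koszul signs match up. Because $\chi$ and $\gamma$ are odd and supercommute, and because the auxiliary formulas in Definition \ref{Def:SLCA} govern how $\chi$ and $\gamma$ move past the bracket, one must check that setting the indeterminates to zero commutes with the manipulations and does not spuriously annihilate terms; in particular one should verify that $\{\{a{}_\chi b\}{}_{\chi+\gamma}c\}|_{\chi=\gamma=0}$ really extracts the $n=0$ component $\{(a_{(0)}b){}_0 c\}$ rather than a combination involving $D$-shifted coefficients. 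I would confirm this by noting that the outer specialization $\chi+\gamma=0$ kills all positive powers of the combined indeterminate, while the inner $\chi=0$ selects $a_{(0)}b$; since the definition of $X_{\smallint c}$ only uses the value at the indeterminate equal to zero, and since $X$ depends only on the functional class (so the choice of representative $a_{(0)}b$ for $\{\smallint a,\smallint b\}$ is immaterial), the identity follows cleanly once the signs are tracked.
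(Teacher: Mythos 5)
Your proposal is correct and follows exactly the route the paper intends: the paper's proof is the one-line remark that the identity is immediate from the Jacobi identity and the definition \eqref{indbra} of the induced bracket, and your argument simply fills in those details (the cancellation of the $\{b{}_\gamma\{a{}_\chi c\}\}$ term against the second summand of the supercommutator, and the verification that $\{\{a{}_\chi b\}{}_{\chi+\gamma}c\}|_{\chi=\gamma=0}=(a_{(0)}b)_{(0)}c$ with the class $\smallint a_{(0)}b=\{\smallint a,\smallint b\}$). The sign bookkeeping and the specialization argument are both handled correctly.
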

\begin{proof}
It follows immediately from the Jacobi identity and the definition \eqref{indbra} of the induced bracket on $\mathcal{P}/D \mathcal{P}$.
\end{proof}

From now on, $V$ denotes a vector superspace with basis $\mathcal{B}=\{u_i, i \in I\}$. We assume that $I \subset \Z$ and that the parity of $u_i$ is the same as the parity of the integer $i$. Let $\mathcal{R}=\mathbb{C}[D] \otimes V$ be the $\mathbb{C}[D]$-module freely generated by $V$ and $\mathcal{P}=S(\mathcal{R})$ the superalgebra of super differential polynomials in the $u_i$'s. As we will see in \eqref{Eqn:Master formula2}, a SUSY PVA structure on $\mathcal{P}$ is completely determined by the $\chi$-brackets $\{ u_i {}_\chi u_j \}$ between pairs of basis elements $(u_i,u_j), i, j \in I$.

\begin{definition} \label{Def:pdev}
The partial derivative with respect to the variable $u_i^{(m)}:=D^m(u_i)$ is the unique derivation $\frac{\partial}{\partial u_i^{(m)}}$ of $\mathcal{P}$ of parity $i+m$ such that $\frac{\partial u_j^{(n)}}{\partial u_i^{(m)}}=\delta_{i,j} \delta_{m,n}$ for all $i \in I, n \geq 0$.
We will simply denote $\frac{\partial a}{\partial u_i^{(m)}}$ by $a_{(i,m)}$.
\end{definition}

\begin{lemma} \label{Lemma: commutator_derivatives}
For all $i \in I$ and $m \geq 0$ we have
\begin{equation} \label{Eqn: commutator_derivatives}
 \bigg[ \frac{\partial}{\partial u_i^{(m)}}, D \bigg] = \frac{\partial}{\partial u_i^{(m-1)}}, 
\end{equation}
where by definition $ \frac{\partial}{\partial u_i^{(-1)}}=0$.
\end{lemma}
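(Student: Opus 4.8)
The plan is to exploit the fact that both operators in the bracket are superderivations of $\mathcal{P} = S(\mathcal{R})$, so that their supercommutator is again a superderivation, and two superderivations of a free supercommutative superalgebra agree as soon as they agree on a generating set. First I would fix the convention that the bracket denotes the supercommutator, namely $[\frac{\partial}{\partial u_i^{(m)}}, D] := \frac{\partial}{\partial u_i^{(m)}}D - (-1)^{i+m} D\,\frac{\partial}{\partial u_i^{(m)}}$, where $\frac{\partial}{\partial u_i^{(m)}}$ has parity $i+m$ by Definition \ref{Def:pdev} and $D$ is odd. This makes the left-hand side a superderivation of parity $i+m+1 \equiv i+m-1 \pmod 2$, which is exactly the parity of $\frac{\partial}{\partial u_i^{(m-1)}}$, so the two sides are at least compatible in parity.

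Next I would reduce the identity to a check on the generators $u_j^{(n)} = D^n(u_j)$, $j \in I$, $n \geq 0$, since a superderivation of $\mathcal{P}=S(\mathcal{R})$ is determined by its values on these. Applying the left-hand side to such a generator gives
\[
\Big[\frac{\partial}{\partial u_i^{(m)}}, D\Big](u_j^{(n)}) = \frac{\partial}{\partial u_i^{(m)}}\big(u_j^{(n+1)}\big) - (-1)^{i+m} D\Big(\frac{\partial u_j^{(n)}}{\partial u_i^{(m)}}\Big).
\]
The key observation is that $\frac{\partial u_j^{(n)}}{\partial u_i^{(m)}} = \delta_{i,j}\delta_{m,n}$ is a scalar, hence annihilated by the derivation $D$, so the second term vanishes irrespective of the sign; the first term equals $\delta_{i,j}\delta_{m,n+1}$. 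On the other hand, by Definition \ref{Def:pdev}, $\frac{\partial u_j^{(n)}}{\partial u_i^{(m-1)}} = \delta_{i,j}\delta_{m-1,n} = \delta_{i,j}\delta_{m,n+1}$. Thus the two superderivations take the same value on every generator and therefore coincide on all of $\mathcal{P}$.

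Finally I would note that the boundary case $m=0$ is consistent with the convention $\frac{\partial}{\partial u_i^{(-1)}} = 0$: when $m=0$ the value $\delta_{i,j}\delta_{0,n+1}$ vanishes for all $n \geq 0$, matching the zero operator. The only point requiring any care — the main, though mild, obstacle — is tracking the sign in the supercommutator and confirming it is immaterial here, precisely because $D$ kills the constant $\delta_{i,j}\delta_{m,n}$; once this is in place, the statement follows from the standard principle that derivations agreeing on generators are equal.
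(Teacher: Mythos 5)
Your proof is correct and follows essentially the same route as the paper: both verify the identity on the generators $u_j^{(n)}$ (where the supercommutator reduces to $\delta_{i,j}\delta_{n,m-1}$ because $D$ annihilates the scalar $\frac{\partial u_j^{(n)}}{\partial u_i^{(m)}}$) and then invoke the uniqueness of a superderivation of the prescribed parity with given values on generators. Your explicit tracking of the sign in the supercommutator and of the parity match is a welcome, if minor, elaboration of what the paper leaves implicit.
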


\begin{proof}
One can easily check that, for all $j \in I$ and $n \geq 0$ 
\begin{equation} \label{eqqqq}
\textstyle [ \frac{\partial}{\partial u_i^{(m)}}, D ](u_j^{(n)}) = \frac{\partial}{\partial u_i^{(m-1)}}(u_j^{(n)})=\delta_{i,j} \delta_{n, m-1}.
\end{equation} 
Therefore equation \eqref{Eqn: commutator_derivatives} holds since $\frac{\partial}{\partial u_i^{(m-1)}} $ is by definition the unique derivation of parity $i+m-1$ satisfying \eqref{eqqqq}.
\end{proof}

\begin{proposition} \label{Prop:Master formula}
Suppose that $\mathcal{P}$ is a SUSY PVA with bracket $\{ \, _\chi \, \}$. Then 
\begin{equation} \label{Eqn:Master formula2}
\begin{aligned}
 \{ a_\chi b\} = \sum_{i,j\in I; m,n \geq 0} &  (-1)^{S(a_{(i,m)},b_{(j,n)})} (-1)^{n(i+m+1)+m(i+j+1)+ \frac{(m-1)m}{2}} \\
 & b_{(j,n)}(\chi+D)^n\{u_i { }_{\, \chi+D} \ u_j \}_{\to}(\chi+D)^m a_{(i,m)}.
 \end{aligned}
  \end{equation}
for all $a,b \in \mathcal{P}$, where 
 $S(a_{(i,m)},b_{(j,m)}):=p(b)(p(a)+1)+(p(b)+i+m)(j+n).$
\end{proposition}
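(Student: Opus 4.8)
The plan is to establish \eqref{Eqn:Master formula2} by bilinearity and a nested induction, using sesquilinearity for the base case and the two Leibniz rules for the inductive steps. Denote by $\Phi(a,b)$ the right-hand side of \eqref{Eqn:Master formula2}. Since both $\{a_\chi b\}$ and $\Phi(a,b)$ are $\C$-linear in each of $a$ and $b$, and $\Phi(a,b)$ is built from the super-derivations $a_{(i,m)}=\frac{\partial a}{\partial u_i^{(m)}}$ and $b_{(j,n)}=\frac{\partial b}{\partial u_j^{(n)}}$, it suffices to prove $\{a_\chi b\}=\Phi(a,b)$ when $a$ and $b$ are monomials in the variables $u_i^{(m)}$. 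I would then induct on the number of generator-factors of $a$ (outer induction) and, for each fixed $a$, on the number of generator-factors of $b$ (inner induction).

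\emph{Base case.} When $a=u_i^{(m)}=D^m u_i$ and $b=u_j^{(n)}=D^n u_j$, the only surviving term of $\Phi$ has $a_{(i,m)}=b_{(j,n)}=1$. I would compute the left-hand side by peeling off the derivatives with sesquilinearity: applying $\{Da_\chi b\}=\chi\{a_\chi b\}$ exactly $m$ times and $\{a_\chi Db\}=(-1)^{p(a)+1}(D+\chi)\{a_\chi b\}$ exactly $n$ times gives
\[ \{u_i^{(m)}{}_\chi u_j^{(n)}\}=(-1)^{n(i+1)}\,\chi^m(\chi+D)^n\{u_i{}_\chi u_j\}. \]
Reconciling this with $\Phi(u_i^{(m)},u_j^{(n)})$ amounts to normal-ordering the odd symbols $\chi$ and the odd derivation $D$ through the relation $D\chi+\chi D=-2\chi^2$ and converting the plain bracket $\{u_i{}_\chi u_j\}$ into the arrow bracket $\{u_i{}_{\chi+D}u_j\}_\to$. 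The linear exponents $n(i+m+1)+m(i+j+1)$ together with $(-1)^{S}$ record the Koszul signs produced by moving the odd symbols past the homogeneous coefficients $(u_i)_{(k)}u_j$ (which have parity $i+j+k+1$), while the quadratic term $\frac{(m-1)m}{2}$ counts the $\binom{m}{2}$ sign changes forced by reordering the $m$ odd factors $\chi$ relative to the powers of $D$.

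\emph{Inductive steps.} For the inner induction I would factor $b=b_1b_2$ and expand the left-hand side by the right Leibniz rule $\{a_\chi b_1b_2\}=\{a_\chi b_1\}b_2+(-1)^{p(b_1)p(b_2)}\{a_\chi b_2\}b_1$; on the other side the partial derivative, being a super-derivation of parity $j+n$, splits as $b_{(j,n)}=\frac{\partial b_1}{\partial u_j^{(n)}}\,b_2+(-1)^{(j+n)p(b_1)}b_1\,\frac{\partial b_2}{\partial u_j^{(n)}}$, and since $b_{(j,n)}$ occupies the leftmost position in $\Phi$ the two expansions match term by term under the inductive hypothesis. For the outer induction I would factor $a=a_1a_2$ and apply the left Leibniz rule \eqref{leftleibniz}, whose right-hand side is expressed through the arrow brackets $\{a_k{}_{\chi+D}\,\cdot\}_\to$; this is precisely the shape of the factor $\{u_i{}_{\chi+D}u_j\}_\to(\chi+D)^m a_{(i,m)}$ appearing in $\Phi$, so that after expanding $a_{(i,m)}=\frac{\partial(a_1a_2)}{\partial u_i^{(m)}}$ by the super-Leibniz rule and invoking Lemma \ref{Lem:LLR_2} to transport the $(\chi+D)$-powers to the right, the inductive hypotheses for $(a_1,b)$ and $(a_2,b)$ close the argument.

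I expect the main obstacle to be the consistent bookkeeping of the Koszul signs, concentrated in the exponent $S(a_{(i,m)},b_{(j,n)})$ together with $n(i+m+1)+m(i+j+1)+\frac{(m-1)m}{2}$. The delicate points are the interaction of the odd indeterminate $\chi$ with the odd derivation $D$ governed by $D\chi+\chi D=-2\chi^2$ (which is what produces the quadratic sign), the parity $p(a_{(n)}b)=p(a)+p(b)+n+1$ of the bracket coefficients that one must commute past the $\chi$'s, and the compatibility of the two Leibniz expansions with the fixed left/right placement of the partial derivatives $b_{(j,n)}$ and $a_{(i,m)}$ in \eqref{Eqn:Master formula2}. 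Once these signs are verified on generators and shown to be preserved by each Leibniz step, the nested induction yields the formula for all $a,b\in\mathcal{P}$.
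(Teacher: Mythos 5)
Your proposal is correct and follows essentially the same route as the paper, whose proof consists of the single remark that the formula ``directly follows from the Leibniz rule, equation \eqref{leftleibniz} and the sesquilinearity''; your nested induction on monomials, with sesquilinearity handling the generators and the right and left Leibniz rules handling the two inductive steps, is exactly the intended expansion of that remark. The sign bookkeeping you flag (in particular the $\tfrac{(m-1)m}{2}$ term coming from $D\chi+\chi D=-2\chi^2$) is the only genuinely delicate part, and you have identified its source correctly.
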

\begin{proof}
The proposition directly follows from the Leibniz rule, equation \eqref{leftleibniz} and the sesquilinearity of a SUSY PVA.
\end{proof}

The equation \eqref{Eqn:Master formula2} is the supersymmetric analog of the master formula (\cite{BDK}). We will refer to it as the \textit{super master formula}.

\begin{remark}
If $\mathcal{R}$ is a SUSY LCA, the super master formula \eqref{Eqn:Master formula2} is equivalent by sesquilinearity to
\begin{equation}\label{Eqn:Master formula}
 \{ a_\chi b\} = \sum_{i,j\in I; m,n \geq 0} (-1)^{S(a_{(i,m)},b_{(j,n)})}  b_{(j,n)}\{u_i^{(m)}{ }_{\chi+D} \ u_j^{(n)} \}_{\to}a_{(i,m)}.\end{equation}
\end{remark}

\begin{theorem} \label{Thm:master}
\begin{enumerate}
\item Suppose that $(\mathcal{R}, \{ \, _\chi \, \},D)$ is a SUSY LCA. Then the extension of the $\chi$-bracket to $\mathcal{P}$ via \eqref{Eqn:Master formula2}  endows $\mathcal{P}$ with a SUSY PVA structure.
\item The $\chi$-bracket defined by the super master formula \eqref{Eqn:Master formula2} endows $\mathcal{P}$ with a SUSY PVA structure if and only if the skew-symmetry axiom and the Jacobi identity are satisfied for the elements in the basis $\mathcal{B}$ of $\mathcal{R}$.
\end{enumerate}
\end{theorem}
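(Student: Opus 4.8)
The plan is to exploit the fact that the super master formula \eqref{Eqn:Master formula2} has been \emph{defined} precisely so as to encode sesquilinearity and the Leibniz rules, so that the only axioms left to check are skew-symmetry and the Jacobi identity, and then to reduce these two from all of $\mathcal{P}$ to the generating set $\mathcal{B}$ by induction. I would prove part (2) first and deduce part (1) as a corollary, since a SUSY LCA structure on $\mathcal{R}$ is exactly a choice of generator brackets for which skew-symmetry and Jacobi hold on $\mathcal{B}$.

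First I would check that, for \emph{any} choice of brackets $\{u_i {}_\chi u_j\}$, the bracket produced on $\mathcal{P}$ by \eqref{Eqn:Master formula2} automatically satisfies three structural properties: sesquilinearity, the right Leibniz rule, and the left Leibniz rule \eqref{leftleibniz}. Sesquilinearity follows from the factors $(\chi+D)^m$ in the formula together with the commutation relation of Lemma \ref{Lemma: commutator_derivatives}; the right Leibniz rule is immediate because the second argument enters \eqref{Eqn:Master formula2} only through the partial derivatives $b_{(j,n)}$, which are derivations of $\mathcal{P}$. The delicate point is the left Leibniz rule: Lemma \ref{Lem:left leibniz} derived \eqref{leftleibniz} \emph{from} skew-symmetry, which is not yet available, so I cannot invoke it here on pain of circularity. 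Instead I would verify \eqref{leftleibniz} \emph{directly} from the explicit formula \eqref{Eqn:Master formula}, by expanding $(a_1a_2)_{(i,m)}$ with the product rule for $\partial/\partial u_i^{(m)}$ and distributing the operators $(\chi+D)^k$ over the resulting products. This is a purely formal computation, independent of skew-symmetry, and is the analogue of the existence part of the non-super master formula of \cite{BDK}. All three verifications reduce to bookkeeping of Koszul signs.

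With sesquilinearity and both Leibniz rules established, I would reduce skew-symmetry to $\mathcal{B}$ by induction on the total degree of the pair. Setting $SS(a,b):=\{a_\chi b\}-(-1)^{p(a)p(b)}\{b_{-\chi-D}a\}$, stability under replacing an argument by its $D$-image follows from sesquilinearity, so the base case $a=u_i^{(m)},\,b=u_j^{(n)}$ reduces to the hypothesis $SS=0$ on $\mathcal{B}$. For the product step I would expand $\{(ab)_\chi c\}$ using the directly established \eqref{leftleibniz} and expand $\{c_{-\chi-D}(ab)\}$ using the right Leibniz rule; reconciling the two expansions via the content of Lemma \ref{Lem:LLR_2} applied to the lower-degree pairs $(a,c)$ and $(b,c)$ — for which skew-symmetry already holds by induction — exhibits $SS(ab,c)$ as a combination of $SS(a,c)$ and $SS(b,c)$, hence zero, and the symmetric argument handles products in the second slot. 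This gives $SS\equiv 0$ on $\mathcal{P}$; it is essential that \eqref{leftleibniz} was obtained directly rather than through Lemma \ref{Lem:left leibniz}. Once skew-symmetry holds everywhere, the left Leibniz rule and Lemma \ref{Lem:LLR_2} are valid for arbitrary elements, and I would run the parallel induction for the Jacobi defect $J(a,b,c)$: sesquilinearity gives stability under $D$ and the two Leibniz rules give stability under products in each of the three slots, so $J\equiv 0$ on $\mathcal{P}$ reduces to its vanishing on triples from $\mathcal{B}$, which is the hypothesis. This proves the ``if'' direction of part (2), and the ``only if'' direction is immediate since $\mathcal{B}\subset\mathcal{P}$.

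Part (1) then follows: a SUSY LCA structure on $\mathcal{R}$ satisfies skew-symmetry and Jacobi on all of $\mathcal{R}$, in particular on $\mathcal{B}$, so part (2) makes the extension a SUSY PVA; a short sign check shows that on $a=u_i,\,b=u_j$ only the $m=n=0$ term of \eqref{Eqn:Master formula2} survives and reproduces $\{u_i{}_\chi u_j\}$, while sesquilinearity handles the general pair $D^m u_i, D^n u_j$, so the extension genuinely restricts to the original bracket. I expect the main obstacle to be twofold. First, the circularity between skew-symmetry and the left Leibniz rule: because Lemma \ref{Lem:left leibniz} presupposes skew-symmetry, \eqref{leftleibniz} must be proved independently from the explicit formula, and only then may the skew-symmetry induction use it. Second, and more laborious, is the sustained tracking of Koszul signs, which is heaviest in the closure computation for $J(a,b,c)$ under products, where the two odd indeterminates $\chi,\gamma$ and all three arguments must be commuted past one another.
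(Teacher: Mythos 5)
Your proposal is correct and follows essentially the same route as the paper: both rest on the observation that \eqref{Eqn:Master formula2} builds sesquilinearity and the two Leibniz rules directly into the formula, and then propagate skew-symmetry and the Jacobi identity from the generators to all of $\mathcal{P}$ by induction on polynomial degree (closure under $D$ via sesquilinearity, closure under products via the Leibniz rules, and the same three-step escalation for the Jacobi identity). The only differences are organizational --- you prove (2) first and deduce (1), whereas the paper proves (1) and reduces (2) to it by pushing the axioms from $\mathcal{B}$ to the elements $u_i^{(l)}$ --- together with your welcome explicit remark that the left Leibniz rule must be read off directly from the formula rather than through Lemma \ref{Lem:left leibniz}, a circularity the paper's phrase ``directly implies both the left and the right Leibniz rules'' passes over without comment.
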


\begin{proof} (1) We have to show that  \eqref{Eqn:Master formula2} implies the sesquilinearity, skew-symmetry, Jacobi identity and Leibniz rule axioms. Since  \eqref{Eqn:Master formula2} directly implies both the left and the right Leibniz rules, it remains to check sesquilinearity, skew-symmetry and the Jacobi identity.

We begin with the sesquilinearity. 
Denote $a'=Da$ for $a\in \mathcal{P}$. 
Let us check that if $a, b \in \mathcal{P}$ are such that $\{a'_\Lambda c\}= \chi\{a_\Lambda c\}$ and  $\{b'_\chi c\}= \chi\{b_\chi c\}$ for all $c \in \mathcal{P}$, then $\{(ab)'_\chi c\}= \chi\{ab_\chi c\}$ for all $c \in \mathcal{P}$: 
\begin{equation*}
\begin{aligned}
\{(ab)'_\chi c\} & =  \{a'b_\chi c\} +(-1)^{p(a)}\{ab'_\chi c\}  \\
 & =  (-1)^{p(b)p(c)+p(a)+p(c)+1}\{a_{\chi+D} c\}_\to (\chi+D)b  \\
& + (-1)^{(p(a)+1)(p(c)+p(b))}\{b_{\chi+D}c\}_\to a'\\
&+  (-1)^{p(a)+p(b)p(c)+p(c)}\{a_{\chi+D} c\}_\to b'\\
& +(-1)^{ (p(a)+1)(p(b)+p(c))+1} \{b_{\chi+D}c\}_\to (D+\chi) a\\
& = (-1)^{p(b)p(c)}\chi\{a_{\chi+D} c\}_\to b+ (-1)^{ (p(a)(p(b)+p(c))} \chi\{b_{\chi+D}c\}_\to  a
& = \chi\{ab_\chi c\}.
\end{aligned}
\end{equation*}
We conclude that the left sesquilinearity holds by induction, since it holds on the LCA $\mathcal{R}$. Similarly using induction and the right Leibniz rule one shows that the right sesquilinearity holds.  
\\
\indent We can also check the skew-symmetry by induction on the polynomial degree. Indeed, it is clear by sesquilinearity that if 
$\{a _\chi b \}=(-1)^{p(a)p(b)}\{b_{-\chi-D} a\}$ for $a, b \in \mathcal{P}$, then $\{a _\chi b' \}=(-1)^{p(a)(p(b)+1)}\{b'_{-\chi-D} a\}$. Moreover, if $a,b,c \in \mathcal{P}$ are such that $\{a _\chi b \}=(-1)^{p(a)p(b)}\{b_{-\chi-D} a\}$ and $\{a _\chi c \}=(-1)^{p(a)p(c)}\{c_{-\chi-D} a\}$, we conclude from the proof of Lemma \ref{Lem:LLR_2} that $\{a _\chi bc \}=(-1)^{p(a)p(bc)}\{bc_{-\chi-D} a\}$ since left and right Leibniz rules hold.
\\
\indent 
To check the Jacobi identity in Definition \ref{Def:SLCA}, we proceed in three steps. First, we let $a= u_i^{(l)}, b=u_j^{(m)} \in \mathcal{R}$ and $c$ be an arbitrary element in $\mathcal{P}$.  Then 
\begin{equation}\label{Eqn:master_Jacobi_1}
\begin{aligned}
\textstyle \{u_i^{(l)}{}_{\chi} \{ u_j^{(m)}{}_\gamma c\}\} & = \textstyle \sum_{k\in I, n \geq 0} S_1\{u_i^{(l)}{}_\chi c_{(k,n)}\{u_j^{(m)}{}_\gamma u_k^{(n)}\}\}\\
& = \textstyle \sum_{k,p\in I, n,q\geq 0} S_2 \,  c_{(k,n) (p,q)} \{ u_i^{(l)}{}_\chi u_p^{(q)}\} \{u_j^{(m)}{}_\gamma u_k^{(n)}\}\\
&\quad \textstyle  +   \sum_{k\in I, n\geq 0} S_3 \,  c_{(k,n)} \{u_i^{(l)}{}_\chi \{ u_j^{(m)}{}_\gamma u_k^{(n)}\}\},
\end{aligned}
\end{equation}
where $c_{(k,n)(p,q)}=(c_{(k,n)})_{(p,q)}$ and 
\begin{equation}
\begin{aligned}
& S_1= (-1)^{p(c_{(k,n)})(p(u_k^{(n)}u_j^{(m)})+1)}, \\
& S_2=S_1 \, (-1)^{p(c_{(k,n)(p,q)})(p(u_p^{(q)}u_i^{(l)})+1)},\\
& S_3=S_1 \, (-1)^{p(c_{(k,n)})(p(u_i^{(l)})+1)}.
\end{aligned}
\end{equation}
On the other hand, 
\begin{equation}\label{Eqn:master_Jacobi_2}
\{\{u_i^{(l)}{}_\chi u_j^{(m)}\}{}_{\chi+\gamma} c\}= \textstyle \sum_{k\in I, n\geq 0} S_3 \, c_{(k,n)} \{\{ u_i^{(l)}{}_\chi u_j^{(m)}\}{}_{\chi+\gamma} u_k^{(n)}\}
\end{equation}
and 
\begin{equation}\label{Eqn:master_Jacobi_3}
\begin{aligned}
\{u_j^{(m)}{}_\gamma\{u_i^{(l)}{}_\chi  c \}\} &=  \textstyle \sum_{k,p\in I, n,q\geq0}S_4 \,   c_{(k,n)(p,q)} \{u_j^{(m)}{}_\gamma u_p^{(q)}\}\{u_i^{(l)}{}_\chi u_k^{(n)}\} \\
& \textstyle + \sum_{k\in I, n\geq 0}S_3 \, c_{(k,n)} \{u_j^{(m)}{}_\gamma \{u_i^{(l)}{}_\chi u_k^{(n)}\}\},
\end{aligned}
\end{equation}
where $S_4= (-1)^{p(c_{(k,n)})(p(u_i^{(l)}u_k^{(n)})+1)+p(c_{(k,n)(p,q)})(p(u_j^{(m)}u_p^{(q)})+1)}.$ Since $\mathcal{R}$ is a SUSY LCA,  the Jacobi identity holds on the triple $(u_i^{(l)}, u_j^{(m)}, u_k^{(n)})$ 
\begin{equation}\label{Eqn:master_Jacobi_4}
\begin{aligned}
  \{u_i^{(l)}{}_{\chi} \{ u_j^{(m)}{}_\gamma u_k^{(n)}\}\} & = (-1)^{p(u_i^{(l)})+1} \{\{u_i^{(l)}{}_\chi u_j^{(m)}\}{}_{\chi+\gamma} u_k^{(n)}\} \\
  & + (-1)^{(p(u_i^{(l)})+1)(p(u_j^{(m)})+1)} \{u_j^{(m)}{}_\gamma\{u_i^{(l)}{}_\chi u_k^{(n)}\}\}.  
  \end{aligned}
  \end{equation}
By \eqref{Eqn:master_Jacobi_1}, \eqref{Eqn:master_Jacobi_2}, \eqref{Eqn:master_Jacobi_3} and \eqref{Eqn:master_Jacobi_4} we have 
\begin{equation}\label{Eqn:master_Jacobi_5}
\begin{aligned}
 \{u_i^{(l)}{}_{\chi} \{ u_j^{(m)}{}_\gamma c\}\} & = (-1)^{p(u_i^{(l)})+1)} \{\{u_i^{(l)}{}_\chi u_j^{(m)}\}{}_{\chi+\gamma} c\}\\
 & + (-1)^{(p(u_i^{(l)})+1)(p(u_j^{(m)})+1)} \{u_j^{(m)}{}_\gamma\{u_i^{(l)}{}_\chi c\}\}. 
 \end{aligned}
 \end{equation}
 Similarly, using \eqref{Eqn:master_Jacobi_5}, we can show in a second step that the Jacobi identity holds when $a=u_i^{(l)}$ and $b,c$ are arbitrary elements in $\mathcal{P}$. Finally, the same procedure shows that it holds for any elements $a,b,c\in \mathcal{P}$.  
 
 Let us sktech the proof of (2). By (1), it is enough to show that if sesquilinearity, skew-symmetry and the Jacobi identity hold for any elements in $\mathcal{B}$ then they hold for any elements of the form $u_i^{(l)}$. Here, we only show the Jacobi identity. The other properties can be proved by simpler computations. Observe that 
 \begin{equation}\label{auxx}
 \begin{aligned}
 & \{ u_i^{(l)}{}_\chi \{u_j^{(m)}{}_\gamma u_k^{(n)}\}\}= \chi^l (-1)^{m(i+1)}\gamma^m (-1)^{(i+j)n} (\chi+\gamma+D)^n \{ u_i{}_\chi \{u_j{}_\gamma u_k\}\},\\
 & \{\{u_i^{(l)}{}_\Lambda u_j^{(m)}\}{}_{\chi+\gamma} u_k^{(n)}\} \\
 & = (-\chi)^l(-1)^{m(i+1)} (-\chi+\chi+\gamma)^m (-1)^{(i+j)n} (\chi+\gamma+D)^n \{\{u_i{}_\chi u_j\} {}_{\chi+\gamma} u_k\},\\
 & \{u_j^{(m)}{}_\gamma \{ u_i^{(l)}{}_\chi u_k^{(n)}\}\}= \gamma^m (-1)^{l(j+1)} \chi^l (-1)^{(i+j)n} (\chi+\gamma+D)^n \{ u_j{}_\gamma \{u_i{}_\chi u_k\}\}.
 \end{aligned}
 \end{equation}
 By the Jacobi identity on the triple $(u_i,u_j,u_k)$
\[   \{u_i{}_{\chi} \{ u_j{}_\gamma u_k\}\}  = (-1)^{p(u_i)+1} \{\{u_i {}_\chi u_j\}{}_{\chi+\gamma} u_k\}  + (-1)^{(p(u_i)+1)(p(u_j)+1)} \{u_j{}_\gamma\{u_i{}_\chi u_k\}\}, \]
and equation \eqref{auxx}, we have 
\begin{equation*}
\begin{aligned}
 \{ u_i^{(l)}{}_\chi \{u_j^{(m)}{}_\gamma u_k^{(n)}\}\}& = (-1)^{p(u_i^{(l)})+1} \{\{u_i^{(l)}{}_\chi u_j^{(m)}\}{}_{\chi+\gamma} u_k^{(n)}\} \\
 & (-1)^{(p(u_i^{(l)})+1)(p(u_j^{(m)})+1)} \{u_j^{(m)}{}_\gamma \{ u_i^{(l)}{}_\chi u_k^{(n)}\}\}.
\end{aligned}
\end{equation*}
\end{proof}

The following example and its reductions will play a key role in the next sections.
\begin{example}[affine SUSY PVA]  \label{Ex:affine}
Let $\g$ be a simple finite Lie superalgebra with an even nondegenerate invariant bilinear form $(\, |\, ).$ Consider the parity reversed vector superspace $\overline{\g}=\{\bar{a}|a\in \g\}$, that is  $(-1)^{p(\bar{a})}= (-1)^{p(a)+1}$. Then, the differential superalgebra $\mathcal{P}(\overline{\g}):= S(\mathbb{C}[D]\otimes \overline{\g})$ endowed with the $\chi$-bracket 
\begin{equation}\label{Eqn:affine}
 \{\bar{a}_\chi \bar{b}\}_1=(-1)^{p(a)} (\overline{[a,b]}+ \chi(a|b)) \quad \text{ for } a,b\in \g
 \end{equation}
 is a SUSY PVA, called the affine SUSY PVA associated to $\g$. One can easily check that the bracket satisfies skew-symmetry and the Jacobi identity so that it actually induces a SUSY PVA structure on $\mathcal{P}(\overline{\g})$ by Theorem \ref{Thm:master}.

The $\chi$-bracket on $\mathcal{P}(\overline{\g})$ defined by 
\begin{equation}\label{Eqn:affine_s}
\{\bar{a}_\chi \bar{b}\}_2=(-1)^{p(a)+1} (s|[a,b]) \quad \text{ for } a,b\in \g 
\end{equation}
where $s\in \g_{\bar{1}}$, also defines a SUSY PVA structure on $\mathcal{P}(\overline{\g})$, which is compatible with the first structure \eqref{Eqn:affine}, meaning that any of their linear combination induces a SUSY PVA structure.
\end{example}

\begin{example}[SUSY classical W-algebras] \label{W-algebra} 
Let $\g=\bigoplus_{i\in \Z} \g_i$ be a Lie superalgebra as in Example \ref{Ex:affine} with a $\Z$-grading $[\g_i, \g_j]\subset \g_{i+j}$. Let $f\in \g_{-1}\setminus \{0\}$ and  $s\in \g_d$ be odd elements, where $\g_{d'}=0$ for $d'>d$. Consider the vector superspace 
\begin{equation}\label{Eqn:W(g)}
\mathcal{W}({\g},f,s):= (\mathcal{P}(\overline{\g})/ \mathcal{I}_f)^{ad_\chi \overline{\n}}
\end{equation}
where $\overline{\n}:=  \overline{\g}_{>0}$ and $\text{ad}_\chi \bar{n} (A):= \{\bar{n}_\chi A\}$ for the bracket defined by \eqref{Eqn:affine_s} and $\bar{n}\in \overline{\n}$.
Then $\mathcal{W}(\g,f,s)$ is a SUSY PVA whose $\chi$-bracket is induced from \eqref{Eqn:affine_s}. The SUSY PVA $\mathcal{W}(\g,f,s)$ can be understood as a classical limit of the SUSY W-algebra introduced in \cite{MR}.

\end{example}

\subsection{Supersymmetric Hamiltonian systems}
We define super Hamiltonian equations and super integrable systems in the framework of SUSY PVAs. Let $V$ be a vector superspace with a basis $\mathcal{B}$. Suppose that $\mathcal{P}=S(\mathbb{C}[D] \otimes V)$ is endowed with a SUSY PVA structure. The super Hamiltonian equation associated to a Hamiltonian $\smallint h$ can be identified with the derivation $X_{\smallint h}$ introduced in Definition \ref{defXf}. 

\begin{definition} \label{defint}\hfill
\begin{enumerate}
\item We call elements of $\mathcal{P}$ functions and elements of $\mathcal{P} /D \mathcal{P}$ functionals.
\item The super Hamiltonian system of equations associated to the Hamiltonian $\int h \in \mathcal{P}/D \mathcal{P}$ is the system
\begin{equation}\label{SHE}
 \frac{du_i}{dt}= \{ \smallint h{}_\chi u_i\}|_{\chi=0}=X_{\smallint h}(u_i), \, \, i \in I.
\end{equation}
\item
We say that $\smallint \rho$ is an \textit{integral of motion} of the super Hamiltonian equation \eqref{SHE} if $\{ \smallint h, \smallint \rho \}=0$. It is equivalent to  $\smallint X_{\smallint h}( \rho)=0$. 
\item
The super Hamiltonian equation  \eqref{SHE} is called \textit{integrable} if $\smallint h$ is contained in an infinite dimensional abelian subalgebra $\mathfrak{a}$ of $(\mathcal{P}/D \mathcal{P}, \{ \ , \ \})$. In other words, the integrable super Hamiltonian equation possesses infinitely many linearly independent integrals of motion. Moreover, for any $\smallint h_1, \smallint h_2 \in \mathfrak{a}$, we have 
\begin{equation}
[X_{\smallint h_1},X_{\smallint h_2}]=0
\end{equation}
by Lemma \ref{eqcom}. We say that the super Hamiltonian equations associated to $\smallint h_1$ and $\smallint h_2$ are \textit{compatible}.

\end{enumerate}

To conclude Section \ref{Sec: SUSY PVA and sHIS} we prove Corollary \ref{cor_1&2}, which will be used at the end of Section \ref{Sec:DS hier}. We begin by recalling the definition of the \textit{variational derivative}.
\end{definition}
\begin{definition}
The variational derivative of a functional $a\in \mathcal{P}$ with respect to the variable $u_i$ is 
\[  \frac{\delta a}{\delta u_i}:=\sum_{m \geq 0} (-1)^{mi+\frac{m(m+1)}{2}} D^m (\frac{\partial a}{\partial u_i^{(m)}}).\]
If $\int a=0$, it follows from Lemma \ref{Lemma: commutator_derivatives} that $\frac{\delta a}{\delta u_i}=0$.
\end{definition}

\begin{proposition}\hfill
\begin{enumerate}
\item
The equation \eqref{SHE} can be rewritten as
\begin{equation} \label{z}
 \frac{du_i}{dt}=\sum_{j\in I} (-1)^{i(p(h)+j)}\{u_j{}_D u_i\}_{\to}\frac{\delta a}{\delta u_j}, \, \, \, i\in I.
 \end{equation}
\item For $a,b\in \mathcal{P}$, we have 
\begin{equation} \label{Eqn:mater formula-local}
 \smallint \{ a{}_\chi b\}|_{\chi=0} =\sum_{i,j\in I} (-1)^{p(b)p(a)+p(b)(j+1)+ij} \int\frac{\delta b}{\delta u_j} \{ u_i{}_D u_j\}_{\to} \frac{\delta a}{\delta u_i}.
 \end{equation}
\end{enumerate}
\end{proposition}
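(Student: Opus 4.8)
The plan is to read off both formulas from the super master formula \eqref{Eqn:Master formula2}, specialising it and then recognising the variational derivative $\frac{\delta}{\delta u_i}$. Throughout, $\{\cdot {}_\chi \cdot\}|_{\chi=0}$ denotes the coefficient of $\chi^0$, and all sums are finite because $a,b$ are differential polynomials. (In (1) the Hamiltonian is $\smallint h$, so I read $a=h$, and the right-hand side should involve $\frac{\delta h}{\delta u_j}$.)

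For (1), I would apply \eqref{Eqn:Master formula2} with $b=u_i$. Since $(u_i)_{(\beta,n)}=\delta_{i,\beta}\delta_{n,0}$, the entire $b$-sum collapses to $\beta=i$, $n=0$, leaving
\begin{equation*}
\{h {}_\chi u_i\}=\sum_{\alpha\in I,\, m\geq 0}(-1)^{e(\alpha,m)}\{u_\alpha {}_{\chi+D} u_i\}_\to (\chi+D)^m h_{(\alpha,m)},
\end{equation*}
where $e(\alpha,m)$ collects the two signs of \eqref{Eqn:Master formula2}. Setting $\chi=0$ replaces $(\chi+D)$ by $D$ everywhere. A direct reduction modulo $2$ gives $e(\alpha,m)\equiv i(p(h)+\alpha)+\big(m\alpha+\tfrac{m(m+1)}{2}\big)$, the crucial simplification being $m+\tfrac{(m-1)m}{2}=\tfrac{m(m+1)}{2}$. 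The $m$-dependent sign $(-1)^{m\alpha+\frac{m(m+1)}{2}}$ is exactly the one in $\frac{\delta h}{\delta u_\alpha}=\sum_m(-1)^{m\alpha+\frac{m(m+1)}{2}}D^m h_{(\alpha,m)}$, so the $m$-sum assembles $\frac{\delta h}{\delta u_\alpha}$ to the right of the operator $\{u_\alpha {}_D u_i\}_\to$, which is \eqref{z} after renaming $\alpha\mapsto j$.

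For (2), I would apply \eqref{Eqn:Master formula2} to arbitrary $a,b$, set $\chi=0$, and then pass to $\mathcal{P}/D\mathcal{P}$. Each term then has the shape $(-1)^{\sigma}\, b_{(j,n)}\, D^n\!\big(\{u_i {}_D u_j\}_\to D^m a_{(i,m)}\big)$, summed over $i,j,m,n$. The only new ingredient is the super integration-by-parts rule
\begin{equation*}
\smallint P\, D^n Q=(-1)^{n\,p(P)+\frac{n(n+1)}{2}}\smallint (D^n P)\,Q,
\end{equation*}
which follows by iterating $\smallint D(PQ)=0$. Using it with $P=b_{(j,n)}$ moves $D^n$ off the bracket and onto $b_{(j,n)}$; the $m$-sum then assembles $\frac{\delta a}{\delta u_i}$ exactly as in (1), while the $n$-sum assembles $\frac{\delta b}{\delta u_j}$ on the left. (Equivalently, one may deduce (2) from (1) together with the identity $\smallint X(b)=\sum_j\pm\,\smallint\frac{\delta b}{\delta u_j}\,X(u_j)$ for a derivation $X$ commuting with $D$, applied to $X=X_{\smallint a}$.)

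The main obstacle is the sign bookkeeping in this super setting, and the whole proof hinges on three cancellations that I have checked do occur. First, the cross terms in $m,n$ vanish because each appears twice: $2mn\equiv 0$. Second, the purely $n$-dependent part of $\sigma$ equals $n(p(b)+1)$, which combines with the $n^2\equiv n$ contribution of the integration-by-parts sign to leave exactly $(-1)^{nj+\frac{n(n+1)}{2}}$, reconstructing $\frac{\delta b}{\delta u_j}$. Third, the purely $m$-dependent part collapses via $m+\tfrac{(m-1)m}{2}=\tfrac{m(m+1)}{2}$ to reconstruct $\frac{\delta a}{\delta u_i}$. What then survives is the $m,n$-independent sign $p(b)(p(a)+1)+(p(b)+i)j\equiv p(a)p(b)+p(b)(j+1)+ij$, which is precisely the sign in \eqref{Eqn:mater formula-local}. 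Once these three reductions are in place, both identities follow with no further input.
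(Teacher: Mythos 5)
Your proposal is correct and is exactly the route the paper takes: its entire proof of this proposition is the one-line remark that both identities ``follow directly from the super master formula,'' and your computation (collapsing the $b$-sum for (1), and combining the master-formula signs with super integration by parts for (2)) is precisely the omitted verification — I checked your three sign reductions, including $n(p(b)+1)+n\,p(b_{(j,n)})+\tfrac{n(n+1)}{2}\equiv nj+\tfrac{n(n+1)}{2}$ and the residual sign $p(b)(p(a)+1)+(p(b)+i)j$, and they all hold.
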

\begin{proof}
It follows directly from the super master formula \eqref{Eqn:Master formula2}. 
\end{proof}
\begin{remark}
We deduce from equations \eqref{z} and \eqref{Eqn:mater formula-local} that for any $a,b\in \mathcal{P}$, we have
\begin{equation} \label{ref}
\begin{split}
 \textstyle \smallint \{ a{}_\chi b\}|_{\chi=0} &  \textstyle  =\sum_{j\in I} (-1)^{p(b)(p(a)+1)+j(p(a)+p(b))} \int\frac{\delta b}{\delta u_j}  \{ a{}_\chi u_j\}|_{\chi=0} \\
  &  \textstyle  =\sum_{j\in I} (-1)^{j} \int \big( \{ a{}_\chi u_j\}|_{\chi=0} \big) \frac{\delta b}{\delta u_j}.
  \end{split}
\end{equation}
\end{remark}

\begin{corollary} \label{cor_1&2}
Suppose that $\mathcal{P}$ admits two compatible SUSY PVA brackets $\{ \, _\chi \, \}_1$ and $\{ \, _\chi \, \}_2$.
Let $\int h_1, \int h_2  \in \mathcal{P}/D \mathcal{P}$ be such that $\{ \int h_1, \int g\}_1=\{\int h_2, \int g \}_2$ for all $\int g \in \mathcal{P}/D \mathcal{P}$. Then ${\{{ h_1} _\chi g\}_1}|_{\chi=0}={\{{ h_2} _\chi g\}_2}|_{\chi=0}$ for all $g \in \mathcal{P}$.
\end{corollary}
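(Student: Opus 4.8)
The statement asserts an equality of \emph{functions} in $\mathcal P$, so my plan is to first reduce it to an equality of \emph{functionals} and then run a non-degeneracy argument. By the definition \eqref{indbra} of the induced bracket, $\{\smallint h_a,\smallint g\}_a=\smallint\{h_a{}_\chi g\}_a|_{\chi=0}$ for $a=1,2$, so the hypothesis is precisely
\[
\smallint\{h_1{}_\chi g\}_1\big|_{\chi=0}=\smallint\{h_2{}_\chi g\}_2\big|_{\chi=0}\qquad\text{for all }g\in\mathcal P .
\]
By Definition \ref{defXf} the maps $g\mapsto\{h_a{}_\chi g\}_a|_{\chi=0}$ are derivations $X^{(1)}_{\smallint h_1},X^{(2)}_{\smallint h_2}$ of $\mathcal P$ supercommuting with $D$, so their difference $T:=X^{(1)}_{\smallint h_1}-X^{(2)}_{\smallint h_2}$ is again such a derivation; it is homogeneous because the hypothesis forces $p(h_1)=p(h_2)$. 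The assumption now reads $T(g)\in D\mathcal P$ for every $g$, and since $T$ supercommutes with $D$ we have $T(u_i^{(m)})=(-1)^{m\,p(T)}D^m T(u_i)$, so $T$ is determined by its values on the generators; the goal $T=0$ is therefore equivalent to $T(u_i)=0$ for all $i\in I$.

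The next step turns the vanishing of all these functionals into a single pairing identity. Applying the reformulation \eqref{ref} to each bracket — which enters only through the functions $X^{(a)}_{\smallint h_a}(u_j)=\{h_a{}_\chi u_j\}_a|_{\chi=0}$ and through the bracket-independent variational derivatives $\tfrac{\delta g}{\delta u_j}$ — and subtracting, I obtain
\[
\sum_{j\in I}(-1)^j\,\smallint\, T(u_j)\,\frac{\delta g}{\delta u_j}=0\qquad\text{for all }g\in\mathcal P .
\]
Writing $A_j:=T(u_j)$, the corollary is reduced to the non-degeneracy claim that a family $(A_j)_{j\in I}$ with $\sum_j(-1)^j\smallint A_j\,\tfrac{\delta g}{\delta u_j}=0$ for all $g$ must vanish. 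Specialising to $g=u_i$ already gives $\smallint A_i=0$, i.e. $A_i\in D\mathcal P$, which is exactly what the functional hypothesis yields directly.

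This non-degeneracy is the step I expect to be the crux, and it is genuinely delicate. The natural route is to prove $\smallint A_i\,Q=0$ for \emph{every} $Q\in\mathcal P$ and then invoke non-degeneracy of the full pairing $(P,Q)\mapsto\smallint PQ$; but the hypothesis only controls $\smallint A_i\,Q$ when $Q$ lies in the image of $\tfrac{\delta}{\delta u_i}$, which is the proper subspace of $\mathcal P$ cut out by the self-adjointness conditions on the Fr\'echet derivative. Skew, total-derivative directions such as $A_j=u_j'$ pair to zero with every variational derivative while being nonzero, so the reduction above cannot by itself force $A_j=0$: the proof must rule out precisely these directions. I would attempt this by combining the derivation property of $T$ — expanding $T(u_iu_j)\in D\mathcal P$ with the super Leibniz rule and integrating by parts produces a hierarchy of relations tying the $A_j$ together — with the conformal-weight grading that $\mathcal P$ carries in our applications, under which $D$ is homogeneous and the weights of the $A_j$ are fixed by those of $h_1,h_2$. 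Identifying which of these structural inputs suffices to annihilate the skew part of $(A_j)$, and verifying that it is available for the Hamiltonians $\smallint h_C$ we construct, is where the real content of the proof lies.
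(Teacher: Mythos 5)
Your first two steps reproduce the paper's reduction exactly: the hypothesis, rewritten via \eqref{ref}, becomes $\sum_{j}(-1)^{j}\smallint A_j\,\tfrac{\delta g}{\delta u_j}=0$ for all $g\in\mathcal P$, where $A_j=\{h_1{}_\chi u_j\}_1|_{\chi=0}-\{h_2{}_\chi u_j\}_2|_{\chi=0}$, and the conclusion is equivalent to $A_j=0$ for all $j$ because both sides of the desired identity are derivations supercommuting with $D$. You also correctly diagnose that this is where the difficulty sits: the pairing between tuples $(A_j)$ and variational derivatives has a genuine kernel (for instance the tuple $(D^2u_j)_j$ coming from the even evolutionary derivation $D^2$, which satisfies $\smallint D^2g=0$ for every $g$), and taking $g=u_i$ only returns the hypothesis. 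But at precisely this point the proposal stops being a proof: you state that ruling out these null directions ``is where the real content of the proof lies'' and offer only two unexecuted strategies --- expanding $T(u_iu_j)$ by the Leibniz rule, and invoking a conformal-weight grading that is not among the hypotheses of the corollary. Neither is carried out, and neither is shown to suffice, so the decisive non-degeneracy step is missing. This is a genuine gap, not a matter of presentation.

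The paper closes that gap with a concrete family of test functionals: $g=(u_j^{(2n)})^3$ for $j$ even and $g=(u_j^{(2n+1)})^3$ for $j$ odd, with $n$ large, so that the cubed variable is even and its order exceeds that of $A_j$. Since $\tfrac{\delta g}{\delta u_k}=0$ for $k\neq j$, each such $g$ isolates the single condition $\smallint A_j\,D^m\bigl(3(u_j^{(m)})^2\bigr)=0$, equivalently $\smallint D^m(A_j)\,(u_j^{(m)})^2=0$, for all large $m$ of the appropriate parity, and one then argues that this forces $A_j=0$. If you want to complete your write-up, replace the two speculative closing paragraphs with this choice of $g$ and the verification that it detects every nonzero $A_j$; the Leibniz-expansion and weight-grading routes you sketch are not what the paper does, and you give no evidence they can be pushed through.
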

\begin{proof}
The corollary follows from equation \eqref{ref}. Indeed, it suffices to take $g=(u_j^{(2n)})^3$ with large $n$ if $j$ is even and $g=(u_j^{(2n+1)})^3$ if $j$ is odd.
\end{proof}

\section{SUSY Drindeld-Sokolov reductions}  \label{Sec: DSred}

In this section and the next, we let $\g$ be a Lie superalgebra equipped with a supersymmetric invariant nondegenerate bilinear form $(\, |\, )$. We assume that $\g$ is $\Z$-graded with degrees ranging from $-d$ to $d$ for some $d >0$ and that the form $(\, |\, )$ pairs $\g_k$ with $\g_{-k}$ for all $k=0,\cdots,d$. 
We consider a sub Lie superalgebra $\n$ of $\g_{>0}$ which is homogeneous for the $\Z$-grading. Let $\mathfrak{m} \subset \n$ be a homogeneous $\n$-module.  We denote by $\mathfrak{b}_-$ a homogeneous complement to $\mathfrak{m}$ in $\g$ and by $\mathfrak{b}$ the dual space of $\mathfrak{b}_-$ with respect to $(\, |\, )$. Then $\mathfrak{b}$ is the orthogonal of $\mathfrak{m}$. We pick two positive integers  $i,j \leq d$ and odd elements 
\begin{equation*}
f \in \g_{-i},  \, \, s \in \g_j.
\end{equation*}
 Finally, we assume that
 \begin{itemize}
\item[(A-1)] $ \g_{\geq i} \subset \mathfrak{m}$,
\item[(A-2)]$[f, \n] \subset \mathfrak{b}$,
\item[(A-3)] $\text{ad}f|_{\n} \text{ is injective},$
\item[(A-4)] $[s, \n]=0.$
\end{itemize}
It is clear from the invariance of the bilinear form that $\mathfrak{b}$ is a $\n$-module. We fix two dual (for the bilinear form) bases $\{q^t| t \in \tilde{I} \}$ and $\{q_t| t \in \tilde{I}  \}$ of $\g$ 
such that $\tilde I$ is a subset of $\Z$. We assume that the parity of $q^t$ is the same than the parity of $t$. Thus we have 
\[ (q^t|q_{t'})= (-1)^t (q_{t'}|q^t)= \delta_{t,t'} \text{ for } t,t'\in \tilde I.\]
In the sequal, we suppose
\[ \mathfrak{b}= \text{Span}_{\C}\{q^t|t\in I \subset \tilde I\}, \quad \mathfrak{b}_-= \text{Span}_{\C}\{q_t|t\in I \subset \tilde I\}.\]

\begin{example} \label{Example: f,m,n_W}
Let $\g$ be a simple Lie superalgebra with a subalgebra $\mathfrak{h}$ isomorphic to $\mathfrak{osp}(1|2)$.  $\mathfrak{h}$ is generated by two odd elements $e$ and $f$ and a $\text{sl}_2$-triple $(E,H,F)$.  The element $H$ induces a $\Z$-grading on $\g$ with homogeneous subspaces $\g_i:=\{ a\in \g| [h,a]= i a\}$ and $f$ is in $\g_{-1}$. We can pick $\mathfrak{n}=\mathfrak{m}=\g_{>0}$ so that $\mathfrak{b}= \g_{\geq 0}$, $\text{ad}f|_{\mathfrak{n}}$ is injective and $[f, \mathfrak{n}]\subset \mathfrak{b}$. This data corresponds to the SUSY PVA in Example \ref{W-algebra}. 
However, in this case, we cannot guarantee the existence of an odd element  $s$ satisfying (A-4). This has to be checked case by case.
\end{example}

We will give examples of tuples $(\g, \mathfrak{m}, \n, f,s)$ satisfying (A-1), (A-2), (A-3), (A-4) in Section \ref{Subsec:Example-(g,f,m,n,s)}.

\subsection{Generalized SUSY W-algebras}
We recall that the differential superalgebra $\mathcal{P}(\overline \g)$ is equipped with two compatible SUSY PVA $\chi$-brackets defined by
 \begin{equation}\label{affinechibrackets}
 \begin{split}
 \{ \bar a _\chi \bar b\}_1&=(-1)^{p(a)}(\overline{ [a,b]}+ \chi (a|b)), \\
  \{ \bar a _\chi \bar b\}_2&=(-1)^{p(a)+1}(s|[a,b]), \, \, \, \, \, a, b \in \g.
\end{split}
 \end{equation}
 
 In the following two propositions, we introduce the generalized SUSY $W$-algebra asssociated to the tuple $(\g, \n, \mathfrak{m}, f)$ and show that it admits two compatible SUSY PVA $\chi$-brackets.

\begin{proposition} \label{Prop: new W-algebra}
Let $\mathcal{I}_{f, \mathfrak{m}}$ be the differential  ideal of the differential superalgebra $\mathcal{P}(\overline{\g})$ generated by $\{ \, \bar{m}-(f|m)\, | \, m\in \mathfrak{m}\}$. Let $\pi$ denote the projection $\mathcal{P}(\overline{\g}) \rightarrow \mathcal{P}(\overline{\g})/\mathcal{I}_{f, \mathfrak{m}}$. Then the following differential superalgebra is well-defined:
\begin{equation} \label{generalized W-agebra}
 \mathcal{W}(\g, \n, \mathfrak{m}, f):=\{ \pi(a) \in  \mathcal{P}(\overline{\g})/\mathcal{I}_{f, \mathfrak{m}} | \{ \n _\chi a \}_1 \subset \mathcal{I}_{f, \mathfrak{m}}[\chi] \}.
 \end{equation}
Moreover, the following bracket defines a SUSY PVA structure on 
 $\mathcal{W}(\g, \n, \mathfrak{m}, f)$
\begin{equation} \label{bracket_H}
 \{ \pi(a) _\chi \pi(b) \}_1^{\mathcal{W}}:=\pi (\{a _\chi b \}_1) \, \text{ for }  \pi(a) ,\pi(b) \in \mathcal{W}(\g, \n, \mathfrak{m}, f).
\end{equation}
\end{proposition}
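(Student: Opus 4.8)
The plan is to prove everything by transporting identities from the ambient SUSY PVA $\mathcal{P}(\overline{\g})$ through the projection $\pi$, the key point being that $\pi$ is a surjective homomorphism of differential superalgebras. Since the candidate bracket is defined as $\{\pi(a)_\chi\pi(b)\}_1^{\mathcal{W}}=\pi(\{a_\chi b\}_1)$, the sesquilinearity, skew-symmetry, Jacobi identity and Leibniz rule will descend automatically \emph{once} two things are verified: that the defining condition of $\mathcal{W}$ and the reduced bracket are independent of the chosen representatives, and that $\mathcal{W}$ is closed under multiplication, $D$, and the reduced bracket. Everything reduces to two invariance statements about the differential ideal $\mathcal{I}_{f,\mathfrak{m}}$, which I would record first.

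First I would prove $\{\overline{\n}_\chi\mathcal{I}_{f,\mathfrak{m}}\}_1\subset\mathcal{I}_{f,\mathfrak{m}}[\chi]$. For fixed $\bar n\in\overline{\n}$ the map $A\mapsto\{\bar n_\chi A\}_1$ is a derivation (right Leibniz rule), so together with sesquilinearity the claim reduces to the generators $g_m=\bar m-(f|m)$, $m\in\mathfrak{m}$ (the prefactors being absorbed because $\mathcal{I}_{f,\mathfrak{m}}$ is a differential ideal). Here $\{\bar n_\chi g_m\}_1=\{\bar n_\chi\bar m\}_1=(-1)^{p(n)}(\overline{[n,m]}+\chi(n|m))$. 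The term $(n|m)$ vanishes because the form pairs $\g_k$ with $\g_{-k}$ and $n,m\in\g_{>0}$; and since $\mathfrak{m}$ is an $\n$-module, $[n,m]\in\mathfrak{m}$, so $\overline{[n,m]}=g_{[n,m]}+(f|[n,m])=g_{[n,m]}\in\mathcal{I}_{f,\mathfrak{m}}$, the constant $(f|[n,m])$ being zero by assumption (A-2) (equivalently $(f|[\n,\mathfrak{m}])=0$). This invariance makes the condition $\{\n_\chi a\}_1\subset\mathcal{I}_{f,\mathfrak{m}}[\chi]$ depend only on $\pi(a)$, so $\mathcal{W}$ is well-defined, and it is a differential subalgebra: closure under $D$ follows from sesquilinearity and $D$-stability of $\mathcal{I}_{f,\mathfrak{m}}$, and closure under products from the right Leibniz rule and the ideal property.

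The crux is the second invariance: \emph{for every $a$ with $\pi(a)\in\mathcal{W}$ one has $\{a_\chi\mathcal{I}_{f,\mathfrak{m}}\}_1\subset\mathcal{I}_{f,\mathfrak{m}}[\chi]$} (and by skew-symmetry also $\{\mathcal{I}_{f,\mathfrak{m}}{}_\chi a\}_1\subset\mathcal{I}_{f,\mathfrak{m}}[\chi]$). Again the right Leibniz rule and sesquilinearity reduce this to the generators: $\{a_\chi g_m\}_1=\{a_\chi\bar m\}_1=(-1)^{p(a)p(m)}\{\bar m_{-\chi-D}a\}_1$ by skew-symmetry. The essential point—and where the definition of $\mathcal{W}$ together with the hypothesis $\mathfrak{m}\subset\n$ enters—is that $\bar m\in\overline{\n}$, so $\{\bar m_\chi a\}_1\in\mathcal{I}_{f,\mathfrak{m}}[\chi]$ by the defining property of $\mathcal{W}$; the substitution $\chi\mapsto-\chi-D$ keeps the result in $\mathcal{I}_{f,\mathfrak{m}}[\chi]$ because $\mathcal{I}_{f,\mathfrak{m}}$ is $D$-stable. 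I expect this to be the main obstacle, since it is exactly the non-formal input turning $\mathcal{I}_{f,\mathfrak{m}}$—which is \emph{not} a PVA ideal—into one relative to arguments in $\mathcal{W}$. From it, independence of representatives for $\{\,_\chi\,\}_1^{\mathcal{W}}$ is immediate by expanding $\{(a+\iota)_\chi(b+\iota')\}_1$ and discarding the three extra terms, each of which lies in $\mathcal{I}_{f,\mathfrak{m}}[\chi]$.

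Finally, to see that $\mathcal{W}$ is closed under the reduced bracket I would apply the Jacobi identity in $\mathcal{P}(\overline{\g})$: for $\pi(a),\pi(b)\in\mathcal{W}$ and $\bar n\in\overline{\n}$,
\[
\{\bar n_\gamma\{a_\chi b\}_1\}_1=(-1)^{p(n)+1}\{\{\bar n_\gamma a\}_1{}_{\gamma+\chi}b\}_1+(-1)^{(p(n)+1)(p(a)+1)}\{a_\chi\{\bar n_\gamma b\}_1\}_1.
\]
Both $\{\bar n_\gamma a\}_1$ and $\{\bar n_\gamma b\}_1$ lie in $\mathcal{I}_{f,\mathfrak{m}}[\gamma]$ by definition of $\mathcal{W}$, so the two right-hand terms lie in $\mathcal{I}_{f,\mathfrak{m}}[\gamma,\chi]$ by the second invariance above; hence every coefficient of $\{a_\chi b\}_1$ represents an element of $\mathcal{W}$. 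With well-definedness and closure established, each SUSY PVA axiom for $\{\,_\chi\,\}_1^{\mathcal{W}}$ is obtained by applying $\pi$ to the corresponding identity for $\{\,_\chi\,\}_1$ on $\mathcal{P}(\overline{\g})$, using that $\pi$ commutes with $D$ and with the substitution $\chi\mapsto-\chi-D$, and that nested reduced brackets equal $\pi$ of the corresponding nested brackets in $\mathcal{P}(\overline{\g})$.
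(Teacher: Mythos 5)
Your proof is correct and follows essentially the same route as the paper: establish $\{\overline{\n}{}_\chi\,\mathcal{I}_{f,\mathfrak{m}}\}_1\subset\mathcal{I}_{f,\mathfrak{m}}[\chi]$ on generators using that $\mathfrak{m}$ is an $\n$-module, $(n|m)=0$ and $(f|[\n,\mathfrak{m}])=0$, then use $\mathfrak{m}\subset\n$ together with skew-symmetry, sesquilinearity and the Leibniz rules to get $\{a{}_\chi\,\mathcal{I}_{f,\mathfrak{m}}\}_1\subset\mathcal{I}_{f,\mathfrak{m}}[\chi]$ for $\pi(a)\in\mathcal{W}$, whence the bracket descends. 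The only addition is your explicit Jacobi-identity argument that $\mathcal{W}$ is closed under the reduced bracket, a necessary verification the paper folds into ``straightforward to check''; your version of it is correct.
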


\begin{proof}
 Since $\mathfrak{m}$ is a $\n$-module and $[f, \n] \subset \mathfrak{b}$,  we have for all $n \in \mathfrak{n}$ and $ m \in \mathfrak{m}$
 \begin{equation*}
  \{ \bar{n}{}_\chi \bar{m}\}_1=(-1)^{p(n)}\overline{[n,m]}=(-1)^{p(n)}(\overline{[n,m]}-(f |[n,m])).
  \end{equation*}
  In other words, $ \{ \bar{n}{}_\chi \bar{m}\}_1 \in \mathcal{I}_{f, \mathfrak{m}}$ and by sesquilinearity, $ \{ \bar{n}{}_\chi \bar{m}^{(k)}\}_1 \in \mathcal{I}_{f, \mathfrak{m}}[\chi]$  for all $k \geq 0$. Let $a  \in \mathcal{P}(\overline{\g})$. For all $n \in \n$ and$ \in \mathfrak{m}$, we have
\begin{equation*}
 \{ \bar{n}{}_\chi a(\bar{m}-(f|m))^{(k)} \}_1 = \{\bar{n}{}_\chi a\}_1 (\bar{m}-(f|m))^{(k)}  +(-1)^{p(n)p(a)} a 
\{ \bar{n}{}_\chi \bar{m}^{(k)}\}_1\in \mathcal{I}_{f, \mathfrak{m}}[\chi].
 \end{equation*}
by the Leibniz rule. 
 Thus we proved that $\{ \overline{ \mathfrak{n}} _\chi \mathcal{I}_{f, \mathfrak{m}} \}_1 \subset  \mathcal{I}_{f, \mathfrak{m}}[\chi]$.
Hence  $\mathcal{W}(\g, \n, \mathfrak{m}, f)$ is well-defined as a vector superspace. It clearly is a sub-superalgebra of  $\mathcal{P}(\overline{\g})/\mathcal{I}_{f, \mathfrak{m}} \simeq S(\C[D] \otimes \overline{\mathfrak{b}}_{-})$, by the sesquilinearity and the Leibniz rule.

 Let $a  \in \mathcal{P}(\overline{\g})$ be such that $\{ \overline{\n} {}_\chi a \}_1 \subset \mathcal{I}_{f, \mathfrak{m}}[\chi]$. In particular $\{ \overline{\mathfrak{m}} {}_\chi a \}_1 \subset \mathcal{I}_{f, \mathfrak{m}}[\chi]$ since $\mathfrak{m} \subset \n$. It follows, using  sesquilinearity and the Leibniz rule, that $\{ a {}_\chi \mathcal{I}_{f, \mathfrak{m}} \}_1 \subset \mathcal{I}_{f, \mathfrak{m}}[\chi]$.
Therefore the induced $\chi$-bracket \eqref{bracket_H} on $\mathcal{W}(\g, \n, \mathfrak{m}, f)$ is well defined. It is straightforward to check that  axioms of a SUSY PVA hold for this $\chi$-bracket using the even morphism of differential superalgebras $\pi$.
\end{proof}

\begin{proposition}\label{keyprop}
The following $\chi$-bracket is a SUSY PVA bracket on $\mathcal{W}(\g, \n, \mathfrak{m}, f)$ compatible with the first bracket \eqref{bracket_H}
\begin{equation}\label{secondbra}
 \{ \pi(a){}_\chi \pi(b)\}_2^{\mathcal{W}}:= \pi(\{a {}_\chi b\}_2) \text{ for }  \pi(a) ,\pi(b) \in \mathcal{W}(\g, \n, \mathfrak{m}, f).
\end{equation}
\end{proposition}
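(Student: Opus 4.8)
The plan is to reduce the whole statement to Proposition \ref{Prop: new W-algebra} by observing that the second bracket is, in a strong sense, ``invisible'' to $\overline{\n}$. The first step is therefore the vanishing lemma $\{\overline{\n}\,{}_\chi\,P\}_2=0$ for every $P\in\mathcal{P}(\overline{\g})$. On generators this is immediate: for $n\in\n$ and $a\in\g$ the invariance of the form gives $(s|[n,a])=([s,n]|a)$, which vanishes by (A-4); hence $\{\bar{n}\,{}_\chi\,\bar{a}\}_2=(-1)^{p(n)+1}(s|[n,a])=0$. Extending from the generators $\bar{a}^{(k)}$ to arbitrary products by right sesquilinearity and the right Leibniz rule then yields $\{\bar{n}\,{}_\chi\,P\}_2=0$ for all $P$.

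Next I would use this to descend the bracket \eqref{secondbra} to the quotient. Since $\mathcal{I}_{f,\mathfrak{m}}$ is the differential ideal generated by the $\bar{m}-(f|m)$ with $m\in\mathfrak{m}\subset\n$, its differential generators are these elements and their derivatives $\bar{m}^{(k)}$; by the vanishing lemma (and because constants have zero bracket) each such generator $g$ satisfies $\{g\,{}_\chi\,P\}_2=0$. Writing a general element of $\mathcal{I}_{f,\mathfrak{m}}$ as $\sum P_\alpha g_\alpha$ and applying the left Leibniz rule (Lemma \ref{Lem:left leibniz}), the term built from $\{g_\alpha{}_{\chi+D}P\}_2$ drops out and only terms carrying a factor $(\chi+D)^k g_\alpha\in\mathcal{I}_{f,\mathfrak{m}}[\chi]$ survive. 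Hence $\{\mathcal{I}_{f,\mathfrak{m}}\,{}_\chi\,\mathcal{P}(\overline{\g})\}_2\subset\mathcal{I}_{f,\mathfrak{m}}[\chi]$, and by skew-symmetry together with the $D$-stability of $\mathcal{I}_{f,\mathfrak{m}}$ also $\{\mathcal{P}(\overline{\g})\,{}_\chi\,\mathcal{I}_{f,\mathfrak{m}}\}_2\subset\mathcal{I}_{f,\mathfrak{m}}[\chi]$. Thus \eqref{secondbra} is independent of the chosen representatives and defines a bracket on $\mathcal{P}(\overline{\g})/\mathcal{I}_{f,\mathfrak{m}}$.

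For closedness, the SUSY PVA axioms, and compatibility I would argue for the whole pencil $\{\,\cdot\,{}_\chi\,\cdot\,\}_\mu:=\{\,\cdot\,{}_\chi\,\cdot\,\}_1+\mu\{\,\cdot\,{}_\chi\,\cdot\,\}_2$, $\mu\in\C$, rather than for $\{\,\cdot\,\}_2$ in isolation. By the compatibility stated in Example \ref{Ex:affine}, each $\{\,\cdot\,\}_\mu$ is a SUSY PVA bracket on $\mathcal{P}(\overline{\g})$. Inspecting the proof of Proposition \ref{Prop: new W-algebra}, it uses only that $\mathfrak{m}$ is an $\n$-module, that $[f,\n]\subset\mathfrak{b}$, and the value $\{\bar{n}\,{}_\chi\,\bar{m}\}=(-1)^{p(n)}\overline{[n,m]}$ of the bracket on pairs in $\overline{\n}\times\overline{\mathfrak{m}}$. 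All of these hold verbatim for $\{\,\cdot\,\}_\mu$, since the $\mu$-contribution $\{\bar{n}\,{}_\chi\,\bar{m}\}_2$ vanishes by the vanishing lemma; moreover $\{\overline{\n}\,{}_\chi\,a\}_\mu=\{\overline{\n}\,{}_\chi\,a\}_1$ for all $a$, so the associated invariant subalgebra is the same subspace $\mathcal{W}(\g,\n,\mathfrak{m},f)$ for every $\mu$. Proposition \ref{Prop: new W-algebra} therefore equips $\mathcal{W}(\g,\n,\mathfrak{m},f)$ with a SUSY PVA structure whose bracket is $\{\,\cdot\,\}_1^{\mathcal{W}}+\mu\{\,\cdot\,\}_2^{\mathcal{W}}$. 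Subtracting the $\mu=0$ case identifies \eqref{secondbra} as a SUSY PVA bracket on $\mathcal{W}(\g,\n,\mathfrak{m},f)$, and the fact that every $\{\,\cdot\,\}_1^{\mathcal{W}}+\mu\{\,\cdot\,\}_2^{\mathcal{W}}$ is such a bracket is precisely the asserted compatibility.

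The main obstacle is concentrated in the first two steps: the vanishing $\{\overline{\n}\,{}_\chi\,\cdot\,\}_2=0$ and the resulting ideal-invariance are what force the pencil $\{\,\cdot\,\}_\mu$ to reduce to one and the same $\mathcal{W}(\g,\n,\mathfrak{m},f)$ and let me recycle Proposition \ref{Prop: new W-algebra} wholesale. If one prefers not to reuse that proposition for the pencil, closedness of $\mathcal{W}(\g,\n,\mathfrak{m},f)$ under \eqref{secondbra} can instead be obtained directly from the mixed Jacobi identity extracted by expanding the Jacobi identity of $\{\,\cdot\,\}_\mu$: setting $a=\bar{n}$ and using $\{\bar{n}\,{}_\chi\,\cdot\,\}_2=0$ collapses it to
\[ \{\bar{n}\,{}_\chi\,\{b\,{}_\gamma\, c\}_2\}_1=(-1)^{p(n)+1}\{\{\bar{n}\,{}_\chi\, b\}_1{}_{\chi+\gamma}\, c\}_2+(-1)^{(p(n)+1)(p(b)+1)}\{b\,{}_\gamma\,\{\bar{n}\,{}_\chi\, c\}_1\}_2, \]
whose right-hand side lies in $\mathcal{I}_{f,\mathfrak{m}}[\chi,\gamma]$ whenever $\pi(b),\pi(c)\in\mathcal{W}(\g,\n,\mathfrak{m},f)$, by the descent established above.
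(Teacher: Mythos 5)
Your proposal is correct and follows essentially the same route as the paper: both hinge on the vanishing $\{\overline{\n}\,{}_\chi\,\cdot\,\}_2=0$ (from invariance of the form and (A-4), extended by the Leibniz rule), the observation that the pencil $\{\,\cdot\,\}_1+\epsilon\{\,\cdot\,\}_2$ is a SUSY PVA bracket on $\mathcal{P}(\overline{\g})$ whose $\overline{\n}$-invariance condition coincides with that of $\{\,\cdot\,\}_1$, and a rerun of the proof of Proposition \ref{Prop: new W-algebra} for the whole pencil. You merely spell out in more detail the ideal-invariance step that the paper delegates to ``the exact same proof.''
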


\begin{proof}
We know from Example \ref{Ex:affine} that for any constant $\epsilon$ the bracket 
\begin{equation} \label{bracketeps}
\{a_\chi b\}^{\epsilon}=\{a_\chi b\}_1+\epsilon \{a_\chi b\}_2, \, \, a,b \in \mathcal{P}(\overline{\mathfrak{g}})
\end{equation}
defines a SUSY PVA structure on $\mathcal{P}(\overline{\mathfrak{g}})$. 

Let us show $\mathcal{W}(\g, \n, \mathfrak{m}, f)$ is SUSY PVA with the $\chi$-bracket induced from \eqref{bracketeps} which directly implies the proposition.
Since the bilinear form is invariant and $s \in \text{ker} \, \text{ad} \, \mathfrak{n}$, we have $\{\bar{n}{}_\chi a\}_2=0$ for any $a\in \overline{\g}$ and $n \in \n$.  By the Leibniz rule, $\{\bar{n}{}_\chi u\}_2=0$ for any $u\in \mathcal{P}( \overline{\g})$ and $n \in \n$. It follows from this observation that, for an element $u \in  \mathcal{P}(\overline{\mathfrak{g}})$, $\{\overline{\mathfrak{n}} _\chi u\}_1\subset \mathcal{I}_{f, \mathfrak{m}}[\chi]$ if and only if $\{\overline{\mathfrak{n}} _\chi u\}^{\epsilon} \subset \mathcal{I}_{f, \mathfrak{m}}[\chi]$. Therefore the exact same proof as in Proposition \ref{Prop: new W-algebra} gives the result. In the definition \eqref{generalized W-agebra} of $\mathcal{W}(\g, \n, \mathfrak{m}, f)$, one can replace the first $\chi$-bracket by the $\chi$-bracket \eqref{bracketeps}.
\end{proof}

\begin{definition}
The differential superalgebra $\mathcal{W}(\g, \n, \mathfrak{m}, f)$ introduced in Proposition \ref{Prop: new W-algebra} is called the \textit{generalized W-algebra} associated to $\g$, $\n$, $\mathfrak{m}$ and $f$.
\end{definition}

\begin{remark}
The SUSY classical W-algebra $\mathcal{W}(\g,f,s)$ in Example \ref{W-algebra} is $\mathcal{W}(\g,\g_{>0}, \g_{>0},f)$.
\end{remark}

\subsection{Generalized W-algebras and Lax operators} 
The goal of this section is to give an equivalent definition of $\mathcal{W}(\g, \n, \mathfrak{m}, f)$ in terms of super differential operators on $\g \otimes \mathcal{P}(\overline \g)$ (Theorem \ref{Lem:equi_W_1}).
Consider the following sub differential superalgebra of $\mathcal{P}(\overline{\mathfrak{g}})$
\begin{equation}
\textstyle \mathcal{P}:=  S(\C[D]\otimes \overline{\mathfrak{b}}_-).
\end{equation}
Note that $\mathcal{P}(\overline \g)$ splits as the direct sum of $\mathcal{P}$ and $\mathcal{I}_{f, \mathfrak{m}}$.
We endow
$\g \otimes \mathcal{P}$
  with the Lie superalgebra bracket
\begin{equation} \label{LieBracket_part1}
[a\otimes u , b\otimes v]= (-1)^{p(b) p(u)} [a,b] \otimes uv, \, \, a,b \in \g, \, u,v, \in \mathcal{P}
\end{equation}
and with the $\mathcal{P}$-valued bilinear form
\begin{equation}\label{Bilinear form}
(a\otimes u|b\otimes v)= (-1)^{p(b)p(u)} (a|b) uv,  \, \, a,b \in \g, \, u,v, \in \mathcal{P}.
\end{equation}
We extend this Lie superalgebra bracket to the semi-direct product $\mathbb{C}[D] \ltimes (\g \otimes \mathcal{P})$ 
\begin{equation*}
 [D, a\otimes u]= (-1)^{p(a)} a\otimes u',  \, \, a \in \g, \, u \in \mathcal{P}.
 \end{equation*}

 \begin{definition} Let $\mathcal{F}:= (\mathfrak{b}\otimes \mathcal{P})_{\bar{1}}$. It is a $(\mathfrak{n} \otimes \mathcal{P})_{\bar{0}}$-submodule of $\g \otimes \mathcal{P}$. 
 \begin{enumerate}
\item  For each $\mathcal{Q} \in \mathcal{F}$, we call the odd differential operator 
\begin{equation}
 \mathcal{L}(\mathcal{Q})= D + \mathcal{Q} +f \otimes 1 \in \mathbb{C}[D] \ltimes (\g \otimes \mathcal{P})
 \end{equation}
the {\it Lax operator} associated to $\mathcal{Q}$. 
\item When $\mathcal{Q}_{u}=  \sum_{t\in I} {q^t} \otimes \bar{q}_t \in \mathcal{F},$ we call the corresponding Lax operator
 \begin{equation}
 \mathcal{L}_u=D + \sum_{t\in I} {q^t} \otimes \bar{q}_t  +f \otimes 1
 \end{equation}
 the {\it universal Lax operator} associated to $f$ and $\mathfrak{b}$. Note that $\mathcal{Q}_u$ and $\mathcal{L}_u$ are independent on the choice of basis  $\{q^t\}_{t\in I}$ and $\{q_t\}_{t\in I}$, provided that they are dual to each other for the bilinear form. 
\item Two elements $\mathcal{Q}_1$ and $\mathcal{Q}_2$ in $\mathcal{F}$ are called {\it gauge equivalent} if there exists $U\in (\mathfrak{n}\otimes  \mathcal{P})_{\bar{0}}$  such that 
 \[ e^{\text{ad}U}  \mathcal{L}(\mathcal{Q}_1)=  \mathcal{L}(\mathcal{Q}_2). \]
 \end{enumerate}
\end{definition}

We assume that $\{q^t|t \in J \subset I \}$ is a basis of $[f, \n]$ and we let $V:= \text{Span}_\C\{ q^t| t\in I\setminus J \}$
 so that we have the direct sum decomposition \[ V\oplus [\n, f]=  \mathfrak{b}. \] 
Following the lines of \cite{IK92} and \cite{IK91}, we perform a supersymmetric Drinfeld Sokolov reduction on our universal Lax operators.
\begin{lemma} \label{Lem:Lax_can}
For all $\mathcal{Q}$ in $\mathcal{F}$, there exists a unique $\mathcal{Q}^c \in (V \otimes \mathcal{P})_{\bar{1}} \subset \mathcal{F}$ such that $\mathcal{Q}$ and $\mathcal{Q}^c$ are gauge equivalent. $\mathcal{L}(\mathcal{Q}^c)$ is called the \emph{canonical form} of $\mathcal{L}(\mathcal{Q})$. 
\end{lemma}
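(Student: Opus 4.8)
The plan is to fix the gauge degree by degree for the $\Z$-grading of $\g$, exploiting that $\text{ad} f$ is invertible on $\n$. Write $\n=\bigoplus_{k\geq 1}\n_k$ and look for $U=\sum_{k\geq 1}U_k$ with $U_k\in(\n_k\otimes\mathcal{P})_{\bar{0}}$. Because $\text{ad}\,U$ strictly raises the $\g$-degree and the grading is bounded by $d$, the operator $e^{\text{ad}U}$ is a finite sum, and
\[ e^{\text{ad}U}\mathcal{L}(\mathcal{Q})=D+f\otimes 1+[U,f]+[U,\mathcal{Q}]-[D,U]+\tfrac{1}{2}[U,[U,f\otimes 1+\mathcal{Q}]]+\cdots. \]
I would first check that this is again a Lax operator $\mathcal{L}(\tilde{\mathcal{Q}})$ with $\tilde{\mathcal{Q}}\in\mathcal{F}$. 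Indeed $[\n,f]\subset\mathfrak{b}$ by (A-2); $\n\subset\mathfrak{b}$ since $(\n|\mathfrak{m})\subset(\g_{>0}|\g_{>0})=0$; and $\mathfrak{b}$ is an $\n$-module, so every bracket on the right lands in $\mathfrak{b}\otimes\mathcal{P}$, while the $\g_{-i}$-component stays equal to $f\otimes 1$. As $\mathfrak{b}=V\oplus[f,\n]$, it then suffices to choose $U$ so that the $[f,\n]$-component of $\tilde{\mathcal{Q}}$ vanishes.

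The fact that makes the induction run is that $\mathfrak{b}\subset\g_{\geq 1-i}$. For $l\leq -i$, assumption (A-1) gives $\mathfrak{m}_{-l}=\g_{-l}$, so $\mathfrak{b}_l=\g_l\cap\mathfrak{m}^{\perp}=0$ by nondegeneracy of the pairing of $\g_l$ with $\g_{-l}$; hence $\mathcal{Q}$, and likewise $\tilde{\mathcal{Q}}$, only involves $\g$-degrees $\geq 1-i$. Consequently the lowest $\g$-degree at which $U_{k'}$ can contribute to $\tilde{\mathcal{Q}}$ is $k'-i$, attained solely through the linear term $[U_{k'},f]\in[f,\n_{k'}]\otimes\mathcal{P}$: the derivative term $-[D,U_{k'}]$ sits in degree $k'$, the brackets $[U_{k'},\mathcal{Q}]$ in degrees $\geq k'+1-i$, and all iterated brackets even higher.

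Existence then follows by induction on $k=1,2,\dots$. Assuming $U_1,\dots,U_{k-1}$ chosen so that the $[f,\n]$-component of $\tilde{\mathcal{Q}}$ vanishes in degrees $<k-i$, the degree-$(k-i)$ part of $\tilde{\mathcal{Q}}$ equals $[U_k,f]$ plus a known element $R_k$ built from $\mathcal{Q}$ and $U_1,\dots,U_{k-1}$. By (A-3) the map $\text{ad} f\colon \n_k\otimes\mathcal{P}\to [f,\n_k]\otimes\mathcal{P}$ is bijective, so $[U_k,f]=-\,\text{pr}_{[f,\n_k]}R_k$ has a unique solution $U_k\in(\n_k\otimes\mathcal{P})_{\bar{0}}$. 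Since all contributions of $U_k$ other than $[U_k,f]$ have degree $>k-i$, this does not disturb the lower degrees; after finitely many steps the whole $[f,\n]$-component is cleared, yielding $\mathcal{Q}^c\in(V\otimes\mathcal{P})_{\bar{1}}$.

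For uniqueness, suppose $\mathcal{Q}^c_1,\mathcal{Q}^c_2\in(V\otimes\mathcal{P})_{\bar{1}}$ satisfy $e^{\text{ad}U}\mathcal{L}(\mathcal{Q}^c_1)=\mathcal{L}(\mathcal{Q}^c_2)$ with $U=\sum_{k\geq k_0}U_k$, $U_{k_0}\neq 0$. In degree $k_0-i$, the only contribution of $U$ is $[U_{k_0},f]$, whereas $\mathcal{Q}^c_2-\mathcal{Q}^c_1\in V\otimes\mathcal{P}$ has zero $[f,\n]$-component; projecting onto $[f,\n_{k_0}]$ forces $[U_{k_0},f]=0$, hence $U_{k_0}=0$ by (A-3), a contradiction. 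Thus $U=0$ and $\mathcal{Q}^c_1=\mathcal{Q}^c_2$. The one genuinely delicate point is the degree bookkeeping — establishing $\mathfrak{b}\subset\g_{\geq 1-i}$, so that $[U_k,f]$ is the unique lowest-degree occurrence of $U_k$; granting this, both existence and uniqueness reduce to the invertibility (A-3) of $\text{ad} f$ on $\n$.
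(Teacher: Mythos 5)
Your argument is correct and follows essentially the same route as the paper: a degree-by-degree induction in which the splitting $\mathfrak{b}=V\oplus[f,\n]$ and the injectivity of $\mathrm{ad}\,f|_{\n}$ (A-3) determine $U_k$ and the degree-$(k-i)$ component of the canonical form uniquely at each step, with the extra (welcome) verifications that gauge transformations preserve $\mathcal{F}$ and that $\mathfrak{b}\subset\g_{\geq 1-i}$. The only point you leave implicit is that your uniqueness step, which compares two canonical forms related by a \emph{single} $e^{\mathrm{ad}\,U}$, yields the full statement because the gauge transformations form a group (Baker--Campbell--Hausdorff in the nilpotent Lie algebra $(\n\otimes\mathcal{P})_{\bar 0}$), whereas the paper gets uniqueness of the pair $(N,\mathcal{Q}^c)$ directly from the same induction.
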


\begin{proof}
We fix $\mathcal{Q}$ and simply denote $\mathcal{L}(\mathcal{Q})$ by $L$.
Let us find $N\in (\mathfrak{n} \otimes \mathcal{P})_{\bar{0}}$ such that $L^c:=e^{\text{ad} \, N} L$ is of the desired form. Denote by $L^c_{k}$, $N_k$ and $L_k$ the $(\g_k \otimes \mathcal{P})$-parts of $L^{c}$, $N$ and $L$.  We must have 
\[ L_k= L^c_{k}=0\,  \text{  for  } \, k<-i. \]
Since $L_{-i}=f\otimes 1$, the $\g_{-i+1} \otimes \mathcal{P}$-part of the equation $L^c:=e^{\text{ad} \, N} L$ is
\begin{equation} \label{step_1}
 L^c_{-i+1}=L_{-i+1}+ [N_{1}, f\otimes 1].
\end{equation}
There is a unique decomposition $L_{-i+1}=L_{-i+1}^V+L_{-i+1}^{\perp}$ such that   $L_{-i+1}^V\in V\otimes \mathcal{P}$ and  $L_{-i+1}^{\perp}\in [\mathfrak{n}, f]\otimes \mathcal{P}$. Since  $\text{ad} f|_{\mathfrak{n}}$ is injective, there exists a unique $N_{1}\in (\n \otimes \mathcal{P})_{\bar{0}}$ such that $L_{-i+1}^{\perp}=[N_{1}, f\otimes 1]$. Moreover, $L^c_{-i+1}=L_{-i+1}^{\perp}$ in \eqref{step_1} is also unique.

Let us fix $t$ such that $-i+1\leq t\leq d$. For any $t'<t$, suppose that $L^c_{t'}$ and $N_{t'+i}$  are uniquely defined .
Then the $(\g_t \otimes \mathcal{P})$-part of the equation $L^c:=e^{\text{ad} \, N}L$ can be written as
\[L^c_t=  (e^{\text{ad} N_{<t+i}} L)_t + [N_{t+i}, f\otimes 1],\]
where $N_{<t+i}= \sum_{t'<t} N_{t'+i}.$ By the induction hypothesis, $(e^{\text{ad} N_{<t+i}} L)_t$ is given, hence  $N_{t+i}$ and $L^c_t$ exist and are uniquely determined by the injectivity of  $\text{ad} f|_{\mathfrak{n}}$.
\end{proof}

\begin{definition}
We identify an element in $\mathcal{P}$ with a map from $\mathcal{F}$ to $\mathcal{P}$ as follows. Let $a \in \mathcal{P}$.  If 
$\mathcal{Q}=\sum_{t \in I}{ q^{t} \otimes  p_t}$, we denote by $a(\mathcal{Q})$ the element of $\mathcal{P}$ obtained from $a$ by substituting ${\bar{q}_t}^{(n)}$ with ${p_t}^{(n)}$ for all $n \geq 0$ and all $t \in I$.
Under this identification, $a(\mathcal{Q}_u)=a$ for all $a \in \mathcal{P}$. We extend these functions to differential operators by letting $a(\mathcal{L}(\mathcal{Q}))=a(\mathcal{Q})$ for all $\mathcal{Q} \in \mathcal{F}$.
An element $a \in \mathcal{P}$ is called {\it gauge invariant} if $a=a(\mathcal{Q})$ whenever $\mathcal{Q} \in \mathcal{F}$ is gauge equivalent to $\mathcal{Q}_u$.
\end{definition}
\begin{remark}
It follows that for a gauge invariant element $a \in \mathcal{P}$, $a(\mathcal{Q}_1)=a(\mathcal{Q}_2)$ whenever $Q_1$ and $Q_2$ are gauge equivalent in $\mathcal{F}$.
\end{remark}
\begin{definition}
 We denote the canonical form of the universal Lax operator by 
\[ \mathcal{L}^c=  D +\sum_{t\in I\setminus J} q^t \otimes w_t +f\otimes 1.\]
\end{definition}

\

\begin{lemma} \label{Lem:generator}
 The superalgebra of gauge invariant functions in $\mathcal{P}$ is  
 \[S(\C[D]\otimes \text{Span}_{\C}\{ w_t | t\in I\setminus J\}).\]
\end{lemma}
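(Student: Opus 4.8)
The plan is to show two inclusions, establishing that the superalgebra of gauge invariant functions equals $S(\C[D]\otimes \operatorname{Span}_\C\{w_t\mid t\in I\setminus J\})$. The key tool is Lemma \ref{Lem:Lax_can}, which guarantees that each $\mathcal{Q}\in\mathcal{F}$ has a unique canonical representative $\mathcal{Q}^c\in(V\otimes\mathcal{P})_{\bar 1}$ in its gauge orbit, and the identification of elements of $\mathcal{P}$ with functions on $\mathcal{F}$.

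First I would prove that each $w_t$, for $t\in I\setminus J$, is gauge invariant. By definition $w_t$ is the coefficient of $q^t$ in the canonical form $\mathcal{L}^c$ of the universal Lax operator $\mathcal{L}_u=\mathcal{L}(\mathcal{Q}_u)$. Suppose $\mathcal{Q}$ is gauge equivalent to $\mathcal{Q}_u$. Then $\mathcal{L}(\mathcal{Q})$ and $\mathcal{L}(\mathcal{Q}_u)$ are gauge equivalent, and since gauge equivalence is transitive (the group $(\n\otimes\mathcal{P})_{\bar 0}$ acts), they share the same canonical form by the uniqueness in Lemma \ref{Lem:Lax_can}. Writing out this common canonical form as $D+\sum_{t\in I\setminus J}q^t\otimes (\text{coefficient})+f\otimes 1$, the coefficient of $q^t$ computed from $\mathcal{Q}$ equals the coefficient computed from $\mathcal{Q}_u$, i.e. $w_t(\mathcal{Q})=w_t(\mathcal{Q}_u)=w_t$. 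Hence each $w_t$ is gauge invariant, and since gauge invariant functions form a subalgebra closed under $D$ (the action commutes with $D$, as gauge transformations are in $\mathbb{C}[D]\ltimes(\g\otimes\mathcal{P})$ and preserve the differential structure), the full subalgebra $S(\C[D]\otimes\operatorname{Span}_\C\{w_t\})$ consists of gauge invariant functions.

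For the reverse inclusion, let $a\in\mathcal{P}$ be gauge invariant. Given any $\mathcal{Q}\in\mathcal{F}$, let $\mathcal{Q}^c$ be its canonical form; then $\mathcal{Q}^c$ is determined by the values $w_t(\mathcal{Q})$, since $\mathcal{L}(\mathcal{Q}^c)=D+\sum_{t\in I\setminus J}q^t\otimes w_t(\mathcal{Q})+f\otimes 1$. The idea is to evaluate $a$ on the universal operator by tracking how the gauge transformation carrying $\mathcal{Q}_u$ to its canonical form $\mathcal{L}^c$ rewrites the generators. Concretely, there is a unique $N\in(\n\otimes\mathcal{P})_{\bar 0}$ with $e^{\operatorname{ad}N}\mathcal{L}_u=\mathcal{L}^c$; the components of $\mathcal{Q}_u$, namely the generators $\bar q_t$, are thereby expressed as differential polynomials in the $w_t$. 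Since $a$ is gauge invariant, $a(\mathcal{Q}_u)=a(\mathcal{Q}_u^c)=a(\mathcal{L}^c)$, and the latter is manifestly a differential polynomial in $\{w_t\mid t\in I\setminus J\}$. Because $a(\mathcal{Q}_u)=a$ under our identification, we conclude $a\in S(\C[D]\otimes\operatorname{Span}_\C\{w_t\})$.

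The main obstacle I anticipate is making precise the triangularity that justifies the reverse inclusion, namely that every generator $\bar q_t$ of $\mathcal{P}$ can be recovered as a differential polynomial in the $w_t$ after applying the gauge transformation. This rests on the recursive structure in the proof of Lemma \ref{Lem:Lax_can}: the graded components $N_{t+i}$ of the transformation and the canonical coefficients are solved layer by layer using the injectivity of $\operatorname{ad}f|_\n$, so that the change of variables from $\{\bar q_t\}_{t\in I}$ to the canonical data $\{w_t\}_{t\in I\setminus J}$ together with the entries killed in the reduction is invertible in the differential-polynomial sense. I would verify carefully that this substitution is a well-defined automorphism of the relevant polynomial algebra, so that gauge invariance of $a$ forces $a$ to depend only on the canonical coordinates $w_t$. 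The parity bookkeeping from the sign conventions in \eqref{LieBracket_part1} should be routine but must be checked to ensure the $w_t$ have the asserted parities.
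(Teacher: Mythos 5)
Your proposal is correct and follows essentially the same route as the paper: gauge invariance of the $w_t$ comes from the uniqueness of the canonical form in Lemma \ref{Lem:Lax_can}, and the reverse inclusion from $a=a(\mathcal{Q}_u)=a(\mathcal{Q}_u^c)$ for gauge invariant $a$. One caution about your final paragraph: the ``main obstacle'' you anticipate is not actually needed, and the specific claim that every generator $\bar q_t$ can be recovered as a differential polynomial in the $w_t$ is false in general --- there are only $|I\setminus J|$ canonical coordinates versus $|I|$ generators, so the substitution $\bar q_t\mapsto$ (coefficient of $q^t$ in $\mathcal{Q}_u^c$) is not invertible. Fortunately no invertibility is required: the evaluation $a(\mathcal{Q}_u^c)$ by definition substitutes $w_t$ for $\bar q_t$ when $t\in I\setminus J$ and $0$ when $t\in J$, so its output automatically lies in $S(\C[D]\otimes \text{Span}_{\C}\{ w_t \mid t\in I\setminus J\})$, which is all the reverse inclusion needs.
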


\begin{proof}
For any $\mathcal{Q} \in \mathcal{F}$ we have $\mathcal{L}(\mathcal{Q}^c)= D+ \sum_{t\in I\setminus J} q^t \otimes w_t(\mathcal{Q})+ f\otimes 1$.
 If $\mathcal{Q}$ is gauge equivalent to $\mathcal{Q}_u$, then $\mathcal{Q}^c$ is gauge equivalent to both $\mathcal{Q}$ and $\mathcal{Q}_u$.  By the uniqueness in Lemma \ref{Lem:Lax_can},  $\mathcal{Q}^c=\mathcal{Q}_u^c$ hence $w_t (\mathcal{Q})= w_t.$ Therefore $S(\C[D]\otimes \text{Span}_{\C}\{ w_i | i\in I\setminus J\})$ is included in the set of gauge invariant functions in $\mathcal{P}$. 

Conversely, 
if an element $a \in\mathcal{P}$ is gauge invariant  then 
\begin{equation*}
 a=a(\mathcal{Q}_u)=a(\mathcal{Q}_u^c) \in S(\C[D]\otimes \text{Span}_{\C}\{ w_i | i\in I\setminus J\}).
 \end{equation*}
\end{proof}

The following Lemma essentially says that a function $a \in \mathcal{P}$ is gauge invariant if and only of it is infinitesimally gauge invariant.
\begin{lemma} \label{Lem:gauge_equivalent}
Let $a\in \mathcal{P}$. Then $a \in \mathcal{P}$ is gauge invariant if and only if 
\[ \textstyle \frac{d}{d\epsilon}a(e^{\text{ad}\, n \otimes \epsilon R} (\mathcal{L}_u))|_{\epsilon=0}=0\]
for any $n\otimes R\in (\mathfrak{n} \otimes \mathcal{P})_{\bar{0}}$.
\end{lemma}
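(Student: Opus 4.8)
The forward implication is essentially tautological, but it relies on one structural fact that I would establish first: for any $U\in(\mathfrak{n}\otimes\mathcal{P})_{\bar 0}$ the operator $e^{\mathrm{ad}U}\mathcal{L}_u$ is again a Lax operator $\mathcal{L}(\mathcal{Q}')$ with $\mathcal{Q}'\in\mathcal{F}$. Indeed $\mathfrak{n}\subset\mathfrak{b}$ (since $\mathfrak{n}\perp\mathfrak{m}$ for degree reasons, both lying in $\g_{>0}$ and the form pairing $\g_k$ with $\g_{-k}$), $\mathfrak{b}$ is an $\mathfrak{n}$-module, and $[\mathfrak{n},f]\subset\mathfrak{b}$ by (A-2); hence $\mathrm{ad}U$ sends each of $D$, $f\otimes 1$ and $\mathfrak{b}\otimes\mathcal{P}$ into $\mathfrak{b}\otimes\mathcal{P}$ while raising the $\Z$-degree by at least one, so $\mathcal{Q}'\in\mathcal{F}$ and the lowest-degree component $f\otimes1\in\g_{-i}\otimes\mathcal{P}$ is left unchanged. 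Consequently, if $a$ is gauge invariant then $e^{\mathrm{ad}(\epsilon\,n\otimes R)}\mathcal{L}_u$ is gauge equivalent to $\mathcal{L}_u$ for every $\epsilon$, so $\epsilon\mapsto a(e^{\mathrm{ad}(\epsilon\,n\otimes R)}\mathcal{L}_u)$ is the constant $a$ and its derivative at $\epsilon=0$ vanishes.

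For the converse the plan is to read off the infinitesimal gauge transformation as a genuine derivation of $\mathcal{P}$, and then to propagate the hypothesis (which lives at the single point $\mathcal{Q}_u$) along the whole gauge orbit. Fix $n\otimes R\in(\mathfrak{n}\otimes\mathcal{P})_{\bar 0}$. For arbitrary $\mathcal{Q}\in\mathcal{F}$ define $\mathcal{Q}_\delta\in\mathcal{F}$ by $\mathcal{L}(\mathcal{Q}_\delta):=e^{\mathrm{ad}(\delta\,n\otimes R)}\mathcal{L}(\mathcal{Q})$, so that $\frac{d}{d\delta}\mathcal{L}(\mathcal{Q}_\delta)|_{\delta=0}=[n\otimes R,\mathcal{L}(\mathcal{Q})]$. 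Expanding this bracket in the basis $\{q^t\}_{t\in I}$ of $\mathfrak{b}$ gives $[n\otimes R,\mathcal{L}(\mathcal{Q})]=\sum_{t\in I}q^t\otimes G_t(\mathcal{Q})$, where the essential observation is that each $G_t(\mathcal{Q})$ arises from a fixed element $G_t:=G_t(\mathcal{Q}_u)\in\mathcal{P}$ by exactly the substitution defining the evaluation $b\mapsto b(\mathcal{Q})$; this follows from bilinearity of the bracket \eqref{LieBracket_part1} together with $[n,q^{t'}]\in\mathfrak{b}$. I would then introduce the even derivation $Y=Y_{n\otimes R}$ of $\mathcal{P}$ commuting with $D$ and determined on generators by $Y(\bar q_t)=G_t$, and prove
\[
\frac{d}{d\delta}a(\mathcal{Q}_\delta)\Big|_{\delta=0}=\big(Y(a)\big)(\mathcal{Q})\qquad\text{for all }a\in\mathcal{P}.
\]
Both sides are derivations in $a$ over the evaluation homomorphism $b\mapsto b(\mathcal{Q})$, so it suffices to check the identity on the generators $\bar q_t$, where it reduces to $G_t(\mathcal{Q})=G_t(\mathcal{Q})$.

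To finish, I would specialize the displayed identity at $\mathcal{Q}=\mathcal{Q}_u$: since $b(\mathcal{Q}_u)=b$, the hypothesis becomes $Y_{n\otimes R}(a)=0$ for every $n\otimes R$, hence $Y_U(a)=0$ for every $U\in(\mathfrak{n}\otimes\mathcal{P})_{\bar 0}$ by linearity in the gauge direction. Now fix such a $U$ and set $\mathcal{L}_\epsilon:=e^{\mathrm{ad}(\epsilon U)}\mathcal{L}_u=\mathcal{L}(\mathcal{Q}_\epsilon)$. The one-parameter group law, combined with the displayed identity applied at $\mathcal{Q}=\mathcal{Q}_{\epsilon_0}$, yields $\frac{d}{d\epsilon}a(\mathcal{Q}_\epsilon)|_{\epsilon=\epsilon_0}=\big(Y_U(a)\big)(\mathcal{Q}_{\epsilon_0})=0$ for every $\epsilon_0$. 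Because the $\Z$-grading of $\g$ is bounded, $(\mathrm{ad}U)^k\mathcal{L}_u=0$ for large $k$, so $\mathcal{Q}_\epsilon$ and therefore $a(\mathcal{Q}_\epsilon)$ are polynomial in $\epsilon$; a polynomial with identically vanishing derivative is constant, whence $a(e^{\mathrm{ad}U}\mathcal{L}_u)=a(\mathcal{Q}_1)=a(\mathcal{Q}_0)=a$. Thus $a$ is gauge invariant.

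I expect the main obstacle to be the derivation identity $\frac{d}{d\delta}a(\mathcal{Q}_\delta)|_{\delta=0}=\big(Y(a)\big)(\mathcal{Q})$, and in particular the verification that the entire $\mathcal{Q}$-dependence of $[n\otimes R,\mathcal{L}(\mathcal{Q})]$ is carried by the evaluation map, so that a single $\mathcal{Q}$-independent derivation $Y$ computes the infinitesimal gauge flow at every point of the orbit. The sign bookkeeping produced by the parity grading in \eqref{LieBracket_part1} and in the chain rule must be tracked with care; the remaining points (polynomiality in $\epsilon$ and the passage from a vanishing derivative to constancy) are routine.
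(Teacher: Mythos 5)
Your proof is correct and follows essentially the same route as the paper: the key step in both is that the infinitesimal gauge variation of $a$ at an arbitrary point of the orbit is the evaluation at that point of the variation at $\mathcal{Q}_u$ (the paper phrases this as "treating $R$ as an indeterminate," which you formalize as the $\mathcal{Q}$-independent derivation $Y$), after which one integrates along the orbit using the group law. Your version is slightly more careful than the paper's in two places — passing from pure tensors $n\otimes R$ to general $U\in(\mathfrak{n}\otimes\mathcal{P})_{\bar 0}$ by linearity of $Y_U$, and invoking polynomiality in $\epsilon$ to go from vanishing derivative to constancy — but these are refinements of the same argument, not a different one.
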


\begin{proof}
By definition, if $a \in \mathcal{P}$ is gauge invariant then 
$ \textstyle \frac{d}{d\epsilon}a(e^{\text{ad}\,  n \otimes \epsilon R} (\mathcal{L}_u))|_{\epsilon=0}=0.$ Let us show the converse.
 It is clear that  
 \begin{equation*}
  \frac{d}{d\epsilon}a(e^{\text{ad} n \otimes \epsilon R} (\mathcal{L}_u))|_{\epsilon=0}=0 \implies  \frac{d}{d\epsilon}a(e^{\text{ad} n \otimes \epsilon R} (\mathcal{L}))|_{\epsilon=0}=0
  \end{equation*}
   for any Lax operator $\mathcal{L}(\mathcal{Q})$ since $ \frac{d}{d\epsilon}a(e^{\text{ad} n \otimes \epsilon R} (\mathcal{L}_u))|_{\epsilon=0}(\mathcal{Q})= \frac{d}{d\epsilon}a(e^{\text{ad} n \otimes \epsilon R} (\mathcal{L}(\mathcal{Q})))|_{\epsilon=0}$, where we treated $R$ as an indeterminate.  In particular we have \[\textstyle \frac{d}{d\epsilon_1}a(e^{\text{ad} n \otimes \epsilon_1 R} (e^{\text{ad} n \otimes \epsilon_2 R} \mathcal{L}_u))|_{\epsilon_1=0}= \frac{d}{d\epsilon_1}a(e^{\text{ad} n \otimes \epsilon_1 R} (\mathcal{L}_u))|_{\epsilon_1=\epsilon_2}=0\] for any constant $\epsilon_2$. Therefore  $ \frac{d}{d\epsilon}a(e^{\text{ad} n \otimes \epsilon R}  \mathcal{L}_u)=0$ implying that $a(e^{\text{ad} n \otimes \epsilon R}. \mathcal{L}_u)=a$ for all $\epsilon$, 
 Hence $a$ is gauge invariant.
\end{proof}

\begin{lemma} \label{Lem:equiv}
Recall the definition of the ideal $\mathcal{I}_{f, \mathfrak{m}}$ in Proposition \ref{Prop: new W-algebra}.
For  $a \in \mathcal{P}$ and $n\otimes R \in (\n \otimes \mathcal{P})_{\bar{0}}$, let $X_{2n_0+n_1}\in \mathcal{P}, n_0 \geq 0, n_1\in \{0,1\}$ be defined by the relation 
\begin{equation} \label{eqn:equiv_1}
 \textstyle  \frac{d}{d\epsilon} a(e^{\text{ad} n \otimes \epsilon R} (\mathcal{L}_u))|_{\epsilon=0}= \sum_{n_0 \geq 0, n_1\in \{0,1\}}(-1)^{n_0} X_{2n_0+n_1}  R^{(2 n_0+ n_1)}.
\end{equation}
Then
\begin{equation} \label{eqn:equiv_2}
\textstyle  \{\overline{n} {}_\chi a\}_1 + (-1)^{p(n) p(X)} \sum_{n_0 \geq 0, n_1\in \{0,1\}} X_{2n_0+n_1} \chi^{2 n_0+ n_1} \in  \mathcal{I}_{f, \mathfrak{m}}[\chi],
\end{equation}
where $\{ \ {}_\chi\ \}_1$ is the first $\chi$-bracket \eqref{Eqn:affine} of the affine SUSY PVA $\mathcal{P}(\overline{\g})$.
\end{lemma}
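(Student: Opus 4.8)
The plan is to reduce the identity to the case of a single generator and to compute both sides of \eqref{eqn:equiv_2} explicitly there. Both $\{\bar n{}_\chi a\}_1$ and, by \eqref{eqn:equiv_1}, each $X_{2n_0+n_1}$ depend linearly on $a$, so the whole statement is linear in $a$. Moreover the map $a\mapsto\frac{d}{d\epsilon}a(e^{\text{ad}\, n\otimes\epsilon R}(\mathcal{L}_u))|_{\epsilon=0}$ is a derivation of $\mathcal{P}$ (by the chain rule, since $a\mapsto a(\mathcal{Q})$ is an algebra homomorphism), $a\mapsto\{\bar n{}_\chi a\}_1$ satisfies the right Leibniz rule, and $\mathcal{I}_{f,\mathfrak{m}}$ is a differential ideal; hence it is enough to establish \eqref{eqn:equiv_2} on the algebra generators $\bar q_t^{(m)}$ of $\mathcal{P}=S(\C[D]\otimes\overline{\mathfrak{b}}_-)$. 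Since the substitution $a\mapsto a(\mathcal{Q})$ commutes with $D$, so that $\frac{d}{d\epsilon}(Da)(\cdots)|_0=D\frac{d}{d\epsilon}a(\cdots)|_0$, the case of general $m$ would follow from $m=0$ by the right sesquilinearity axiom $\{\bar n{}_\chi Db\}_1=(-1)^{p(n)}(D+\chi)\{\bar n{}_\chi b\}_1$ combined with $D\chi+\chi D=-2\chi^2$. I would therefore concentrate on $a=\bar q_t$.

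For the base case I would first expand the infinitesimal gauge transformation. Using the bracket \eqref{LieBracket_part1} and $[D,c\otimes u]=(-1)^{p(c)}c\otimes u'$,
\[ \frac{d}{d\epsilon}e^{\text{ad}\, n\otimes\epsilon R}(\mathcal{L}_u)\Big|_{\epsilon=0}=[n\otimes R,\mathcal{L}_u]=-(-1)^{p(n)}n\otimes R'+\sum_{t'\in I}(-1)^{p(q^{t'})p(R)}[n,q^{t'}]\otimes R\bar q_{t'}+(-1)^{p(R)}[n,f]\otimes R. \]
By (A-2) and the fact that $\mathfrak{b}$ is an $\n$-module, each term lies in $\mathfrak{b}\otimes\mathcal{P}$, so this equals $\sum_{t}q^t\otimes\delta p_t$, with $\delta p_t$ obtained by pairing with $q_t$. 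Writing $n=\sum_s(n|q_s)q^s$, the $R'$-coefficient of $\delta p_t$ is the scalar $-(-1)^{p(n)}(n|q_t)$ and its $R$-coefficient is $\sum_{t'}(-1)^{p(q^{t'})p(R)}([n,q^{t'}]|q_t)\bar q_{t'}+(-1)^{p(R)}([n,f]|q_t)$; for $a=\bar q_t$ these give $X_1$ and $X_0$. On the bracket side, $\{\bar n{}_\chi\bar q_t\}_1=(-1)^{p(n)}(\overline{[n,q_t]}+\chi(n|q_t))$, and since $\mathcal{P}(\overline\g)=\mathcal{P}\oplus\mathcal{I}_{f,\mathfrak{m}}$, modulo $\mathcal{I}_{f,\mathfrak{m}}$ one has $\overline{[n,q_t]}\equiv\overline{[n,q_t]_{\mathfrak{b}_-}}+(f|[n,q_t])$.

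The matching of the two computations rests on three points. The coefficient of $\chi$ matches the $R'$-coefficient, both being $\pm(n|q_t)$. The constant ($\chi^0$) terms match via invariance of the form: expanding $\overline{[n,q_t]_{\mathfrak{b}_-}}$ in the basis $\{\bar q_{t'}\}$ produces coefficients $([n,q_t]|q^{t'})=\pm(n|[q_t,q^{t'}])$, which are identified with the coefficients $([n,q^{t'}]|q_t)=\pm(n|[q^{t'},q_t])$ appearing in $X_0$; likewise the scalar $(f|[n,q_t])=\pm([n,f]|q_t)$ coming from the reduction modulo $\mathcal{I}_{f,\mathfrak{m}}$ is exactly the $[n,f]\otimes R$ contribution to $X_0$, which is where (A-2) is essential in guaranteeing $[n,f]\in\mathfrak{b}$. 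Finally, to pass from $m=0$ to general $m$ I would use that $(D+\chi)^m$ applied to a central term $\chi c$ with $Dc=0$ gives $(-1)^{n_0}\chi^{m+1}c$ with $m+1=2n_0+n_1$, a direct consequence of $D\chi+\chi D=-2\chi^2$; this reproduces exactly the sign $(-1)^{n_0}$ that weights $X_{2n_0+n_1}$ in \eqref{eqn:equiv_1}, so that the powers $R^{(2n_0+n_1)}$ on the gauge side correspond to $\chi^{2n_0+n_1}$ on the bracket side.

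The main obstacle is purely the organization of the Koszul signs. One must track the parity factors $(-1)^{S(\cdots)}$ from the master formula \eqref{Eqn:Master formula2}, the signs incurred when moving $R$ (resp. $\chi$) past elements of $\mathcal{P}$ to bring every expression into the normal form $\sum_k X_k R^{(k)}$ (resp. $\sum_k X_k\chi^k$), and the alternating signs from $(D+\chi)^m$, and then verify that they assemble into the single prefactor $(-1)^{p(n)p(X)}$ of \eqref{eqn:equiv_2}. The same care is required to confirm that the derivation and Leibniz reductions of the first paragraph respect these normal forms; once the signs are regimented by inducting on the polynomial degree of $a$, exactly as in the proof of Theorem \ref{Thm:master}, the identity follows.
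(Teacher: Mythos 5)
Your proposal follows essentially the same route as the paper's proof: verify the identity on the generators $\bar a$, $a\in\mathfrak{b}_-$, by explicitly computing $[n\otimes R,\mathcal{L}_u]$ and matching it against $(-1)^{p(n)}(\overline{[n,a]}+\chi(n|a))$ modulo $\mathcal{I}_{f,\mathfrak{m}}$, then propagate to all of $\mathcal{P}$ through the Leibniz rule for products and sesquilinearity for $D$-derivatives, with the weights $(-1)^{n_0}$ accounted for by $D\chi+\chi D=-2\chi^2$. The sign bookkeeping you defer is exactly what the paper carries out in its induction steps, and it closes as you expect.
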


\begin{proof}
Let us show the lemma for $\bar a \in \overline{\mathfrak{b}}_-$. Observe that
\begin{equation} \label{Eqn:gauge_invariant_first}
\begin{aligned}
&  \textstyle \bar{a}([n\otimes R, D+ \sum_{t\in I} q^t \otimes \bar{q}_t + f\otimes 1]) \\
& \textstyle =  \bar{a}((-1)^{p(n)+1}n\otimes R'+ \sum_{t\in I} (-1)^{p(n)} [n, q^t]\otimes \bar{q}_t R + (-1)^{p(n)}[n,f] \otimes R )\\ 
&  \textstyle = (-1)^{p(n)+1} (n|a) R' +  \sum_{t\in I}(-1)^{p(n)} ([n, q^t]|a) \bar{q}_t R + (-1)^{p(n)} ([n,f]|a) R \\
& \textstyle = (-1)^{p(n)p(a)+1} ( \, (n|a) D+ \overline{\rho([n,a])} +(f|[n,a]) \, ) R, 
 \end{aligned}
 \end{equation}
 where $\rho: \g \to \mathfrak{b}_-$ is the orthogonal projection map. Recall that  
 \begin{equation}
 \{ \bar{n}{}_\chi \bar{a}\}_1 = (-1)^{p(n)}( \overline{[n,a]} + \chi (n|a)).
 \end{equation}
Hence the lemma holds for $\bar{a}$ with $X_{0}= (-1)^{p(n)p(a)+1} ( \, \overline{\rho([n,a])} +(f|[n,a]) \, )$ and $X_{1}=(-1)^{p(n)p(a)+1}  (n|a) $.
 
 Suppose  that the lemma is true for $a,b \in \mathcal{P}$, with corresponding coefficients $X_n$ and $Y_n$.  Then we have
  \begin{equation}
\begin{aligned}
 \textstyle  \frac{d}{d\epsilon} ab(e^{\text{ad} n \otimes \epsilon R} (\mathcal{L}_u))|_{\epsilon=0} & \textstyle  = a \sum_{n_0 \geq 0, n_1\in \{0,1\}}(-1)^{n_0} Y_{2n_0+n_1}  R^{(2n_0+n_1)} \\
& \textstyle  +(-1)^{p(a)p(b)}  b \sum_{n_0 \geq 0, n_1\in \{0,1\}} (-1)^{n_0}X_{2n_0+n_1}  R^{(2n_0+ n_1)}.\\
\end{aligned}
\end{equation}
On the other hand, by the Leibniz rule we have 
\begin{equation*}
\{ \bar{n}{}_\chi ab\}_1 = (-1)^{p(n)p(a)} a\{\bar{n}{}_\chi b\}_1 + (-1)^{ (p(n)+ p(a)) p(b)} b\{\bar{n}{}_\chi a\}_1.
\end{equation*} 
Hence
\begin{equation}
\begin{aligned}
 \{\bar{n}{}_\chi ab\}_1& + \textstyle (-1)^{p(n)p(a)} (-1)^{p(n)p(b)} a \sum_{n_0 \geq 0, n_1\in \{0,1\}}  Y_{2n_0+n_1}  \chi^{2n_0+n_1}\\
 & \textstyle +(-1)^{(p(n)+p(a))p(b)} (-1)^{p(n)p(a)} b \sum_{n_0 \geq 0, n_1\in \{0,1\}} X_{2n_0+n_1}  \chi^{2n_0+n_1}\\
 & \in  \mathcal{I}_{f, \mathfrak{m}}[\chi], 
 \end{aligned}
 \end{equation}
 so that the lemma holds for $ab$. Similarly, let us assume that the lemma holds for $a \in \mathcal{P}$ with corresponding coefficients $X_n$. Then 
 \begin{equation}
 \begin{aligned}
 &  \textstyle  \frac{d}{d\epsilon} a'(e^{\text{ad} n \otimes \epsilon R} (\mathcal{L}_u))|_{\epsilon=0}\\   & \textstyle  =  \textstyle D (\sum_{n_0 \geq 0, n_1=0,1} (-1)^{n_0}X_{2n_0+n_1} R^{(2n_0+n_1)})\\
  & = \textstyle  \sum_{n_0 \geq 0, \, n_1\in \{0,1\}} (-1)^{n_0} X_{2n_0+n_1}' R^{(2n_0+n_1)} \\
  &\textstyle  +(-1)^{p(a)+p(n)} \sum_{n_0 \geq 0,n_1\in \{0,1\}}\left(-(-1)^{n_0} X_{2n_0+1} R^{2(n_0+1)}+ (-1)^{n_0}X_{2n_0} R^{(2n_0+1)}\right).
 \end{aligned}
 \end{equation}
  Here we used $p(X_{2n_0+ n_1})= p(a)+p(n)+n_1$.
Since $ \{\bar{n}{}_\chi a'\}_1 = (-1)^n (D+\chi) \{\bar{n}{}_\chi a\}_1$, 
 \begin{equation} \label{DA}
  \{\bar{n}{}_\chi a'\}_1 + (-1)^{p(n)(p(a)+1)}\textstyle  (\chi+D)\sum_{n_0 \geq 0, n_1\in \{0,1\}} X_{2n_0+ n_1}  \chi^{2n_0+n_1}\in \mathcal{I}_{f, \mathfrak{m}}[\chi].
  \end{equation}
The second term in \eqref{DA} can be rewritten as
   \begin{equation}
 \begin{aligned}
& \textstyle  (\chi+D)\sum_{n_0 \geq 0, n_1\in \{0,1\}} X_{2n_0+n_1} \chi^{2n_0+n_1}\\
 & \textstyle = \textstyle  \sum_{n_0 \geq 0, n_1\in \{0,1\}} X'_{2n_0+n_1}\chi^{2n_0+n_1}\\
 & \qquad +\textstyle  \sum_{n_0 \geq 0, n_1\in \{0,1\}}\left( (-1)^{p(a)+p(n)} X_{2n_0+1} \chi^{2n_0+2} + (-1)^{p(a)+p(n)} X_{2n_0} \chi^{2n_0+1} \right).  
 \end{aligned}
 \end{equation}
  Hence the lemma holds for any elements in $\mathcal{P}$.
\end{proof}

Since $\mathcal{P}(\overline \g)=\mathcal{P} \oplus \mathcal{I}_{f, \mathfrak{m}}$, we regard $\mathcal{W}(\g, \n, \mathfrak{m}, f)$ as a sub- differential superalgebra of $\mathcal{P}$ via the isomorphism
\begin{equation} \label{identify}
 \iota: \mathcal{P}(\overline{\g})/\mathcal{I}_{f, \mathfrak{m}} \to \mathcal{P}, \quad \pi({a}+{b}) \mapsto a, \, \, a \in \mathcal{P}, \, b \in \mathcal{I}_{f, \mathfrak{m}}.
 \end{equation}

\begin{theorem}  \label{Lem:equi_W_1} \ 
\begin{enumerate}
\item  The generalized W-algebra $\mathcal{W}(\g, \n, \mathfrak{m}, f)$ is the superalgebra of  gauge invariant functions in $\mathcal{P}$.
 \item $\mathcal{W}(\g, \n, \mathfrak{m}, f)$ is generated by $\{ w_t | t\in I\setminus J\}$ as a differential superalgebra, where $w_t$'s are in Lemma \ref{Lem:generator}.
 \end{enumerate}
\end{theorem}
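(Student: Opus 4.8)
The plan is to prove part (1) by threading together the equivalences furnished by Lemmas \ref{Lem:gauge_equivalent} and \ref{Lem:equiv}, and then to obtain part (2) directly from part (1) together with Lemma \ref{Lem:generator}.

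First I would use the splitting $\mathcal{P}(\overline{\g}) = \mathcal{P}\oplus\mathcal{I}_{f,\mathfrak{m}}$ and the isomorphism $\iota$ of \eqref{identify} to represent every class $\pi(a)\in\mathcal{P}(\overline{\g})/\mathcal{I}_{f,\mathfrak{m}}$ by its unique representative $a\in\mathcal{P}$. Since $\{\overline{\n}_\chi \mathcal{I}_{f,\mathfrak{m}}\}_1\subset\mathcal{I}_{f,\mathfrak{m}}[\chi]$ (established inside the proof of Proposition \ref{Prop: new W-algebra}), the condition defining $\mathcal{W}(\g, \n, \mathfrak{m}, f)$ depends only on $\pi(a)$, not on the choice of representative. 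Hence, under $\iota$, an element $a\in\mathcal{P}$ lies in $\mathcal{W}(\g, \n, \mathfrak{m}, f)$ if and only if $\{\bar{n}_\chi a\}_1\in\mathcal{I}_{f,\mathfrak{m}}[\chi]$ for all $n\in\n$.

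The heart of the argument is to convert this bracket condition into gauge invariance by means of Lemma \ref{Lem:equiv}. Fix $n\in\n$ and let the elements $X_{2n_0+n_1}\in\mathcal{P}$ be the coefficients associated to $a$ and $n\otimes R$ through \eqref{eqn:equiv_1}, with $R$ regarded as a formal indeterminate. By \eqref{eqn:equiv_2}, the polynomial $\{\bar{n}_\chi a\}_1$ agrees with $\sum_{n_0,n_1}X_{2n_0+n_1}\chi^{2n_0+n_1}$ up to an overall sign and modulo $\mathcal{I}_{f,\mathfrak{m}}[\chi]$. As each $X_{2n_0+n_1}$ belongs to $\mathcal{P}$ and $\mathcal{P}\cap\mathcal{I}_{f,\mathfrak{m}}=0$, the membership $\{\bar{n}_\chi a\}_1\in\mathcal{I}_{f,\mathfrak{m}}[\chi]$ holds if and only if every coefficient $X_{2n_0+n_1}$ vanishes. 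Because the jets $R^{(2n_0+n_1)}$ are independent as $R$ ranges over an indeterminate, the simultaneous vanishing of all the $X_{2n_0+n_1}$ is exactly the vanishing of $\frac{d}{d\epsilon}a(e^{\text{ad}\,n\otimes\epsilon R}(\mathcal{L}_u))|_{\epsilon=0}$ in \eqref{eqn:equiv_1}. Imposing this for every $n\otimes R\in(\n\otimes\mathcal{P})_{\bar{0}}$ and applying Lemma \ref{Lem:gauge_equivalent}, I conclude that $\pi(a)\in\mathcal{W}(\g, \n, \mathfrak{m}, f)$ if and only if $a$ is gauge invariant, which is part (1). Part (2) is then immediate: by Lemma \ref{Lem:generator} the superalgebra of gauge invariant functions in $\mathcal{P}$ equals $S(\C[D]\otimes\text{Span}_{\C}\{w_t\mid t\in I\setminus J\})$, so $\mathcal{W}(\g, \n, \mathfrak{m}, f)$ is generated by $\{w_t\mid t\in I\setminus J\}$ as a differential superalgebra.

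I expect the only genuinely delicate point to be the bookkeeping that treats $R$ as a formal indeterminate, so that the single vanishing statement $\frac{d}{d\epsilon}a(\cdots)|_{\epsilon=0}=0$ is equivalent to the simultaneous vanishing of all the coefficients $X_{2n_0+n_1}$; everything else is a formal chaining of the three preparatory lemmas, with the decomposition $\mathcal{P}(\overline{\g})=\mathcal{P}\oplus\mathcal{I}_{f,\mathfrak{m}}$ providing the crucial passage from ``lies in $\mathcal{I}_{f,\mathfrak{m}}[\chi]$'' to ``equals zero''.
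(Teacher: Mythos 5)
Your proposal is correct and follows essentially the same route as the paper: both arguments hinge on $\mathcal{P}\cap\mathcal{I}_{f,\mathfrak{m}}=0$ forcing the coefficients $X_{2n_0+n_1}$ of Lemma \ref{Lem:equiv} to vanish precisely when $\{\bar{n}{}_\chi a\}_1\in\mathcal{I}_{f,\mathfrak{m}}[\chi]$, then chain Lemma \ref{Lem:gauge_equivalent} for part (1) and Lemma \ref{Lem:generator} for part (2). The only difference is that you spell out the bookkeeping (well-definedness via $\iota$ and the independence of the jets $R^{(2n_0+n_1)}$) which the paper leaves implicit.
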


\begin{proof}
We know that $\mathcal{P} \cap \mathcal{I}_{f, \mathfrak{m}}=0$. Hence,  there is no distinct sequence $(Y_{N})_{N \geq 0}$ in $\mathcal{P}$ which can replace $(X_{N})_{N \geq 0}$  in   \eqref{eqn:equiv_2}. Therefore by  Lemma \ref{Lem:equiv},
\[ \{\overline{n}{}_\chi a\}_1\in  \mathcal{I}_{f, \mathfrak{m}}[\chi] \text{ for all} \, \,  n \in \mathfrak{n} \iff  \textstyle \frac{d}{d\epsilon}a(e^{\text{ad}\, n \otimes \epsilon R} (\mathcal{L}_u))|_{\epsilon=0}=0 \, \text{ for all} \, \, n \otimes R \in (\mathfrak{n} \otimes \mathcal{P})_{\bar{0}}. \]
Using Lemma \ref{Lem:gauge_equivalent}, we can prove (1). (2) follows from Lemma \ref{Lem:generator}.
\end{proof}

\subsection{SUSY PVA structures and Lax operators} 
We express in terms of Lax operators the Lie superalgebra brackets induced from the two compatible affine $\chi$-brackets and from the two  compatible $\chi$-brackets on the generalized W-algebra $\mathcal{W}(\g, \n, \mathfrak{m},f)$.
\begin{lemma}\label{Lemma:Lax operator-affine}
Let $a,b$ be two elements in the affine SUSY PVA $\mathcal{P}(\overline{\g})$. Then
\begin{equation} \label{universal_bracket}
\begin{aligned}
&  \textstyle \int\{ a_\chi b\}_1|_{\chi=0}=- \int (\sum_{t\in \tilde{I}} q_j \otimes \frac{\delta a}{\delta \bar{q}_t}| [ D+ \sum_{k\in \tilde{I}} q^k\otimes \bar{q}_k, \sum_{l\in \tilde{I}} q_{l} \otimes \frac{\delta{b}}{\delta{\bar{q}_{l}}}]), \\
&  \textstyle \int\{ a_\chi b\}_2|_{\chi=0}= \int (\sum_{t\in \tilde{I}} q_t \otimes \frac{\delta a}{\delta \bar{q}_t}| [s\otimes 1, \sum_{k\in \tilde{I}} q_{k} \otimes \frac{\delta{b}}{\delta{\bar{q}_{k}}}]).
\end{aligned}
\end{equation}
\end{lemma}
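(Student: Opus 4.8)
The plan is to evaluate both sides of each identity independently and then match them term by term. For the left-hand sides I would start from the local form of the master formula, equation \eqref{Eqn:mater formula-local}, which expresses $\int\{a_\chi b\}_\alpha|_{\chi=0}$ (for $\alpha=1,2$) as a double sum over $i,j\in\tilde I$ of $\int \frac{\delta b}{\delta \bar q_j}\,\{\bar q_i{}_D\bar q_j\}_{\to}\,\frac{\delta a}{\delta \bar q_i}$ weighted by Koszul signs, and substitute the explicit generator brackets \eqref{affinechibrackets}. For the right-hand sides I would expand the $\mathcal{P}$-valued bilinear form and the commutators using the definitions \eqref{LieBracket_part1} and \eqref{Bilinear form} together with $[D,q_l\otimes v]=(-1)^{p(q_l)}q_l\otimes v'$, writing $A:=\sum_t q_t\otimes\frac{\delta a}{\delta\bar q_t}$ and $B:=\sum_l q_l\otimes\frac{\delta b}{\delta\bar q_l}$ for the (basis-independent) gradients. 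The claim then reduces to identifying the two resulting double sums.

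For the second bracket this identification is essentially immediate: $\{\bar q_i{}_D\bar q_j\}_{2,\to}=(-1)^{p(q_i)+1}(s|[q_i,q_j])$ is a scalar, so the master formula produces $\sum_{i,j}(\pm)(s|[q_i,q_j])\int\frac{\delta b}{\delta\bar q_j}\frac{\delta a}{\delta\bar q_i}$, while expanding $\int(A|[s\otimes 1,B])$ gives $\sum_{i,j}(\pm)(q_i|[s,q_j])\int\frac{\delta a}{\delta\bar q_i}\frac{\delta b}{\delta\bar q_j}$; invariance of the form and skew-symmetry of the bracket turn $(q_i|[s,q_j])$ into $\pm(s|[q_i,q_j])$ and the two expressions coincide. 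For the first bracket, $\{\bar q_i{}_D\bar q_j\}_{1,\to}=(-1)^{p(q_i)}\bigl(\overline{[q_i,q_j]}+(q_i|q_j)D\bigr)$ splits the computation into two pieces. The $(q_i|q_j)D$ piece is matched with the $[D,B]$ part of $[L_0,B]$, where $L_0=D+\sum_k q^k\otimes\bar q_k$, after an integration by parts $\int D(\,\cdot\,)=0$ that moves the derivative between the $a$- and $b$-gradients. The $\overline{[q_i,q_j]}$ piece is matched with the $[\sum_k q^k\otimes\bar q_k,\,B]$ part: here I would use the completeness relation $x=\sum_k(q^k|x)q_k$ to collapse the sum $\sum_k(q_i|[q^k,q_l])\bar q_k$ into an expression of the form $\overline{[\,\cdot\,,\,\cdot\,]}$, which is exactly the combination appearing in the corresponding master-formula term.

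The key inputs throughout are the invariance $([x,y]|z)=(x|[y,z])$ and supersymmetry $(x|y)=(-1)^{p(x)p(y)}(y|x)$ of the bilinear form, the duality and completeness of the bases $\{q^t\},\{q_t\}$, and integration by parts on $\mathcal{P}/D\mathcal{P}$, together with the fact that $\frac{\delta}{\delta\bar q_i}$ annihilates total derivatives. I expect the main obstacle to be purely the sign bookkeeping: the parity reversal $p(\bar q_t)=p(q_t)+1$ must be reconciled with the parity convention $p(u_j)\equiv j$ underlying \eqref{Eqn:mater formula-local}, and this interacts with the Koszul signs coming from the odd indeterminates $\chi,D$, from the tensor-product bracket and form \eqref{LieBracket_part1}--\eqref{Bilinear form}, and from each use of invariance and skew-symmetry. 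Verifying that all of these signs conspire to produce the single overall minus sign on the first bracket and no sign on the second is the delicate part; the algebraic structure of the matching is otherwise straightforward.
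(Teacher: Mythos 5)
Your proposal matches the paper's proof: the authors likewise expand the bilinear-form/commutator expression on the right-hand side by a direct computation and identify the result with the local master formula \eqref{Eqn:mater formula-local} applied to the generator brackets \eqref{affinechibrackets}, with the same ingredients (invariance and supersymmetry of the form, duality of the bases, integration by parts for the $D$-term, and the parity shift $p(\bar q_t)=t+1$). The only difference is that the paper compresses all of this into a single displayed identity, whereas you spell out the intermediate steps.
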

\begin{proof}
\if 
Take $f=\bar{q}_\alpha$ and $g=\bar{q}_\beta$  for $\alpha, \beta \in \tilde{I}$. Then 
\begin{equation}
\begin{aligned}
 &  \textstyle - (q_\alpha \otimes 1| [ D+ \sum_{i\in \tilde{I}} q^i\otimes \overline{q}_i, q_\beta \otimes 1])  \\
 & \textstyle= (-1)^{\beta i +\beta+1}(q_\alpha|[q^i, q_\beta])\overline{q}_i \textstyle = (-1)^{\beta i +\beta+1+\alpha i +1}(q^i|[q_\alpha, q_\beta])\overline{q}_i \\
 & =(-1)^\alpha \overline{[q_\alpha, q_\beta]} = \{\overline{q}_\alpha{}_\chi \overline{q}_\beta\}_1|_{\chi=0}.
\end{aligned}
\end{equation}
Here we used $(-1)^i= (-1)^{\alpha+\beta}$ if $(q^i|[q^\alpha, q^\beta]) \neq 0$. {\red why do you need the first part of the proof}
\\
\fi
By a direct computation, we have for all $a,b \in \mathcal{P}(\overline \g)$
\begin{equation}\label{aaaux}
\begin{aligned}
& \textstyle - \int  (\sum_{t\in \tilde{I}} q_t \otimes \frac{\delta a}{\delta \bar{q}_t}| [ D+ \sum_{k\in \tilde{I}} q^k\otimes \bar{q}_k, \sum_{l\in \tilde{I}} q_{l} \otimes \frac{\delta{b}}{\delta{\bar{q}_{l}}}]) \\
& = \textstyle \int \sum_{t,k\in \tilde{I}} (-1)^{p(a)p(b)+p(b)k+(t+1)(k+1)}\frac{\delta{b}}{\delta{\bar{q}_{k}}} (-1)^{t}(\overline{[q_t, q_{k}]}+ (q_t|q_{k})D)\frac{\delta a}{\delta \bar{q}_t}.
\end{aligned}
\end{equation}
By \eqref{Eqn:mater formula-local} and since $p(\bar{q}_t) = t+1$, the RHS of \eqref{aaaux} is the same as $\{a{}_\chi b\}_1|_{\chi=0}$ and the first equality of \eqref{universal_bracket} holds. The second line of \eqref{universal_bracket} can be proved analogously.
\end{proof}
\if
We are now going to reformulate the two compatible super Lie brackets on \\ $\mathcal{W}(\g, f, \mathfrak{b},\n)/ D \mathcal{W}(\overline{\g}, f,\mathfrak{b}, \n)$ in terms of the universal Lax operator $\mathcal{L}_u$. Recall that we regard $\mathcal{W}(\g, f, \mathfrak{b},\n)$ as a subalgebra of $\mathcal{P}$ via the map $\iota$ in \eqref{identify}. 
For example, we simply write
\[\textstyle  \frac{\delta \phi}{\delta q_t} := \frac{\delta (\iota(\phi))}{\delta q_t}\]
for $\phi\in \mathcal{W}(\overline{\g}, f, \mathfrak{b}, \n)$ and $t\in I$. 
\fi

\begin{proposition}\label{pro}
The Lie superalgebra brackets of degree $\bar{1}$ on $\mathcal{W}(\g, \n, \mathfrak{m}, f)/ D \mathcal{W}(\g, \n, \mathfrak{m}, f)$ induced from   \eqref{bracket_H} and \eqref{secondbra} are given by 
\begin{equation} \label{Lie_bracket_W}
\begin{aligned}
& \textstyle \{ \int \phi, \int \psi\}_1^{\mathcal{W}}= \int\{ \phi_\chi \psi\}_1^{\mathcal{W}}|_{\chi=0}=- \int \big(\sum_{t\in I} q_t \otimes \frac{\delta \phi}{\delta \bar{q}_t}| [ \mathcal{L}_u, \sum_{k\in I} q_{k} \otimes \frac{\delta{\psi}}{\delta{\bar{q}_{k}}}] \big), \\
& \textstyle \{ \int \phi, \int \psi\}_2^{\mathcal{W}}= \int\{ \phi_\chi \psi\}_2^{\mathcal{W}}|_{\chi=0}= \int \big(\sum_{t\in I} q_t \otimes \frac{\delta \phi}{\delta \bar{q}_t}| [ s\otimes 1, \sum_{k\in I} q_{k} \otimes \frac{\delta{\psi}}{\delta{\bar{q}_{k}}}] \big).
\end{aligned}
\end{equation}
for any $\phi, \psi \in \mathcal{W}(\g, \n, \mathfrak{m}, f)$.
\end{proposition}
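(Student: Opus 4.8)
The plan is to derive both identities from the affine formulas of Lemma \ref{Lemma:Lax operator-affine} by applying the projection $\pi$ modulo the ideal $\mathcal{I}_{f,\mathfrak{m}}$. Under the identification $\iota$ of \eqref{identify}, the gauge-invariant functions $\phi,\psi\in\mathcal{W}(\g,\n,\mathfrak{m},f)$ are viewed as elements of $\mathcal{P}=S(\C[D]\otimes\overline{\mathfrak{b}}_-)$, hence as differential polynomials in the variables $\bar{q}_t$ with $t\in I$ alone. By the definitions \eqref{bracket_H} and \eqref{secondbra}, $\{\int\phi,\int\psi\}_a^{\mathcal{W}}=\int\pi(\{\phi_\chi\psi\}_a)|_{\chi=0}$ for $a=1,2$, where $\phi,\psi$ are lifted to $\mathcal{P}(\overline{\g})$ along the splitting $\mathcal{P}(\overline{\g})=\mathcal{P}\oplus\mathcal{I}_{f,\mathfrak{m}}$. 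Since $\mathcal{I}_{f,\mathfrak{m}}$ is a differential ideal, $\pi$ is a homomorphism of differential superalgebras that descends to the quotients by $D$, and, extended as $\mathrm{id}_\g\otimes\pi$, it commutes with the Lie bracket \eqref{LieBracket_part1}, the $\mathcal{P}$-valued form \eqref{Bilinear form} and the action of $D$. It therefore suffices to apply $\pi$ to the affine expressions of Lemma \ref{Lemma:Lax operator-affine}.

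First I would note that, because $\phi$ and $\psi$ involve only the variables $\bar{q}_t$ with $t\in I$, the variational derivatives $\frac{\delta\phi}{\delta\bar{q}_t}$ and $\frac{\delta\psi}{\delta\bar{q}_t}$ vanish for $t\in\tilde I\setminus I$. This collapses the two outer sums in Lemma \ref{Lemma:Lax operator-affine} from $\tilde I$ down to $I$, matching the ranges in \eqref{Lie_bracket_W}. The only sum still running over all of $\tilde I$ is the one inside the operator $D+\sum_{k\in\tilde I}q^k\otimes\bar{q}_k$ appearing in the first affine formula, and the crux of the proof is to show that this operator projects onto the universal Lax operator:
\[
\pi\Big(D+\sum_{k\in\tilde I}q^k\otimes\bar{q}_k\Big)=D+\sum_{k\in I}q^k\otimes\bar{q}_k+f\otimes 1=\mathcal{L}_u .
\]

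The main computation --- the step I expect to be the real obstacle --- is the identity $\pi\big(\sum_{k\in\tilde I\setminus I}q^k\otimes\bar{q}_k\big)=f\otimes 1$. By nondegeneracy $\mathfrak{m}$ is the orthogonal of $\mathfrak{b}$, so $\{q_k\mid k\in\tilde I\setminus I\}$ is a basis of $\mathfrak{m}$; the defining relations of $\mathcal{I}_{f,\mathfrak{m}}$ then give $\pi(\bar{q}_k)=(f|q_k)$ for such $k$, whence $\pi\big(\sum_{k\in\tilde I\setminus I}q^k\otimes\bar{q}_k\big)=\big(\sum_{k\in\tilde I\setminus I}(f|q_k)q^k\big)\otimes 1$. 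It remains to identify this sum with $f$: by the reproducing property of dual bases one has $f=\sum_{k\in\tilde I}(f|q_k)q^k$, and a degree count removes the indices in $I$, since $(f|q_k)\neq 0$ forces $q_k\in\g_i$, while assumption (A-1), $\g_{\geq i}\subset\mathfrak{m}$, places $\g_i$ inside $\mathfrak{m}$ and hence all surviving indices in $\tilde I\setminus I$. The parity signs entering through $(q^t|q_{t'})=(-1)^t(q_{t'}|q^t)=\delta_{t,t'}$ must be tracked but are routine. Combining these observations, $\pi$ applied to the first formula of Lemma \ref{Lemma:Lax operator-affine} yields the first line of \eqref{Lie_bracket_W}. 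The second line follows by the same route applied to the second affine formula: the outer sums restrict to $I$ as before, and since the inner operator is now the constant $s\otimes 1$, which $\pi$ leaves untouched, no reproducing-property step is needed --- well-definedness of $\{\phi_\chi\psi\}_2^{\mathcal{W}}$ having already been secured in Proposition \ref{keyprop}.
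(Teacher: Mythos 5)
Your proposal is correct and follows essentially the same route as the paper: both reduce the affine formulas of Lemma \ref{Lemma:Lax operator-affine} modulo $\mathcal{I}_{f,\mathfrak{m}}$, with the key step being the identity $f=\sum_{k\in\tilde I\setminus I}(f|q_k)q^k$ (so that $\sum_{k\in\tilde I}q^k\otimes\bar q_k$ projects to $\sum_{k\in I}q^k\otimes\bar q_k+f\otimes 1$), the paper merely phrasing the final comparison via the splitting $\mathcal{P}(\overline{\g})/D\mathcal{P}(\overline{\g})=\mathcal{P}/D\mathcal{P}\oplus\mathcal{I}_{f,\mathfrak{m}}/D\mathcal{I}_{f,\mathfrak{m}}$ rather than by applying $\pi$ directly.
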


\begin{proof}
We note that since $f \notin \mathfrak{b}$ we have 
$f=\sum_{k \in \tilde{I} \setminus I} (f|q_k) q^k,$
which implies that  
\begin{equation*}
\textstyle \sum_{k\in \tilde{I}} q^k\otimes \bar{q}_k= \sum_{k\in {I}} q^k\otimes \bar{q}_k+f \otimes 1, \quad (\text{mod}\ \g \otimes \mathcal{I}_{f, \mathfrak{m}}).
\end{equation*}
It follows from  Lemma \ref{Lemma:Lax operator-affine} that for any $a,b \in \mathcal{P}$ one has 
\begin{equation}\label{auxa}
 \textstyle \{ \smallint a, \smallint b\}_1=- \smallint (\sum_{t\in I} q_t \otimes \frac{\delta a}{\delta \bar{q}_t}| [ \mathcal{L}_u, \sum_{k\in I} q_{k} \otimes \frac{\delta{b}}{\delta{\bar{q}_{k}}}]), \quad  (\text{mod} \, \, \, \mathcal{I}_{f, \mathfrak{m}}/D \mathcal{I}_{f, \mathfrak{m}}).
\end{equation}
In the sequel of the proof we use the direct sum decomposition of $\mathcal{P}(\overline \g)$
\begin{equation} \label{directsum} 
\mathcal{P}(\overline \g)=\mathcal{P} \oplus \mathcal{I}_{f, \mathfrak{m}}.
\end{equation}
which implies 
$\mathcal{P}(\overline \g)/D\mathcal{P}(\overline \g) =\mathcal{P}/D\mathcal{P}  \oplus \mathcal{I}_{f, \mathfrak{m}}/D\mathcal{I}_{f, \mathfrak{m}}.$

From the definition \eqref{bracket_H} of the first $\chi$-bracket on $\mathcal{W}(\g, \n, \mathfrak{m}, f)$ we have, for all $\phi, \psi \in \mathcal{W}(\g, \n, \mathfrak{m}, f)$,
\begin{equation}\label{auxb}
\{ \smallint \phi, \smallint \psi \}_1-\{ \smallint \phi, \smallint \psi \}_1^{\mathcal{W}} \in \mathcal{I}_{f, \mathfrak{m}}/D \mathcal{I}_{f, \mathfrak{m}}, \quad 
\{ \smallint \phi, \smallint \psi \}_1^{\mathcal{W}} \in  \mathcal{P}/D\mathcal{P}.
\end{equation}
Since the RHS of \eqref{auxa} lies in $\mathcal{P}/D\mathcal{P}$, we conclude after comparing equations \eqref{auxa} and \eqref{auxb} that,  for all $\phi, \psi \in \mathcal{W}(\g, \n, \mathfrak{m}, f)$,
\begin{equation}
 \textstyle \{ \smallint \phi, \smallint \psi\}_1^{\mathcal{W}}=- \smallint (\sum_{t\in I} q_t \otimes \frac{\delta \phi}{\delta \bar{q}_t}| [ \mathcal{L}_u, \sum_{k \in I} q_{k} \otimes \frac{\delta{\psi}}{\delta{\bar{q}_{k}}}]).
\end{equation}
The second line  of \eqref{Lie_bracket_W} follows immediately from Lemma \ref{Lemma:Lax operator-affine}.
\end{proof}

\begin{definition}\label{extbrabil}
 Let $\hat{\g}:= \g(\!(z^{-1})\!)$.
Then $\mathbb{C}[D] \ltimes  (\hat{\g} \otimes \mathcal{P}(\overline{\g}))$ is a Lie superalgebra endowed with the Lie bracket and bilinear form defined by 
\begin{equation}
\begin{aligned}
  &[az^m \otimes u, b z^n \otimes v] = (-1)^{p(b)p(u)} [a,b]z^{m+n} uv, \\
 &[D, a z^m \otimes u] = (-1)^{p(a)} a z^m \otimes u', \\
 &(a z^m \otimes u|b z^n \otimes v)= (-1)^{p(b) p(u)} \delta_{m+n,0}(a|b) uv.
\end{aligned}
\end{equation}
\end{definition}

Recall that the element $s$ is in $\text{ker} \, \text{ad} \, \n$. We let
\[ \mathcal{L}_s := \mathcal{L}_u+  sz \otimes 1 \in \mathbb{C}[D] \ltimes ( \hat{\g} \otimes \mathcal{P}).\]

\begin{proposition} \label{Prop: Lax_bracket_W}
For all $\phi, \psi \in \mathcal{W}(\g, \n, \mathfrak{m}, f)$ we have
\begin{equation} \label{Eqn:compatible bracket-Lax}
\begin{aligned}
& \textstyle\{\int \phi, \int \psi\}_1^{\mathcal{W}}=- \int (\sum_{t\in I} q_t \otimes \frac{\delta \phi}{\delta \bar{q}_t}| [ \mathcal{L}_s, \sum_{k\in I} q_{k} \otimes \frac{\delta{\psi}}{\delta{\bar{q}_{k}}}]),\\
& \textstyle \{\int \phi, \int \psi\}_2^{\mathcal{W}}= \int (\sum_{t\in I} q_t \otimes \frac{\delta \phi}{\delta \bar{q}_t}| [ z^{-1}\mathcal{L}_s, \sum_{k\in I} q_{k} \otimes \frac{\delta{\psi}}{\delta{\bar{q}_{k}}}]).
\end{aligned}
\end{equation}
\end{proposition}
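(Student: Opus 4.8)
The plan is to derive both identities from Proposition \ref{pro} by exploiting the $z$-grading built into the bilinear form of Definition \ref{extbrabil}. The key observation is that this form on $\hat{\g}\otimes\mathcal{P}(\overline{\g})$ carries the factor $\delta_{m+n,0}$, so a homogeneous element of $z$-degree $m$ pairs nontrivially only with one of $z$-degree $-m$. Since the left factor $\sum_{t\in I} q_t\otimes\frac{\delta\phi}{\delta\bar{q}_t}$ lies in $\g\otimes\mathcal{P}$, it is homogeneous of $z$-degree $0$; hence in each formula only the $z$-degree-$0$ component of the right factor contributes.

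First I would handle the first bracket. Writing $\mathcal{L}_s=\mathcal{L}_u+sz\otimes 1$ and expanding by bilinearity, the inner bracket splits as $[\mathcal{L}_u,\,\cdot\,]+[sz\otimes 1,\,\cdot\,]$. Now $[\mathcal{L}_u,\sum_k q_k\otimes\frac{\delta\psi}{\delta\bar{q}_k}]$ has $z$-degree $0$ (every summand of $\mathcal{L}_u$, including the action of $D$, preserves $z$-degree), while $[sz\otimes 1,\sum_k q_k\otimes\frac{\delta\psi}{\delta\bar{q}_k}]$ has $z$-degree $1$ and therefore pairs to $0$ against the $z$-degree-$0$ left factor. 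Consequently the right-hand side of the first identity equals $-\int(\sum_t q_t\otimes\frac{\delta\phi}{\delta\bar{q}_t}\,|\,[\mathcal{L}_u,\sum_k q_k\otimes\frac{\delta\psi}{\delta\bar{q}_k}])$, which is $\{\int\phi,\int\psi\}_1^{\mathcal{W}}$ by the first line of Proposition \ref{pro}.

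For the second bracket I would use the decomposition $z^{-1}\mathcal{L}_s=z^{-1}\mathcal{L}_u+s\otimes 1$. Because $z$ is central and commutes with $D$, the element $z^{-1}\mathcal{L}_s$ is well defined and its adjoint action satisfies $[z^{-1}\mathcal{L}_s,\,\cdot\,]=z^{-1}[\mathcal{L}_s,\,\cdot\,]$ on $\g\otimes\mathcal{P}$; in particular the a priori worrisome term $z^{-1}D$ simply sends $\g\otimes\mathcal{P}$ into $\hat{\g}\otimes\mathcal{P}$ and creates no difficulty. Here $[z^{-1}\mathcal{L}_u,\sum_k q_k\otimes\frac{\delta\psi}{\delta\bar{q}_k}]$ has $z$-degree $-1$ and pairs to $0$, whereas $[s\otimes 1,\sum_k q_k\otimes\frac{\delta\psi}{\delta\bar{q}_k}]$ has $z$-degree $0$ and survives. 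Thus the right-hand side of the second identity equals $\int(\sum_t q_t\otimes\frac{\delta\phi}{\delta\bar{q}_t}\,|\,[s\otimes 1,\sum_k q_k\otimes\frac{\delta\psi}{\delta\bar{q}_k}])$, which is $\{\int\phi,\int\psi\}_2^{\mathcal{W}}$ by the second line of Proposition \ref{pro}.

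There is no substantial obstacle here: the proof is essentially a degree count once the $\delta_{m+n,0}$ selection rule is invoked. The only points demanding care are the bookkeeping of $z$-degrees and the verification that $z^{-1}\mathcal{L}_s$ and its adjoint action are well defined on $\g\otimes\mathcal{P}$ despite the $z^{-1}D$ term; both are immediate from the centrality of $z$ and the fact that $D$ does not act on $z$.
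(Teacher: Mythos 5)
Your proof is correct and follows exactly the route the paper intends: the paper's own proof consists of the single line ``Immediate by Definition \ref{extbrabil} and Proposition \ref{pro},'' and your degree count using the $\delta_{m+n,0}$ factor in the extended bilinear form, applied to the decompositions $\mathcal{L}_s=\mathcal{L}_u+sz\otimes 1$ and $z^{-1}\mathcal{L}_s=z^{-1}\mathcal{L}_u+s\otimes 1$, is precisely the argument being left implicit there. The signs also match Proposition \ref{pro} in both lines, so nothing is missing.
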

\begin{proof}
Immediate by Definition \ref{extbrabil} and Proposition \ref{pro}.
\end{proof}

\section{SUSY Drinfeld-Sokolov hierarchies} \label{Sec:DS hier}
We keep the notations and assumptions (A-1) $\cdots$ (A-4) of Section $4$. Recall that $\{q^t |  t \in \tilde{I} \}$ is a homogeneous basis of $\g$ such that $\mathfrak{b}=\text{Span}_{\mathbb{C}}\{q^t| t \in I\}$, $[f,\n]=\text{Span}_{\mathbb{C}}\{q^t| t \in J\}$ and $V=\text{Span}_{\mathbb{C}}\{q^t| t \in I \setminus J\}$, where $J \subset I \subset \tilde{I} \subset \Z$. This basis is dual to $\{q_t| t \in \tilde{I} \}$, i.e. $(q^t|q_{t'})= \delta_{t,t'}$, and $\text{Span}_{\mathbb{C}}\{q_t| t \in I\}=\mathfrak{b}_-$.  

Recall  $s\in \text{ker} \, \text{ad} \, \mathfrak{n}$ is a homogeneous odd element  with degree $j$ and $f \in \g_{-i}$. We extend the gradation of $\mathfrak{g}$ to $\hat{\mathfrak{g}}$ by giving $z$ the grading $-i-j$ and denote the $k$-th component of $\hat{\mathfrak{g}}$ by $\hat{\mathfrak{g}}_k$. Finally, we introduce 
\begin{equation} \label{lambda}
 \Lambda:=f+sz \in \hat{\mathfrak{g}}_{-i}
 \end{equation}
 and assume that
 \begin{enumerate}
 \item[(A-5)]
$ \text{ad} \hspace{1 mm} \Lambda^2$ is a semisimple element in $\hat{\mathfrak{g}}$, i.e. we have the direct sum decomposition 
 \begin{equation} \label{Eqn:semisimple}
 \hat{\mathfrak{g}}= \mathcal{K} \oplus \mathcal{I}=   \text{ker} \, \text{ad} \, \Lambda^2 \oplus \text{im}\, \text{ad} \, \Lambda^2.
  \end{equation}
  \end{enumerate}
  \if
 where $\mathcal{K}:=\text{ker} \, \text{ad} \, \Lambda^2$ and $ \mathcal{I}:=\text{im} \, \text{ad} \, \Lambda^2$. 
 \fi
 We will use the notation 
\[ a=a_{\mathcal{K}}+a_{\mathcal{I}}\]
 to decompose elements in $\hat{\g}$ according to \eqref{Eqn:semisimple}. Let $\mathcal{Z}$ be the center of $\mathcal{K}$. It is non trivial as it contains the even elements $z^k \Lambda^{2n}$ where $k, n \in \Z$. We checked that for $\g=\mathfrak{sl}(p|q)$ no matter the choice of $\Lambda$, this center is always even. We do not know if this fact is true for all (simple) Lie superalgebras and therefore we will state our results in full generality.


\subsection{Examples} \label{Subsec:Example-(g,f,m,n,s)}
We give here several families of examples of Lie superalgebras $\g$, as well as a choice of $\n, \mathfrak{m}, f,$ and $s$ that satisfy the assumptions (A-1),$\cdots$,(A-5). We were not able to find an example where $\text{dim} \, V=\text{dim} \, \mathfrak{b}-\text{dim} \, \n$ is stricly smaller than the dimension of $\g_{\bar{0}}$, the even part of $\g$. Note that for the three families of simple Lie superalgebras below, the choice of $\n, \mathfrak{m}, f,$ and $s$ satisfying the assumptions (A-1),$\cdots$,(A-5) is relatively large. In each case we pick a tuple $(\n, \mathfrak{m}, f, s)$ so that $\text{dim} \, V=\text{dim} \, \g_{\bar{0}}$.

\begin{example}
The Lie superalgebra $\g = \mathfrak{osp}(2n|2n)$ can be realized as the subalgebra of $\mathfrak{gl}(2n|2n)$ consisting of matrices of the form
\begin{equation} \label{matirx(2n|2n)}
A=\left(\begin{array}{cc|cc} A_{11} & A_{12}& A_{42}^t & A_{32}^t \\ A_{21} & -A_{11}^t & -A_{41}^t & -A_{31}^t \\ \hline  A_{31} & A_{32} & A_{33} & A_{34} \\  A_{41} & A_{42} & A_{43} & -A_{33}^t \end{array} \right),
\end{equation}
where  $A_{12},A_{21}$ (resp. $A_{34},A_{43}$) are $n\times n$ symmetric (resp. skewsymmetric) matrices.
 Let us define the degree of each block matrix $A_{ij}$ as follows:
\begin{equation}
\left(\begin{array}{cc|cc} 0 & 1& 0 & 1 \\ -1 & 0 & -1 & 0 \\ \hline  0 & 1 & 0 & 1 \\  -1 & 0 & -1 & 0 \end{array} \right).
\end{equation}
Then $\g= \g_{-1} \oplus \g_0 \oplus \g_1$.
 Take

\begin{equation*}
\Lambda=f+zs = \left(\begin{array}{cc|cc} 0 & 0&0 & z I_n \\ 0 & 0 & -I_n & 0 \\ \hline  0 & z I_n & 0 & 0 \\  I_n & 0 & 0 & 0 \end{array} \right), \, \, \,  \text{so that} \, \, \,  \Lambda^2= z \left(\begin{array}{cc|cc}  I_n & 0&0 & 0 \\ 0 & -I_n & 0 & 0 \\ \hline  0 & 0 & -I_n & 0 \\  0 & 0 & 0 & I_n \end{array} \right) 
\end{equation*}
and let \[ \mathfrak{n}=\mathfrak{m}= \g_1.\]
One can check that the tuple $(\g, \n, \mathfrak{m}, f)$ satisfies the assumptions (A-1) to  (A-5). Note that  $\text{dim} \, V=\text{dim} \, \g_{\bar{0}}$.
\end{example}
 
\begin{example}
Let $\mathfrak{g}= \mathfrak{sl}(m|n)$ for $m>n$. We split matrices in $9$ blocks (for sizes see the matrix $\Lambda$ below) and pick
\[\Lambda=f+zs = \left(\begin{array}{cc|c} 0 & 0 & z I_n  \\   0 & 0 & 0  \\ \hline  I_n & 0  & 0 \end{array} \right), \, \, \, \text{so that} \, \, \,  \Lambda^2= z \left(\begin{array}{cc|c} I_n & 0 & 0  \\   0 & 0 & 0  \\ \hline  0 & 0  & I_n \end{array} \right). 
\]
We consider the principal grading by block, that is to say 
\[ \left(\begin{array}{cc|c} 0 & 1 & 2   \\   -1 & 0 & 1  \\ \hline  -2 & -1  & 0 \end{array} \right) 
\]
and we let $\mathfrak{n}=\g_{ \geq 1}$ and $\mathfrak{m}=\g_{\geq 2}$.
The tuple $(\g, \n, \mathfrak{m}, f)$ satisfies assumptions (A-1) to  (A-5). Moreover in this case $\text{dim} \, V=\text{dim} \, \g_{\bar{0}}$.
\end{example}

\begin{example}
Let $\g=\mathfrak{sl}(n|n)/I_{n|n}$.  We consider the grading 
\begin{equation}
\left(\begin{array}{cc} 0 & 1 \\ -1& 0\end{array} \right)
\end{equation}
and we pick 
\begin{equation}
\Lambda=f+zs=\left(\begin{array}{cc} 0 & zA \\ I_n& 0\end{array} \right),\, \, \, \text{so that} \, \, \, \Lambda^2=\left(\begin{array}{cc} zA & 0 \\ 0& zA\end{array} \right),
\end{equation}
where $A$ is any diagonalizable matrix. It is straightforward to check that $\text{ad } \, \Lambda^2$ is semisimple. We let $\n=\mathfrak{m}=\g_{1}$. All conditions (A-1) to  (A-5)  are met. Note that in this example also we have $\text{dim} \, V=\text{dim} \, \g_{\bar{0}}$.
\end{example}

\subsection{Super Hamiltonian Systems on $\mathcal{P}$}\label{Subsec:main} In this section, we construct a family $(\smallint h_C)_{C \in \mathcal{Z}}$ of functionals that pairwise commute for the second Lie superalgebra bracket on $\mathcal{W}(\g, \n, \mathfrak{m}, f)/ D \mathcal{W}(\g, \n, \mathfrak{m}, f)$, from which we construct a hierarchy of compatible super Hamiltonian equations on $\mathcal{P}$ (Theorem \ref{bigthe}).
\\
\indent
For any $\mathcal{Q} \in \mathcal{F}$, we let 
\begin{equation*}
\mathcal{L}_s(\mathcal{Q})=D+\mathcal{Q}+ \Lambda \otimes 1=\mathcal{L}(\mathcal{Q})+z s \otimes 1 \in \mathbb{C}[D] \ltimes( \hat{\g} \otimes \mathcal{P)}.
\end{equation*}
We simply write 
\[ \textstyle \mathcal{L}_s:= \mathcal{L}_s(\mathcal{Q}_u), \quad \mathcal{L}^c:= \mathcal{L}_s(\mathcal{Q}^c_u)\]
where $\mathcal{Q}_u=\sum_{t \in I} {q^{t} \otimes \bar{q}_t}$ and $\mathcal{Q}_u^c=\sum_{t \in I \setminus J} {q^{t} \otimes w_t}$.

\begin{lemma} \label{first lemma_DS}
Let $\mathcal{Q}=\sum_{t \in I}{q^t \otimes a_t} \in \mathcal{F}$ and let $\mathcal{P}_{\mathcal{Q}} \subset \mathcal{P}$ be the differential superalgebra generated by the $a_t$'s. There exists a unique pair $(T_{\mathcal{Q}},H_{\mathcal{Q}}) \in (\mathcal{I} \otimes \mathcal{P}_{\mathcal{Q}}) \times (\mathcal{K} \otimes \mathcal{P}_{\mathcal{Q}})$ such that
\begin{equation} \label{TH}
e^{ad \hspace{1 mm} T_{\mathcal{Q}}} [ \mathcal{L}_s(\mathcal{Q}), \mathcal{L}_s(\mathcal{Q})]/2=D^2+H_{\mathcal{Q}}+\Lambda^2\otimes 1.
\end{equation}
\end{lemma}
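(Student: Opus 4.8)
The plan is to reduce everything to a single homological equation solved by a triangular recursion driven by the semisimple operator $\text{ad}\,\Lambda^2$. Since $\mathcal{L}_s(\mathcal{Q})=D+\mathcal{Q}+\Lambda\otimes1$ is odd in $\mathbb{C}[D]\ltimes(\hat{\g}\otimes\mathcal{P})$, its self-bracket is twice its associative square, so the first step is to expand
\[
\tfrac12[\mathcal{L}_s(\mathcal{Q}),\mathcal{L}_s(\mathcal{Q})]=D^2+\Lambda^2\otimes1+B,\qquad B:=[D,\mathcal{Q}]+[\mathcal{Q},\Lambda\otimes1]+\tfrac12[\mathcal{Q},\mathcal{Q}],
\]
where I use $[D,\Lambda\otimes1]=0$ and $\tfrac12[\Lambda\otimes1,\Lambda\otimes1]=\Lambda^2\otimes1$. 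Here $B$ is an even element of $\hat{\g}\otimes\mathcal{P}_{\mathcal{Q}}$ carrying no $D$-tail. The decisive input is a degree estimate for $B$: since $\g_{\ge i}\subset\mathfrak{m}$ by (A-1) and $\mathfrak{b}$ is the orthogonal of $\mathfrak{m}$, one gets $\mathfrak{b}\subset\g_{\ge -i+1}$, so $\mathcal{Q}$ has $\hat{\g}$-degree $\ge -i+1$; together with $\Lambda\otimes1\in\hat{\g}_{-i}$ this shows every homogeneous component of $B$ has degree $\ge -2i+1$, strictly above the degree $-2i$ of the leading term $\Lambda^2\otimes1$.

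Writing $\mathcal{M}:=D^2+\Lambda^2\otimes1+B$, I would then look for an even $T\in\mathcal{I}\otimes\mathcal{P}_{\mathcal{Q}}$ with $e^{\text{ad}\,T}\mathcal{M}=D^2+\Lambda^2\otimes1+H$ and $H\in\mathcal{K}\otimes\mathcal{P}_{\mathcal{Q}}$. Decomposing $T=\sum_{N\ge1}T_N$, $B=\sum_{k\ge -2i+1}B_k$ and $H=\sum_{k\ge -2i+1}H_k$ into $\hat{\g}$-homogeneous pieces and expanding $e^{\text{ad}\,T}\mathcal{M}=\mathcal{M}+[T,\mathcal{M}]+\tfrac12[T,[T,\mathcal{M}]]+\cdots$, I read off the equation degree by degree. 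Because $[T,\Lambda^2\otimes1]=-(\text{ad}\,\Lambda^2\otimes\text{id})(T)$ lowers degree by $2i$, the piece $T_N$ enters the degree-$(-2i+N)$ equation only through $-(\text{ad}\,\Lambda^2)(T_N)$; by (A-5) the map $\text{ad}\,\Lambda^2$ restricts to a graded bijection of $\mathcal{I}$ whose inverse raises degree by $2i$, carrying $\mathcal{I}\cap\hat{\g}_{-2i+N}$ back to $\mathcal{I}\cap\hat{\g}_N$.

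The recursion then runs cleanly. At the bottom degree $-2i+1$ the only contributions are $B_{-2i+1}$ and $-(\text{ad}\,\Lambda^2)(T_1)$, so projecting along $\hat{\g}=\mathcal{K}\oplus\mathcal{I}$ yields $T_1=(\text{ad}\,\Lambda^2)^{-1}(B_{-2i+1}^{\mathcal{I}})$ and $H_{-2i+1}=B_{-2i+1}^{\mathcal{K}}$. For the inductive step at degree $-2i+N$, I would check that every remaining contribution involves only $T_a$ with $a\le N-1$: the derivative terms $[T,D^2]$ contribute $[T_{-2i+N},D^2]$ at this degree, and $T_{-2i+N}$ was already fixed at the earlier degree $-4i+N$; the brackets $[T_a,B_b]$ with $a+b=-2i+N$ force $a\le N-1$ because $B$ vanishes below degree $-2i+1$; and every commutator of order $\ge2$ in $T$ involves at least two factors $T_a$ with $a\ge1$, hence only indices $a\le N-1$. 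Thus the $\mathcal{I}$-projection of the degree-$(-2i+N)$ equation reads $-(\text{ad}\,\Lambda^2)(T_N)+(\text{known})=0$ and determines $T_N\in\mathcal{I}\otimes\mathcal{P}_{\mathcal{Q}}$ uniquely, while its $\mathcal{K}$-projection defines $H_{-2i+N}$. All operations involved ($\text{ad}\,\Lambda^2$, its inverse on $\mathcal{I}$, brackets against $B,\Lambda^2,D^2$, and the projections) preserve $\mathcal{P}_{\mathcal{Q}}$-coefficients, so $(T_{\mathcal{Q}},H_{\mathcal{Q}})\in(\mathcal{I}\otimes\mathcal{P}_{\mathcal{Q}})\times(\mathcal{K}\otimes\mathcal{P}_{\mathcal{Q}})$, and uniqueness is immediate from the step-by-step determination.

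I expect the main obstacle to be conceptual rather than computational, on two points. First, one is forced to conjugate the even operator $\tfrac12[\mathcal{L}_s(\mathcal{Q}),\mathcal{L}_s(\mathcal{Q})]$ rather than $\mathcal{L}_s(\mathcal{Q})$ itself, precisely because only $\text{ad}\,\Lambda^2$, and not $\text{ad}\,\Lambda$, is assumed semisimple in (A-5); this is what makes the splitting $\hat{\g}=\mathcal{K}\oplus\mathcal{I}$ and the inversion of the homological operator available. Second, establishing the strict degree gap between $B$ and $\Lambda^2\otimes1$ is the crux: it forces $T$ into strictly positive degrees, makes the recursion triangular, and guarantees that $e^{\text{ad}\,T}$ is a well-defined automorphism with each target degree receiving only finitely many contributions, valued in the completed graded Lie superalgebra $\hat{\g}\otimes\mathcal{P}_{\mathcal{Q}}$. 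The one genuinely differential-operator-specific verification is that the $[T,D^2]$ derivative terms respect this triangular structure, which holds because they carry the same $\hat{\g}$-degree as the $T$-piece producing them.
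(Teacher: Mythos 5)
Your proof is correct and follows essentially the same route as the paper's: expand $\tfrac12[\mathcal{L}_s(\mathcal{Q}),\mathcal{L}_s(\mathcal{Q})]$ as $D^2+\Lambda^2\otimes 1+\mathcal{U}$ and determine $(T_{\mathcal{Q}},H_{\mathcal{Q}})$ degree by degree, using the splitting $\hat{\g}=\mathcal{K}\oplus\mathcal{I}$ from (A-5) to invert $\text{ad}\,\Lambda^2$ on $\mathcal{I}$ at each step. Your explicit degree bound on $B$ and the check that the recursion is triangular only spell out what the paper's induction leaves implicit.
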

\begin{proof}
We write 
\[ [\mathcal{L}_s(\mathcal{Q}), \mathcal{L}_s(\mathcal{Q})]/2=D^2+\mathcal{U}+\Lambda^2 \otimes 1\] and look for $T_{\mathcal{Q}}=T_{1}+T_{2}+\cdots$ and $H_{\mathcal{Q}}=H_{-2i+1}+H_{-2i+2}+\cdots$ satisfying \eqref{TH}. The $\hat{\g}_{-2i+1}\otimes \mathcal{P}_{\mathcal{Q}}$-part of equation \eqref{TH} reads 
\[\mathcal{U}_{-2i+1}=H_{-2i+1}+ [\Lambda^2 \otimes 1,T_{1}].\]  

By \eqref{Eqn:semisimple}, there exists a unique solution $(H_{-2i+1},T_{1})$ to this equation in $(\mathcal{K} \otimes \mathcal{P}_{\mathcal{Q}}) \times (\mathcal{I} \otimes \mathcal{P}_{\mathcal{Q}})$. 
Inductively, we can find $H_{-2i+k}$ and $T_{k}$ using the $\hat{\g}_{-2i+k}\otimes \mathcal{P}_{\mathcal{Q}}$-part of \eqref{TH} since the part of  \eqref{TH} can be reformulated as 
\[ H_{-2i+k}+[\Lambda^2 \otimes 1, T_{k}]=\text{ an expression  depending  only on $\mathcal{U}$,  $H_{<-2i+k}$ and $T_{<k}$}.\]
Such an equation can be solved uniquely for $(H_{-2i+k},T_{k}) \in (\mathcal{K} \otimes \mathcal{P}_{\mathcal{Q}}) \times (\mathcal{I} \otimes \mathcal{P}_{\mathcal{Q}})$ thanks to \eqref{Eqn:semisimple}.
\end{proof}

When $\mathcal{Q}=\mathcal{Q}_u$  (resp. $\mathcal{Q}=\mathcal{Q}_u^c$) we simply denote the solution to \eqref{TH} by $(T,H)$ (resp. $(T^c,H^c)$), i.e., 
\begin{equation} \label{TcHc notation}
T:= T_{\mathcal{Q}_u}, \quad  T^c:= T_{\mathcal{Q}^c_u}, \quad H:= H_{\mathcal{Q}_u}, \quad H^c:= H_{\mathcal{Q}^c_u}.
\end{equation}

\begin{lemma}  \label{second lemma_DS}
Let $K_{\mathcal{Q}}$ be such that 
\begin{equation} \label{TH2}
e^{ad \,  T_{\mathcal{Q}}} \mathcal{L}_s({\mathcal{Q}})=D_{\mathcal{Q}}+K_{\mathcal{Q}}+\Lambda \otimes 1,
\end{equation}
where $(T_{\mathcal{Q}},H_{\mathcal{Q}})$ solves \eqref{TH}. Then $K_{\mathcal{Q}} \in \mathcal{K} \otimes \mathcal{P}_{\mathcal{Q}}$.
\end{lemma}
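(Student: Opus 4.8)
The plan is to deduce the statement from Lemma~\ref{first lemma_DS} together with the elementary fact that any operator supercommutes with its own square. First note that, since $T_{\mathcal{Q}}\in\mathcal{I}\otimes\mathcal{P}_{\mathcal{Q}}\subset\hat{\g}\otimes\mathcal{P}_{\mathcal{Q}}$, conjugation by $e^{\operatorname{ad}T_{\mathcal{Q}}}$ leaves the $\C[D]$-part of any operator unchanged (every iterated bracket $[T_{\mathcal{Q}},-]$ lands in $\hat{\g}\otimes\mathcal{P}_{\mathcal{Q}}$); hence $D_{\mathcal{Q}}=D$ and $K_{\mathcal{Q}}=e^{\operatorname{ad}T_{\mathcal{Q}}}\mathcal{L}_s(\mathcal{Q})-D-\Lambda\otimes1$ lies in $\hat{\g}\otimes\mathcal{P}_{\mathcal{Q}}$. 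Extending the $\Z$-grading of $\hat{\g}$ by $\deg D=0$ and recalling $\Lambda\in\hat{\g}_{-i}$, the lowest-degree term of $\mathcal{L}_s(\mathcal{Q})$ is $\Lambda\otimes1$ while $\operatorname{ad}T_{\mathcal{Q}}$ strictly raises degree; therefore $K_{\mathcal{Q}}=\sum_{N\geq -i+1}K_N$ with $K_N\in\hat{\g}_N\otimes\mathcal{P}_{\mathcal{Q}}$, and I write $\tilde{\mathcal{L}}:=e^{\operatorname{ad}T_{\mathcal{Q}}}\mathcal{L}_s(\mathcal{Q})=D+\Lambda\otimes1+K_{\mathcal{Q}}$.

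Since $\mathcal{L}_s(\mathcal{Q})$ is odd we have $\mathcal{L}_s(\mathcal{Q})^2=\tfrac12[\mathcal{L}_s(\mathcal{Q}),\mathcal{L}_s(\mathcal{Q})]$, and because $e^{\operatorname{ad}T_{\mathcal{Q}}}(X)=e^{T_{\mathcal{Q}}}Xe^{-T_{\mathcal{Q}}}$ is an automorphism of the ambient associative superalgebra, Lemma~\ref{first lemma_DS} yields $\tilde{\mathcal{L}}^2=e^{\operatorname{ad}T_{\mathcal{Q}}}(\mathcal{L}_s(\mathcal{Q})^2)=D^2+\Lambda^2\otimes1+H_{\mathcal{Q}}$ with $H_{\mathcal{Q}}\in\mathcal{K}\otimes\mathcal{P}_{\mathcal{Q}}$. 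Now $\tilde{\mathcal{L}}$ commutes with its square, so $[\tilde{\mathcal{L}}^2,\tilde{\mathcal{L}}]=0$. Expanding this and discarding the brackets between the constant operators $D,D^2,\Lambda\otimes1,\Lambda^2\otimes1$ (which all vanish, using $[\Lambda^2,\Lambda]=0$), I obtain
\[
[D^2,K_{\mathcal{Q}}]+[\Lambda^2\otimes1,K_{\mathcal{Q}}]+[H_{\mathcal{Q}},D]+[H_{\mathcal{Q}},\Lambda\otimes1]+[H_{\mathcal{Q}},K_{\mathcal{Q}}]=0.
\]
The crucial point, and the reason the argument runs through $\tilde{\mathcal{L}}^2$ rather than $\tilde{\mathcal{L}}$ directly, is that this relation features $\operatorname{ad}\Lambda^2$, which by (A-5) restricts to a bijection of $\mathcal{I}$, whereas $\operatorname{ad}\Lambda$ is not controlled.

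Finally I would establish $K_N\in\mathcal{K}\otimes\mathcal{P}_{\mathcal{Q}}$ by induction on $N$. Observe that $\operatorname{ad}(\Lambda^2\otimes1)$, $\operatorname{ad}(\Lambda\otimes1)$, $\operatorname{ad}(\mathcal{K}\otimes\mathcal{P}_{\mathcal{Q}})$ and the operators $[D^2,-]$, $[-,D]$ all preserve both $\mathcal{K}\otimes\mathcal{P}_{\mathcal{Q}}$ and $\mathcal{I}\otimes\mathcal{P}_{\mathcal{Q}}$: indeed $[\Lambda,\Lambda^2]=0$ forces $\operatorname{ad}\Lambda$ and $\operatorname{ad}\Lambda^2$ to commute, $\mathcal{K}$ is a subalgebra with $[\mathcal{K},\mathcal{I}]\subset\mathcal{I}$, and $[D^2,-]$, $[-,D]$ act only on the $\mathcal{P}_{\mathcal{Q}}$-factor. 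Extracting the degree-$(N-2i)$ component of the displayed identity isolates the single term $[\Lambda^2\otimes1,K_N]$ among all contributions involving $K_N$; every remaining summand involves only $H_{\mathcal{Q}}\in\mathcal{K}\otimes\mathcal{P}_{\mathcal{Q}}$ and components $K_b$ with $b<N$, which by the induction hypothesis lie in $\mathcal{K}\otimes\mathcal{P}_{\mathcal{Q}}$, so all of them are $\mathcal{K}$-valued. Hence $[\Lambda^2\otimes1,K_N]\in(\mathcal{I}\cap\mathcal{K})\otimes\mathcal{P}_{\mathcal{Q}}=0$, and the injectivity of $\operatorname{ad}\Lambda^2$ on $\mathcal{I}$ gives $K_N\in\mathcal{K}\otimes\mathcal{P}_{\mathcal{Q}}$; the base case $N=-i+1$ is the same computation with no lower components present. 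The main obstacle is precisely this degree bookkeeping: one must verify that at degree $N-2i$ the only appearance of the top unknown $K_N$ is through $\operatorname{ad}\Lambda^2$, so that the semisimplicity hypothesis (A-5) can be applied.
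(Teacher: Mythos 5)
Your proof is correct, and it reaches the conclusion by a genuinely (if mildly) different route than the paper's. The paper expands the single identity $\tfrac12[\tilde{\mathcal{L}},\tilde{\mathcal{L}}]=D^2+H_{\mathcal{Q}}+\Lambda^2\otimes 1$ into $H_{\mathcal{Q}}=K_{\mathcal{Q}}'+\tfrac12[K_{\mathcal{Q}},K_{\mathcal{Q}}]+[\Lambda\otimes 1,K_{\mathcal{Q}}]$, so in its grade-by-grade induction the top unknown component of $K_{\mathcal{Q}}$ appears through $\operatorname{ad}(\Lambda\otimes 1)$; this forces the paper to first prove the auxiliary claim that $[\Lambda\otimes 1,X]\in\mathcal{K}\otimes\mathcal{P}$ implies $X\in\mathcal{K}\otimes\mathcal{P}$, which is where $\Lambda^2$ and $\mathcal{K}\cap\mathcal{I}=0$ secretly enter. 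You instead take one more bracket, $[\tilde{\mathcal{L}}^{2},\tilde{\mathcal{L}}]=0$, so that the top unknown appears directly through $\operatorname{ad}(\Lambda^2\otimes 1)$ and (A-5) applies with no intermediary; the only structural facts you need are the easy ones, $[\mathcal{K},\mathcal{K}]\subset\mathcal{K}$ and $[\mathcal{K},\Lambda]\subset\mathcal{K}$ (both from Jacobi and $[\Lambda^2,\Lambda]=0$). Your degree bookkeeping checks out: at degree $N-2i$ the terms $[D^2,K_{N-2i}]$, $H_{N-2i}'$, $[H_{N-i},\Lambda\otimes 1]$ and $[H_a,K_b]$ with $a\geq -2i+1$ (hence $b\leq N-1$) involve only $H_{\mathcal{Q}}$ and strictly lower components of $K_{\mathcal{Q}}$, so the induction closes, and the base case is as you say. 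The one point I would tighten is the appeal to "the ambient associative superalgebra": the cleanest justification that $\tilde{\mathcal{L}}^{2}=D^2+\Lambda^2\otimes 1+H_{\mathcal{Q}}$ is that $e^{\operatorname{ad}T_{\mathcal{Q}}}$, for $T_{\mathcal{Q}}$ even of positive degree, is a Lie superalgebra automorphism of $\C[D]\ltimes(\hat{\g}\otimes\mathcal{P})$, hence intertwines $[\,\cdot\,,\,\cdot\,]$ — which is also what the paper uses implicitly. The trade-off between the two arguments is essentially cosmetic: your identity has more terms, the paper's needs an extra lemma, but your version localizes the use of the semisimplicity hypothesis in a single step.
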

\begin{proof}
Let us show, for  $X \in \hat{\g} \otimes \mathcal{P}$, if  $ [\Lambda \otimes 1, X] \in \mathcal{K} \otimes \mathcal{P}$ then $X \in \mathcal{K}  \otimes \mathcal{P}$. Decompose $X=Y+[\Lambda^2 \otimes 1, Z]$ with $Y \in \mathcal{K}  \otimes \mathcal{P}$. We have $[\Lambda \otimes 1, X]=[\Lambda \otimes 1, Y]+[\Lambda \otimes 1, [\Lambda^2 \otimes 1, Z]] \in \mathcal{K}$. Since $[\Lambda \otimes 1, Y] \in \mathcal{K}  \otimes \mathcal{P}$ by the Jacobi identity, we deduce that $[\Lambda \otimes 1, [\Lambda^2 \otimes 1, Z]]=0$. Hence $ [\Lambda^2 \otimes1 , Z] \in \text{ker} \, \text{ad} \, \Lambda   \otimes   \mathcal{P} \subset \mathcal{K} \otimes \mathcal{P}$. Therefore $[\Lambda^2 \otimes 1,Z]=0$ and $X=Y \in \mathcal{K}  \otimes \mathcal{P}$.

Since $[D+ K_{\mathcal{Q}}+\Lambda\otimes 1, D+ K_{\mathcal{Q}}+\Lambda\otimes 1]/2= D^2 + H_{\mathcal{Q}}+ \Lambda^2\otimes 1$, we have
\begin{equation} \label{KH}
\textstyle H_{\mathcal{Q}}=K_{\mathcal{Q}}'+[K_{\mathcal{Q}},K_{\mathcal{Q}}]/2+[\Lambda\otimes 1, K_{\mathcal{Q}}].
\end{equation}
To see the lemma, let us use induction on the graded parts of $K_{\mathcal{Q}}$. The $\hat{\g}_t \otimes \mathcal{P}_{\mathcal{Q}}$-part of \eqref{KH} is
\begin{equation} \label{KH-1}
\textstyle (H_{\mathcal{Q}})_t=(K_{\mathcal{Q}})'_t+ \sum_{t'<i} [(K_{\mathcal{Q}})_{-t'}, (K_{\mathcal{Q}})_{t+t'}]/2+ [\Lambda\otimes 1, (K_{\mathcal{Q}})_{t+i}].
\end{equation}
By Lemma \ref{first lemma_DS}, the LHS of \eqref{KH-1} is in $\mathcal{K} \otimes \mathcal{P}_{\mathcal{Q}}$ for any $t >-2i$. If $t=-2i+1$, only the last term in the RHS of \eqref{KH-1} is nontrivial, and we deduce from the first part of the proof that $(K_{\mathcal{Q}})_{-i+1} \in \mathcal{K} \otimes \mathcal{P}_{\mathcal{Q}}$.
If $(K_{\mathcal{Q}})_{t+t'}$ is in $\mathcal{K} \otimes \mathcal{P}_{\mathcal{Q}}$ for any $t'<i$, it follows from \eqref{KH-1} that $[\Lambda\otimes1, (K_{\mathcal{Q}})_{t+i}]\in \mathcal{K} \otimes \mathcal{P}_{\mathcal{Q}}$. This implies that $(K_{\mathcal{Q}})_{t+i} \in \mathcal{K} \otimes \mathcal{P}_{\mathcal{Q}}$ and thus we proved that  $K_{\mathcal{Q}} \in \mathcal{K} \otimes \mathcal{P}_{\mathcal{Q}}$.
\end{proof}
Following \eqref{TcHc notation} we write
\begin{equation} \label{Kc notation}
K:= K_{\mathcal{Q}_u}, \quad  K^c:= K_{\mathcal{Q}^c_u}.
\end{equation}

\begin{lemma} \label{KW} 
There exists an even element $ \widetilde{T}\in \hat{\g} \otimes \mathcal{P}$ such that
\begin{equation} \label{kw}
e^{\text{ad} \hspace{1 mm} \widetilde{T}} \mathcal{L}_s=D+K^c+\Lambda\otimes 1\, . 
\end{equation}
\end{lemma}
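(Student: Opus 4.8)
The plan is to build $\widetilde{T}$ as a Baker--Campbell--Hausdorff combination of two gauge transformations that are already available: the one bringing the universal Lax operator into canonical form, and the one produced in Lemma~\ref{second lemma_DS}. First I would recall that by Lemma~\ref{Lem:Lax_can} there is an element $U \in (\n \otimes \mathcal{P})_{\bar{0}}$ with $e^{\text{ad}\, U}\mathcal{L}(\mathcal{Q}_u) = \mathcal{L}(\mathcal{Q}_u^c)$, and that applying \eqref{TH2} to $\mathcal{Q} = \mathcal{Q}_u^c$ gives $e^{\text{ad}\, T^c}\mathcal{L}^c = D + K^c + \Lambda \otimes 1$, where $\mathcal{L}^c = \mathcal{L}_s(\mathcal{Q}_u^c)$.

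The key observation is that $U$ leaves the extra term $sz \otimes 1$ untouched. Indeed, $U \in \n \otimes \mathcal{P}$, and by assumption (A-4) we have $[s, \n] = 0$, so $[U, sz \otimes 1] = 0$ and hence $e^{\text{ad}\, U}(sz \otimes 1) = sz \otimes 1$. Writing $\mathcal{L}_s = \mathcal{L}(\mathcal{Q}_u) + sz \otimes 1$, I would then compute
\[
e^{\text{ad}\, U}\mathcal{L}_s = e^{\text{ad}\, U}\mathcal{L}(\mathcal{Q}_u) + sz \otimes 1 = \mathcal{L}(\mathcal{Q}_u^c) + sz \otimes 1 = \mathcal{L}_s(\mathcal{Q}_u^c) = \mathcal{L}^c.
\]
Composing the two transformations yields $e^{\text{ad}\, T^c}e^{\text{ad}\, U}\mathcal{L}_s = D + K^c + \Lambda \otimes 1$, which is exactly the right-hand side of \eqref{kw}.

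It remains to merge the two exponentials into a single $e^{\text{ad}\, \widetilde{T}}$, and this is the step I would treat most carefully. Both $U$ and $T^c$ lie in the strictly positive part of the $\Z$-grading on $\hat{\g} \otimes \mathcal{P}$: the element $U$ because $\n \subset \g_{>0}$, and $T^c = T_1^c + T_2^c + \cdots$ because its graded components were constructed starting in degree $1$ in the proof of Lemma~\ref{first lemma_DS}. Since $z$ has negative degree $-i-j$ and $\g$ has bounded degrees, the grading on $\hat{\g} = \g(\!(z^{-1})\!)$ is bounded below with finite-dimensional graded pieces, and an infinite sum of components of increasing positive degree is precisely a Laurent series in $z^{-1}$, hence genuinely an element of $\hat{\g} \otimes \mathcal{P}$. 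Because $\text{ad}\, U$ and $\text{ad}\, T^c$ strictly raise degree, every fixed graded component of the Baker--Campbell--Hausdorff series receives only finitely many contributions, so $\widetilde{T} = \mathrm{BCH}(T^c, U)$ is well-defined, even (being a sum of iterated brackets of even elements), and satisfies $e^{\text{ad}\, \widetilde{T}} = e^{\text{ad}\, T^c}e^{\text{ad}\, U}$. The only real obstacle is thus this convergence point; once the grading is in place it is routine, and $\widetilde{T}$ lies in $\hat{\g} \otimes \mathcal{P}$ as claimed.
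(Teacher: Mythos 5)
Your proposal is correct and follows essentially the same route as the paper: compose the gauge transformation to canonical form (the paper writes it as $\mathcal{L}_s=e^{\text{ad}\,N}\mathcal{L}^c$ with $N\in(\n\otimes\mathcal{P})_{\bar 0}$, i.e.\ the inverse of your $e^{\text{ad}\,U}$) with $e^{\text{ad}\,T^c}$ from Lemma~\ref{second lemma_DS}, and merge the exponentials into one, which is possible because both elements have strictly positive grading. Your explicit remark that $(A$-$4)$ guarantees the $sz\otimes 1$ term is untouched is a detail the paper leaves implicit, but it is the same argument.
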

\begin{proof}
By Lemma \ref{second lemma_DS}, we have
\begin{equation}
 e^{\text{ad} T^c} \mathcal{L}^c = D+ K^c + \Lambda \otimes 1.
 \end{equation}
Let $N$ be the unique element in $(\mathfrak{n} \otimes \mathcal{P})_{\bar{0}}$ such that $ \mathcal{L}_s = e^{\text{ad} N} \mathcal{L}^c$. We have
\[ \textstyle e^{\text{ad} T^c} \mathcal{L}^c = e^{\text{ad} T^c} e^{-\text{ad} N}(\mathcal{L}_{s})=D + K^c +\Lambda \otimes 1.\]
Let $\widetilde{T} \in \hat{\g} \otimes \mathcal{P}$ be such that $e^{\text{ad} \widetilde{T}} = e^{\text{ad} T^c} e^{-\text{ad}N}$. Such $\tilde{T}$ exists since both $T^c$ and $N$ have positive gradings. The pair $(\tilde {T}, K^c)$ satisfies
\begin{equation} \label{KW_eqn}
 e^{\text{ad} \widetilde{T}}  \mathcal{L}_s =D+K^c+\Lambda \otimes 1,
 \end{equation} 
which concludes the proof.
\end{proof}

\begin{remark}
In Lemma \ref{KW}, $\widetilde{T}$ need not be in $\mathcal{I} \otimes \mathcal{P}$. However, $ T^c$ in \eqref{KW_eqn} is an element of $\mathcal{I} \otimes \mathcal{W}(\g,\n,\mathfrak{m},f)$.
\end{remark}

\begin{lemma} \label{unicity}
\[(K-K^c) \in \mathcal{K}\otimes \partial \mathcal{P}+ [\mathcal{K}, \mathcal{K}]\otimes \mathcal{P}.\] 
\end{lemma}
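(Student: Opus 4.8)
The plan is to exhibit both canonical forms as gauge transforms of the \emph{same} operator $\mathcal{L}_s$ and then compare them through a single gauge transformation. By Lemma \ref{second lemma_DS} we have $e^{\text{ad}\, T}\mathcal{L}_s = D + K + \Lambda\otimes 1$ with $K\in\mathcal{K}\otimes\mathcal{P}$, while Lemma \ref{KW} provides $\widetilde{T}$ with $e^{\text{ad}\,\widetilde{T}}\mathcal{L}_s = D + K^c + \Lambda\otimes 1$ and $K^c\in\mathcal{K}\otimes\mathcal{P}$. Since $T$ and $\widetilde{T}$ both have strictly positive grading, I may define $e^{\text{ad}\, U} := e^{\text{ad}\, T}e^{-\text{ad}\,\widetilde{T}}$ for a unique even $U\in\hat{\g}\otimes\mathcal{P}$ of positive grading, so that
\[ e^{\text{ad}\, U}(D + K^c + \Lambda\otimes 1) = D + K + \Lambda\otimes 1. \]

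The heart of the argument, and the step I expect to be the main obstacle, is to show that $U$ actually lies in $\mathcal{K}\otimes\mathcal{P}$. The key algebraic input is that $\text{ad}\,\Lambda$ preserves both $\mathcal{K}$ and $\mathcal{I}$ and is \emph{invertible} on $\mathcal{I}$: indeed $[\Lambda,[\Lambda^2,X]] = [\Lambda^2,[\Lambda,X]]$ shows $\text{ad}\,\Lambda(\mathcal{I})\subset\mathcal{I}$, and $\ker\text{ad}\,\Lambda\subset\ker\text{ad}\,\Lambda^2=\mathcal{K}$ gives $\ker\text{ad}\,\Lambda\cap\mathcal{I}=0$, whence bijectivity on each finite-dimensional graded piece by (A-5). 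I would then run an induction on the grading $U=\sum_{m\geq 1}U_{(m)}$. Assuming $U_{<m}\in\mathcal{K}\otimes\mathcal{P}$, I isolate the grade $m-i$ component of the defining equation. Since $D$ has grade $0$, $\Lambda\otimes 1$ has grade $-i$, and $K^c$ has grades $\geq -i+1$, the only term at this grade that is linear in $U_{(m)}$ is $[U_{(m)},\Lambda\otimes 1]$; all other contributions involve either $U_{<m}\in\mathcal{K}\otimes\mathcal{P}$, or $K^c,\Lambda\otimes 1\in\mathcal{K}\otimes\mathcal{P}$, hence lie in $\mathcal{K}\otimes\mathcal{P}$ and have vanishing $\mathcal{I}$-part. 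As $K-K^c\in\mathcal{K}\otimes\mathcal{P}$ also has no $\mathcal{I}$-part, projecting onto $\mathcal{I}$ leaves $[(U_{(m)})_{\mathcal{I}},\Lambda\otimes 1]=0$, and invertibility of $\text{ad}\,\Lambda$ on $\mathcal{I}$ forces $(U_{(m)})_{\mathcal{I}}=0$. Thus $U_{(m)}\in\mathcal{K}\otimes\mathcal{P}$, completing the induction.

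Once $U\in\mathcal{K}\otimes\mathcal{P}$ is established, I expand
\[ K-K^c = (e^{\text{ad}\, U}-1)(D+K^c+\Lambda\otimes 1) = [U,D] + [U,K^c+\Lambda\otimes 1] + \sum_{n\geq 2}\tfrac{1}{n!}(\text{ad}\, U)^n(D+K^c+\Lambda\otimes 1) \]
and sort the terms. The bracket $[U,D]$ is the single total-derivative contribution of a $\mathcal{K}$-valued function, landing in $\mathcal{K}\otimes\partial\mathcal{P}$. Every remaining term is a bracket of $\mathcal{K}$-valued elements: $\Lambda\otimes 1, K^c\in\mathcal{K}\otimes\mathcal{P}$, and because $\mathcal{K}$ is a subalgebra (the kernel of the even derivation $\text{ad}\,\Lambda^2$) the operator $\text{ad}\, U$ maps both $\mathcal{K}\otimes\partial\mathcal{P}$ and $[\mathcal{K},\mathcal{K}]\otimes\mathcal{P}$ into $[\mathcal{K},\mathcal{K}]\otimes\mathcal{P}$; hence all these terms lie in $[\mathcal{K},\mathcal{K}]\otimes\mathcal{P}$. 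Collecting, $K-K^c\in\mathcal{K}\otimes\partial\mathcal{P}+[\mathcal{K},\mathcal{K}]\otimes\mathcal{P}$. I note that the same computation applied verbatim to the second-order operators $\tfrac12[\mathcal{L}_s,\mathcal{L}_s]=D^2+H+\Lambda^2\otimes 1$ (with the identical gauge element $U\in\mathcal{K}\otimes\mathcal{P}$) produces the derivative term $[U,D^2]$, confirming that the total-derivative part is genuinely governed by $\partial=D^2$; this is the precise point where the odd/even interplay of the supersymmetric setting must be tracked carefully.
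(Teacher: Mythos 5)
Your proposal is correct and follows essentially the same route as the paper: define $U$ by $e^{\operatorname{ad} U}=e^{\operatorname{ad} T}e^{-\operatorname{ad}\widetilde{T}}$, show $U\in\mathcal{K}\otimes\mathcal{P}$ using that $[\Lambda\otimes 1,X]\in\mathcal{K}\otimes\mathcal{P}$ forces $X\in\mathcal{K}\otimes\mathcal{P}$, and then expand $(e^{\operatorname{ad} U}-1)(D+K^c+\Lambda\otimes 1)$. You usefully spell out the graded induction that the paper compresses into ``a similar argument as in Lemma \ref{second lemma_DS}''; the only caveat is that $[U,D]$ naturally lands in $\mathcal{K}\otimes D\mathcal{P}$ rather than $\mathcal{K}\otimes D^2\mathcal{P}$, a notational imprecision already present in the paper's statement and harmless for the application, since $\smallint$ kills $D\mathcal{P}$.
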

\begin{proof}
Let $U \in \hat{\g} \otimes \mathcal{P}$ be such that $e^{ad \, U}=e^{ad \, T}e^{-ad \, \tilde{T}}$. Such  $U$ exists since both $T$ and $\tilde{T}$ have positive gradings. It follows from Lemma \ref{second lemma_DS} and Lemma \ref{KW} that
\begin{equation}
e^{ad \hspace{1 mm} U} (D+K^c+\Lambda\otimes 1)=D+K+\Lambda\otimes 1.
\end{equation}
 A similar argument as in the proof of Lemma \ref{second lemma_DS} shows  that $U \in \mathcal{K}\otimes \mathcal{P}$. Next we write
 \begin{equation}
 K-K^c=(e^{ad \, U}-1)(D+K^c+\Lambda \otimes 1).
 \end{equation}
 Since $U \in \mathcal{K}\otimes \mathcal{P}$, it is clear that  $(e^{ad \hspace{1 mm} U}-1)D$ is in $\mathcal{K}\otimes \partial \mathcal{P}+ [\mathcal{K}, \mathcal{K}]\otimes \mathcal{P}$ and that $(e^{ad \hspace{1 mm} U}-1)(K^c+\Lambda \otimes 1)$ is in $[\mathcal{K}, \mathcal{K}]\otimes \mathcal{P}$. Hence $(K-K^c) \in \mathcal{K}\otimes \partial \mathcal{P}+ [\mathcal{K}, \mathcal{K}]\otimes \mathcal{P}$.
\end{proof}

We are now ready to define the functionals $\smallint h_C, C \in \mathcal{Z},$ on the generalized $W$-algebra  $\mathcal{W}(\g,\n,\mathfrak{m},f)$. These will be the hamiltonians of our super Hamiltonian equations \eqref{hamiltequa}, which pairwise supercommute (Theorem \ref{bigthe}).

\begin{definition}
Let $C \in \mathcal{Z}$. We define 
\begin{equation}
h_C=(K^c| C \otimes 1) \in  \mathcal{W}(\g,\n,\mathfrak{m},f).
\end{equation}
\end{definition}

\begin{corollary} \label{cor}
Let  $C \in \mathcal{Z}$.
\[
\smallint h_{C}=\smallint (K| C \otimes 1).
\]
\end{corollary}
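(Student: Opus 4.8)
The plan is to rephrase the desired identity $\smallint h_C = \smallint(K|C\otimes 1)$, where by definition $h_C = (K^c|C\otimes 1)$, as the single claim
\[
\smallint\big((K - K^c)\,|\,C\otimes 1\big) = 0 \quad \text{in } \mathcal{P}/D\mathcal{P}.
\]
Everything then reduces to understanding the pairing of $K - K^c$ against the central element $C$, and for this I would feed in the structural description of $K - K^c$ provided by Lemma \ref{unicity}, namely $K - K^c \in \mathcal{K}\otimes\partial\mathcal{P} + [\mathcal{K},\mathcal{K}]\otimes\mathcal{P}$. So I would treat the two summands separately and show that each contributes nothing to $\smallint\big((K-K^c)|C\otimes 1\big)$.

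For the first summand, consider a typical term $A\otimes\partial p$ with $A\in\mathcal{K}$ and $p\in\mathcal{P}$. By the definition of the bilinear form on $\C[D]\ltimes(\hat{\g}\otimes\mathcal{P})$ in Definition \ref{extbrabil}, pairing with $C\otimes 1$ produces a scalar multiple of $(A|C)\,\partial p$, which lies in $\partial\mathcal{P}\subseteq D\mathcal{P}$ since $\partial = D^2 = D\circ D$. Hence this term is annihilated by the projection $\smallint$ onto $\mathcal{P}/D\mathcal{P}$, and the whole first summand drops out.

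For the second summand, take a term $[A,B]\otimes p$ with $A,B\in\mathcal{K}$. Pairing with $C\otimes 1$ gives a scalar multiple of $([A,B]|C)\,p$, and by the invariance of the bilinear form one has $([A,B]|C) = \pm\,(A|[B,C])$. Here the decisive point is that $C$ lies in the center $\mathcal{Z}$ of $\mathcal{K}$ while $B\in\mathcal{K}$, so $[B,C]=0$; therefore this pairing vanishes identically, not merely modulo $D\mathcal{P}$. Combining the two cases yields $\smallint\big((K-K^c)|C\otimes 1\big) = 0$, which is exactly the corollary.

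I do not anticipate a serious obstacle: once Lemma \ref{unicity} is available the argument is essentially a bookkeeping of the bilinear form together with the defining property of the center $\mathcal{Z}$. The only point requiring a little care is to apply the invariance identity $([A,B]|C)=\pm(A|[B,C])$ with the correct Koszul signs in the $\hat{\g}\otimes\mathcal{P}$ setting, but since the right-hand side vanishes outright the precise sign is immaterial.
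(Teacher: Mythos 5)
Your proposal is correct and follows essentially the same route as the paper: the authors' one-line proof ("immediate from Lemma \ref{KW} and Lemma \ref{unicity}, after using the invariance of the bilinear form") is exactly the two-step argument you spell out, killing the $\mathcal{K}\otimes\partial\mathcal{P}$ part because $\partial\mathcal{P}\subset D\mathcal{P}$ and the $[\mathcal{K},\mathcal{K}]\otimes\mathcal{P}$ part by invariance together with the centrality of $C$ in $\mathcal{K}$.
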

\begin{proof}
Immediate from Lemma \ref{KW} and Lemma  \ref{unicity}, after using the invariance of the bilinear form $( \ | \ )$.
\end{proof}
In practice, it is often enough to compute the quadratic part of $\smallint h_C$ as a super differential polynomial in $ \mathcal{W}(\g,\n,\mathfrak{m},f)$ to deduce the linear independence of the family $(\smallint h_{z^nC})_{n \geq 0}$. The following proposition gives a formula to compute this quadratic part.
\begin{proposition} \label{indept} Recall the definition \eqref{TcHc notation} of $T^c$. Let $C \in \mathcal{Z}$.	
The linear and quadratic parts of  $ \smallint h_{z^nC}$  for $n \geq 0$ are given by
\begin{equation}
 \int (\mathcal{Q}_{\mathcal{K}}^c | C \otimes z^n) + \frac{1}{2} \int (\mathcal{Q}_{\mathcal{I}}^c|[C \otimes z^n, T_{l}]),
\end{equation}
where $T_l$ denotes the linear part of $T^c$.
\end{proposition}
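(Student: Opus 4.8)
The plan is to begin from the definition $h_{z^nC}=(K^c\,|\,z^nC\otimes 1)$, which makes sense because $z^nC$ again lies in $\mathcal{Z}$: multiplication by $z$ is central in $\hat{\g}$ and commutes with $\text{ad}\,\Lambda^2$, so $z^nC\in\ker\text{ad}\,\Lambda^2=\mathcal{K}$ and it still centralizes $\mathcal{K}$. Writing $\mathcal{C}:=z^nC\otimes 1\in\mathcal{K}\otimes\mathcal{P}$ and using \eqref{TH2} for $\mathcal{Q}=\mathcal{Q}_u^c$ in the form $K^c=e^{\text{ad}\,T^c}\mathcal{L}^c-D-\Lambda\otimes 1$, I would evaluate $\smallint(K^c\,|\,\mathcal{C})$ by transporting the exponential onto $\mathcal{C}$. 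This uses the $\text{ad}$-invariance of the form of Definition \ref{extbrabil} on $\hat{\g}\otimes\mathcal{P}$ together with integration by parts for $D$, which only produces total $D$-derivatives and hence vanishes after $\smallint$. Since $\mathcal{Q}_u^c$ is linear in the generators $w_t$ and $T^c=T_l+T_q+\cdots$ starts in degree one (with $T^c$ even), only finitely many terms of $e^{\pm\text{ad}\,T^c}$ contribute to the linear and quadratic parts; the degree-$0$ contribution $(\Lambda\otimes 1\,|\,\mathcal{C})$ is a constant and is discarded.

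Three facts drive the simplification: (i) $\Lambda\in\mathcal{K}$ and $C$ centralizes $\mathcal{K}$, so $[\Lambda\otimes 1,\mathcal{C}]=0$; (ii) $\mathcal{C}$ has constant coefficient, so $[D,\mathcal{C}]=0$; and (iii) because $\mathcal{C}$ commutes with $\Lambda^2\otimes 1$, the operator $\text{ad}\,\mathcal{C}$ preserves the eigenspaces of $\text{ad}(\Lambda^2\otimes 1)$, whence $[\mathcal{C},\mathcal{I}\otimes\mathcal{P}]\subseteq\mathcal{I}\otimes\mathcal{P}$, and in particular $[\mathcal{C},T_l]\in\mathcal{I}\otimes\mathcal{P}$. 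Since the invariant form pairs $\mathcal{K}$ with $\mathcal{K}$ and $\mathcal{I}$ with $\mathcal{I}$ (orthogonality of kernel and image of the skew operator $\text{ad}(\Lambda^2\otimes 1)$), the degree-one part of $\smallint(\mathcal{Q}_u^c+\Lambda\otimes 1\,|\,e^{-\text{ad}\,T^c}\mathcal{C})$ collapses to $\smallint(\mathcal{Q}_{\mathcal{K}}^c\,|\,\mathcal{C})$: the contribution $\smallint(\Lambda\otimes1\,|\,-[T_l,\mathcal{C}])$ vanishes by (i), and the analogous degree-one term from conjugating $D$ is a total derivative by (ii). This yields the first term of the claimed formula.

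For the quadratic part I would collect every degree-two contribution. From $\smallint(\mathcal{Q}_u^c+\Lambda\otimes1\,|\,e^{-\text{ad}\,T^c}\mathcal{C})$ these are $\smallint(\mathcal{Q}_u^c\,|\,-[T_l,\mathcal{C}])=\smallint(\mathcal{Q}_u^c\,|\,[\mathcal{C},T_l])$ and $\tfrac12\smallint(\Lambda\otimes1\,|\,[T_l,[T_l,\mathcal{C}]])$, the term $\smallint(\Lambda\otimes1\,|\,-[T_q,\mathcal{C}])$ vanishing by (i); from conjugating $D$ the only surviving term is $\tfrac12\smallint([T_l,[T_l,D]]\,|\,\mathcal{C})$, since $\smallint([T_q,D]\,|\,\mathcal{C})$ is a total derivative. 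Pulling one $T_l$ out of each double bracket by $\text{ad}$-invariance and $D$-integration by parts, the last two terms combine into $\tfrac12\smallint\big([T_l,D]+[T_l,\Lambda\otimes1]\,\big|\,[\mathcal{C},T_l]\big)$. The crucial input is the linear part of Lemma \ref{second lemma_DS}: projecting the degree-one identity $\mathcal{Q}_u^c+[T_l,D]+[T_l,\Lambda\otimes1]=K^c_{(1)}\in\mathcal{K}\otimes\mathcal{P}$ onto $\mathcal{I}$ gives $[T_l,D]+[T_l,\Lambda\otimes1]_{\mathcal{I}}=-\mathcal{Q}_{\mathcal{I}}^c$. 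Because $[\mathcal{C},T_l]\in\mathcal{I}$ by (iii), only $\mathcal{I}$-components survive each pairing: the combined term becomes $-\tfrac12\smallint(\mathcal{Q}_{\mathcal{I}}^c\,|\,[\mathcal{C},T_l])$, while $\smallint(\mathcal{Q}_u^c\,|\,[\mathcal{C},T_l])=\smallint(\mathcal{Q}_{\mathcal{I}}^c\,|\,[\mathcal{C},T_l])$. Their sum is $\tfrac12\smallint(\mathcal{Q}_{\mathcal{I}}^c\,|\,[\mathcal{C},T_l])$, the second term of the formula.

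The main obstacle is precisely this quadratic bookkeeping: one must enumerate all degree-two terms (including the easily missed contribution from conjugating the derivation $D$), repeatedly apply $\text{ad}$-invariance and $D$-integration by parts while tracking the super-signs by parity, and then use both the linear relation from Lemma \ref{second lemma_DS} and the inclusion $[\mathcal{C},T_l]\in\mathcal{I}$ to see the delicate cancellation that produces the coefficient $\tfrac12$. The evenness of $T^c$ keeps most signs inert, so the genuine difficulty is organizational rather than in any single identity.
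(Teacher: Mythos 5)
Your proof is correct and follows essentially the same route as the paper: both expand the dressing relation \eqref{TH2} to first and second order in the generators, use the linear identity $\mathcal{Q}^c_{\mathcal{I}}=T_l'+[\Lambda\otimes 1,T_l]$ (equivalently $[T_l,D]+[T_l,\Lambda\otimes1]=-\mathcal{Q}^c_{\mathcal{I}}$) to resolve the quadratic terms, and finish with the orthogonality of $\mathcal{K}$ and $\mathcal{I}$, $\mathrm{ad}$-invariance of the form, and the vanishing of total $D$-derivatives under $\smallint$. The only difference is organizational: you transport $e^{\mathrm{ad}\,T^c}$ onto $C\otimes z^n$ before collecting terms, whereas the paper first computes $K_l$ and $K_q$ explicitly and then pairs with $C\otimes z^n$, but the cancellations producing the coefficient $\tfrac12$ are identical.
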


\begin{proof}
 We denote by $K_l$ and $T_l$ (resp. $K_q$ and $T_q$) the linear (resp. quadratic) part of $K^c$ and $T^c$. The linear part of equation \eqref{TH2} where $\mathcal{Q}=\mathcal{Q}^c$ splits into two ways using \eqref{Eqn:semisimple}:
\begin{equation*}
\begin{split}
\mathcal{Q}_{\mathcal{I}}^c &=T_l'+[\Lambda \otimes 1, T_l], \\
\mathcal{Q}_{\mathcal{K}}^c&=K_l.
\end{split}
\end{equation*}
We now move on to the quadratic part of $K$. It satisfies the equation
\begin{equation*} 
\begin{split}
K_q&=-T_q'-\frac{1}{2}[T_l,T_l']-[\Lambda \otimes 1,T_q]+\frac{1}{2}[T_l,[T_l, \Lambda \otimes 1]]+[T_l, \mathcal{Q}^c] \\
      &=-T_q'-[\Lambda \otimes 1,T_q]+\frac{1}{2}[T_l, \mathcal{Q}_{\mathcal{I}}^c+2\mathcal{Q}_{\mathcal{K}}^c]
      \end{split}
\end{equation*}
Since both $[\Lambda \otimes 1,T_q]$ and $[T_l,\mathcal{Q}_{\mathcal{K}}^c]$ are in $\mathcal{I}$, we get 
\begin{equation*} 
(h_{z^nC})_q=-(T_q|C \otimes z^n)'+\frac{1}{2}([T_l, \mathcal{Q}_{\mathcal{I}}^c]| C \otimes z^n), \, \, n=0,1,2,\cdots
\end{equation*}
by invariance of the bilinear form. 
Therefore for all, $n \geq 0$,
\begin{equation*}
\smallint (h_{z^nC})_q=\smallint  \frac{1}{2}(\mathcal{Q}_{\mathcal{I}}^c|[C \otimes z^n, T_l]).
\end{equation*}
\end{proof}

\begin{definition}
Let $C \in \mathcal{Z}$. We define the following element of $\hat{\g} \otimes \mathcal{P}$
\begin{equation} \label{MC}
M_C=e^{-ad \, T} (C \otimes 1).
\end{equation}
$M_C$ splits as the sum of its holomorphic and meromorphic parts
\begin{equation*}
M_C^+=\sum_{ n \geq 0} M_C^n z^n, \, \, \, \, M_C^-=\sum_{ n < 0} M_C^n z^n.
\end{equation*}
We denote by $(M_C^n)_{\mathfrak{b}_-}$
the projection of $M_C^n\in \g \otimes \mathcal{P}$ onto the space $\mathfrak{b}_- \otimes \mathcal{P}$ according to the decomposition
\begin{equation*}
\g \otimes \mathcal{P}= (\mathfrak{b}_- \otimes \mathcal{P}) \oplus (\mathfrak{m} \otimes \mathcal{P}). 
\end{equation*}
\end{definition}

It is clear from Lemma \ref{KW} that 
\begin{equation} \label{LMCcom}
[M_C, \mathcal{L}_s]=0 \, \, \, \text{for all} \, \, C \in \mathcal{Z}.
\end{equation}
In the next two lemmas, we relate $M_C$ to the variational derivative of $h_C$.

\begin{lemma}\label{Lemma for Lemma h_C}
Let $U,V\in \g \otimes \mathcal{P}$ with $U$ even. For any $\mathcal{Q} \in \mathcal{F}$, we have
\begin{equation}
\sum_{h \geq 0} [\text{ad }U^h (V) , e^{\text{ad } T}(\mathcal{L}_s(\mathcal{Q}))] = \sum_{h,k \geq 0} \frac{1}{(h+k+1)!} \text{ad } U^h \text{ad } V \text{ad }U^k (\mathcal{L}_s(\mathcal{Q})).
\end{equation}
\end{lemma}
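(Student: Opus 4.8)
The plan is to realise both sides as the value at $\mathcal{L}_s(\mathcal{Q})$ of the single operator $\int_0^1 e^{s\operatorname{ad}U}(\operatorname{ad}V)\,e^{(1-s)\operatorname{ad}U}\,ds$, and then read off the two closed forms. I work inside the Lie superalgebra $\C[D]\ltimes(\hat{\g}\otimes\mathcal{P})$ of Definition \ref{extbrabil}; here $\operatorname{ad}U$ is an \emph{even} operator since $U$ is even, while $\operatorname{ad}V$ is a linear operator of parity $p(V)$ which occurs only to the first power, so no exponential of an odd operator is ever formed. As in the previous lemmas, the elements in play have strictly positive $\Z$-grading, so $\operatorname{ad}U$ raises the grading and every exponential series and every sum below is finite in each homogeneous component. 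Reading the common exponent on the left-hand side as $\operatorname{ad}U$ (matching the right-hand side), the coefficients $\tfrac{1}{(h+1)!}$ appearing there will emerge from the integration.

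First I would establish, as a formal identity of operators on the graded completion of $\C[D]\ltimes(\hat\g\otimes\mathcal P)$, the Duhamel-type formula
\begin{equation*}
\sum_{h,k\geq 0}\frac{1}{(h+k+1)!}(\operatorname{ad}U)^h(\operatorname{ad}V)(\operatorname{ad}U)^k=\int_0^1 e^{s\operatorname{ad}U}(\operatorname{ad}V)\,e^{(1-s)\operatorname{ad}U}\,ds .
\end{equation*}
This is proved termwise: expanding the two exponentials and using the Beta integral $\int_0^1 s^h(1-s)^k\,ds=\tfrac{h!\,k!}{(h+k+1)!}$ turns the right-hand side into exactly the left-hand sum. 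Applying this operator to $\mathcal{L}_s(\mathcal{Q})$ yields the right-hand side of the Lemma.

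Next I would rewrite the same integral to obtain the left-hand side. Since $\operatorname{ad}U$ is even, $e^{(1-s)\operatorname{ad}U}=e^{-s\operatorname{ad}U}e^{\operatorname{ad}U}$, so
\begin{equation*}
e^{s\operatorname{ad}U}(\operatorname{ad}V)\,e^{(1-s)\operatorname{ad}U}=\bigl(e^{s\operatorname{ad}U}(\operatorname{ad}V)\,e^{-s\operatorname{ad}U}\bigr)\,e^{\operatorname{ad}U}.
\end{equation*}
Because $X\mapsto\operatorname{ad}X$ is a homomorphism of Lie superalgebras (the Jacobi identity), conjugation of $\operatorname{ad}V$ by the automorphism $e^{s\operatorname{ad}U}$ equals $\operatorname{ad}$ of the transformed element, i.e. $e^{s\operatorname{ad}U}(\operatorname{ad}V)e^{-s\operatorname{ad}U}=\operatorname{ad}\bigl(e^{s\operatorname{ad}U}(V)\bigr)$. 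Evaluating at $\mathcal{L}_s(\mathcal{Q})$ and integrating,
\begin{equation*}
\int_0^1\bigl[\,e^{s\operatorname{ad}U}(V),\,e^{\operatorname{ad}U}(\mathcal{L}_s(\mathcal{Q}))\,\bigr]\,ds=\Bigl[\,\textstyle\sum_{h\geq 0}\tfrac{1}{(h+1)!}(\operatorname{ad}U)^h(V),\ e^{\operatorname{ad}U}(\mathcal{L}_s(\mathcal{Q}))\,\Bigr],
\end{equation*}
where I used $\int_0^1 e^{s\operatorname{ad}U}(V)\,ds=\sum_{h\geq 0}\tfrac{1}{(h+1)!}(\operatorname{ad}U)^h(V)$ and bilinearity of the bracket. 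Pulling the summands back through the bracket gives $\sum_{h\geq 0}\tfrac{1}{(h+1)!}[(\operatorname{ad}U)^h(V),\,e^{\operatorname{ad}U}(\mathcal{L}_s(\mathcal{Q}))]$, the left-hand side of the Lemma, and comparing the two evaluations of the integral completes the proof.

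The main obstacle is bookkeeping rather than conceptual: one must check that all the rearrangements (term-by-term integration, factoring $e^{(1-s)\operatorname{ad}U}$, pulling powers of $\operatorname{ad}U$ through the bracket) are legitimate in the graded completion, which follows from the strict positivity of the grading of $U$ exactly as in Lemmas \ref{KW} and \ref{unicity}. The one genuine structural input is the equivariance identity $e^{s\operatorname{ad}U}(\operatorname{ad}V)e^{-s\operatorname{ad}U}=\operatorname{ad}(e^{s\operatorname{ad}U}(V))$, which is just functoriality of $\operatorname{ad}$ and uses only the evenness of $U$, so that $e^{s\operatorname{ad}U}$ is an honest automorphism of the semidirect product acting correctly on the $D$-part of $\mathcal{L}_s(\mathcal{Q})$.
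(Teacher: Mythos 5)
Your proof is correct, and it is worth noting at the outset that the paper offers no argument of its own here: its ``proof'' is a citation of Lemma 2.6 in \cite{DKV13}, where the identity arises as the two ways of writing the derivative $\frac{d}{d\epsilon}\big|_{\epsilon=0}e^{\text{ad}(U+\epsilon V)}(a)$ (expand $(\text{ad}\,U+\epsilon\,\text{ad}\,V)^n$ and keep the linear term for the right-hand side; use the derivative-of-the-exponential formula for the left-hand side). Your Duhamel-integral route with the Beta integral $\int_0^1 s^h(1-s)^k\,ds=\tfrac{h!\,k!}{(h+k+1)!}$ and the equivariance $e^{s\,\text{ad}U}(\text{ad}\,V)e^{-s\,\text{ad}U}=\text{ad}\bigl(e^{s\,\text{ad}U}(V)\bigr)$ is the same circle of ideas packaged self-containedly, and you correctly isolate the only structural input: $U$ must be even so that $e^{s\,\text{ad}U}$ is an honest automorphism of $\C[D]\ltimes(\hat\g\otimes\mathcal{P})$, acting correctly on the $D$-part of $\mathcal{L}_s(\mathcal{Q})$; the parity of $V$ is irrelevant since $\text{ad}\,V$ occurs only linearly. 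Two remarks. First, you are in fact proving a \emph{corrected} statement: as printed, the lemma is missing the coefficients $\tfrac{1}{(h+1)!}$ on the left-hand side and writes $e^{\text{ad}\,T}$ where $e^{\text{ad}\,U}$ is meant; your reading is the right one, since the version actually invoked in the proof of Lemma \ref{Lemma: h_C} is $\sum_{h\geq 1}\tfrac{1}{h!}\bigl[(\text{ad}\,T)^{h-1}(V),e^{\text{ad}\,T}(\mathcal{L}_s)\bigr]$, i.e.\ exactly the corrected left-hand side with $U=T$ --- you should say explicitly that you are restating the lemma. Second, your appeal to strictly positive grading to justify convergence is not available for arbitrary $U,V\in\g\otimes\mathcal{P}$ as in the hypothesis; either treat the identity as a formal one (an identity of coefficients, which is all your Beta-integral computation really uses) or note that in the sole application $U=T$ and $V=\partial T/\partial\bar q_k^{(m)}$ are positively graded so that all sums are locally finite. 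Neither point affects the validity of the argument.
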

\begin{proof}
See Lemma 2.6 in \cite{DKV13}.
\end{proof}

\begin{lemma} \label{Lemma: h_C}
Let $C \in \mathcal{Z}$.Then
\[{  (M_C^0)_{\mathfrak{b}_-}}= \sum_{k\in I} q_k \otimes \frac{\delta h_C}{\delta \bar{q}_k}\, .\]
\end{lemma}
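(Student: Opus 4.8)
The plan is to compute the variational derivative $\frac{\delta h_C}{\delta \bar q_k}$ directly, by taking the first variation of the functional $\smallint h_C$ along the direction $\bar q_k \mapsto \bar q_k + \epsilon X$, and then to recognize the answer as the relevant component of $M_C$. By Corollary \ref{cor} we may replace $h_C$ by $(K \,|\, C\otimes 1)$ at the level of functionals, and since the variational derivative annihilates $D\mathcal{P}$ it suffices to compute $\frac{\delta (K\,|\,C\otimes 1)}{\delta \bar q_k}$ for $k \in I$. Writing $\mathcal{Q}_u = \sum_{t\in I} q^t\otimes \bar q_t$, this variation amounts precisely to $\delta \mathcal{L}_s = q^k\otimes X$.

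First I would differentiate the relation \eqref{TH2}, namely $e^{\text{ad}\, T}\mathcal{L}_s = D + K + \Lambda\otimes 1 =: \mathcal{L}_0$. The derivative of the exponential is handled by Lemma \ref{Lemma for Lemma h_C} applied with $U = T$ and $V = \delta T$: it shows that $(\tfrac{d}{d\epsilon}e^{\text{ad}\, T})\mathcal{L}_s = [\Xi, \mathcal{L}_0]$ where $\Xi := \sum_{h\geq 0}\text{ad}\, T^h(\delta T) \in \hat\g\otimes\mathcal{P}$, so that $\delta K = e^{\text{ad}\, T}(\delta\mathcal{L}_s) + [\Xi, \mathcal{L}_0]$. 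Pairing with the constant $C\otimes 1$ and using invariance of the form, the commutator term equals $\pm(\Xi\,|\,[\mathcal{L}_0, C\otimes 1])$; since $\Lambda, K \in \mathcal{K}$ and $C$ is central in $\mathcal{K}$, while $[D, C\otimes 1] = 0$, the $\g$-part of $[\mathcal{L}_0, C\otimes 1]$ vanishes and only a total $D$-derivative survives, whence $([\Xi,\mathcal{L}_0]\,|\,C\otimes 1)\in D\mathcal{P}$. Therefore $\smallint \delta(K\,|\,C\otimes 1) = \smallint(e^{\text{ad}\, T}(\delta\mathcal{L}_s)\,|\,C\otimes 1)$, and invariance of the form under the automorphism $e^{\text{ad}\, T}$ rewrites this as $\smallint(\delta\mathcal{L}_s\,|\,e^{-\text{ad}\, T}(C\otimes 1)) = \smallint(q^k\otimes X\,|\,M_C)$ by definition \eqref{MC} of $M_C$.

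Next I would extract components. Because $q^k$ lies in $z$-degree $0$ and the form carries the factor $\delta_{m+n,0}$, only $M_C^0$ contributes, so $\smallint(q^k\otimes X\,|\,M_C) = \smallint(q^k\otimes X\,|\,M_C^0)$. Expanding $M_C^0 = \sum_{l\in\tilde I} q_l\otimes a_l$ and using $(q^k|q_l) = \delta_{k,l}$ gives $\smallint(q^k\otimes X\,|\,M_C^0) = \smallint(-1)^{k\,p(X)}X a_k$. On the other hand, the definition of the variational derivative together with integration by parts yields $\smallint\delta(K\,|\,C\otimes 1) = \smallint(-1)^{k\,p(X)}X\frac{\delta(K\,|\,C\otimes 1)}{\delta\bar q_k}$, the signs $(-1)^{mk+m(m+1)/2}$ of the variational derivative accounting for moving the $D$'s off $X$. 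Matching these two expressions for every test function $X$, and invoking nondegeneracy of the pairing $\smallint X(\,\cdot\,)$ as in the proof of Corollary \ref{cor_1&2}, forces $a_k = \frac{\delta h_C}{\delta\bar q_k}$ for all $k \in I$. Since $\mathfrak{b}_- = \text{Span}\{q_l\,|\,l\in I\}$, the projection onto $\mathfrak{b}_-\otimes\mathcal{P}$ retains exactly the indices $l\in I$, giving $(M_C^0)_{\mathfrak{b}_-} = \sum_{l\in I} q_l\otimes a_l = \sum_{l\in I} q_l\otimes\frac{\delta h_C}{\delta\bar q_l}$, which is the claim.

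The main obstacle I anticipate is twofold. The delicate part is the consistent tracking of Koszul signs throughout the super computation: in the derivative-of-exponential identity, in the repeated use of invariance of the $\mathcal{P}$-valued form, and above all in reconciling the sign $(-1)^{k\,p(X)}$ produced by the pairing with the signs built into the variational derivative, so that no spurious factor remains. The second point requiring care is the descent from the functional identity $\smallint X a_k = \smallint X\frac{\delta h_C}{\delta\bar q_k}$ (valid for all $X$) to the pointwise equality of functions; this rests on the same nondegeneracy argument used for Corollary \ref{cor_1&2}, choosing $X$ of the form $(\bar q_j^{(2n)})^3$ or $(\bar q_j^{(2n+1)})^3$ with $n$ large to isolate each coefficient.
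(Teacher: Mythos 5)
Your proof is correct and runs on the same engine as the paper's: differentiate the dressing relation $e^{\mathrm{ad}\, T}\mathcal{L}_s = D+K+\Lambda\otimes 1$, control the derivative of the exponential with Lemma \ref{Lemma for Lemma h_C}, kill the resulting commutator term against $C\otimes 1$ using the centrality of $C$ in $\mathcal{K}$ together with invariance of the form, and transfer $e^{\mathrm{ad}\, T}$ onto $C\otimes 1$ to produce $M_C=e^{-\mathrm{ad}\, T}(C\otimes 1)$. The one place you genuinely diverge is in how you access $\frac{\delta h_C}{\delta \bar{q}_k}$: the paper computes $\frac{\delta K}{\delta \bar{q}_k}$ directly from the defining formula $\sum_m \pm D^m\frac{\partial}{\partial \bar{q}_k^{(m)}}$, which delivers the \emph{pointwise} identity $\frac{\delta h_C}{\delta \bar{q}_k}=(q^k\otimes 1\,|\,M_C)$ with no further argument, whereas you compute the first variation of the functional $\smallint(K|C\otimes 1)$ along a test direction $X$, which only yields the identity after integration, and you must then descend to the pointwise statement via nondegeneracy of the pairing $(X,g)\mapsto \smallint Xg$. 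That descent is legitimate (the same high-order test functions as in Corollary \ref{cor_1&2} do the job, modulo the minor point that $X$ should be taken of the parity of $\bar{q}_k$ or an odd parameter allowed), but it is an extra step the paper's route avoids, and it is also where your unresolved sign bookkeeping concentrates; if you instead carry out the direct computation of $\frac{\delta K}{\delta \bar{q}_k}$, the signs built into the definition of the variational derivative are precisely those produced by integration by parts, and the pointwise identity drops out without any density argument.
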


\begin{proof}  
We extend the definition of the variational derivative to $\g \otimes \mathcal{P}$ by letting 
\begin{equation*}
\frac{\delta}{\delta \bar{q}_k}(a \otimes u) :=(-1)^{p(a)(k+1)} a \otimes \frac{\delta u}{\delta \bar{q}_k} \, \,  \text{for} \, \, a \otimes u \in \g\otimes \mathcal{P}.
\end{equation*}  By Lemma \ref{second lemma_DS}, 
\begin{equation} \label{lemma_5.9_1}
\begin{aligned}
 \textstyle \frac{\delta K}{\delta \bar{q}_k}& = \textstyle \frac{\delta}{\delta \bar{q}_k} \left( e^{ad T} \mathcal{L}_s-D\right)\\
& =\textstyle \frac{\delta \mathcal{Q}_u}{\delta \bar{q}_k}   + \sum_{m\geq 0}(-1)^{m(k+1)+\frac{m(m+1)}{2}} \sum_{t\geq 1} \frac{1}{t!} D^m \frac{\partial}{\partial \bar{q}_{k}^{(m)}}\big( ad T^t (\mathcal{L}_s)\big)\\
\end{aligned}
\end{equation}
We have 
\begin{equation}\label{lemma_5.9_2}
 \textstyle \frac{\delta}{\delta \bar{q}_k} \mathcal{Q}_u= q^k \otimes 1,
 \end{equation}
 and
 \begin{equation}\label{lemma_5.9_3}
\begin{aligned}
& \textstyle   \frac{\partial}{\partial \bar{q}_{k}^{(m)}}\big( ad T^t (\mathcal{L}_s-D)\big) \\
& = \textstyle  \sum_{t'=0}^{t-1}
( ad T)^{t'}  (ad \frac{\partial T}{\partial \bar{q}_{k}^{(m)}}) (adT)^{t-1-t'} (\mathcal{Q}_u+\Lambda \otimes 1)+\delta_{m,0} (ad T)^t(q_k \otimes 1).
\end{aligned}
\end{equation}
By Lemma \ref{Lemma: commutator_derivatives}, we also have
\begin{equation}\label{lemma_5.9_4}
\begin{aligned}
 \textstyle \frac{\partial}{\partial  \bar{q}_{k}^{(m)}}\big( ad T^t (D)\big) & \textstyle=  \sum_{t'=0}^{t-2}
( ad T)^{t'}  (ad \frac{\partial T}{\partial  \bar{q}_{k}^{(m)}}) (adT)^{t-2-t'} (-T') \\
& \qquad  \textstyle  + (ad T)^{t-1} \left( -\frac{\partial T}{\partial  \bar{q}_{k}^{(m-1)}} +  (-1)^{k+m}(\frac{\partial}{\partial \bar{q}_k^{(m)}}T)' \right),
\end{aligned}
\end{equation}
where by definition $\frac{\partial}{\partial \bar{q}_{k}^{(-1)}}=0$.

After comparing \eqref{lemma_5.9_1}, \eqref{lemma_5.9_2}, \eqref{lemma_5.9_3} and \eqref{lemma_5.9_4} we deduce that 
\begin{equation} \label{lemma_5.9_5}
\begin{aligned}
 \frac{\delta K}{\delta \bar{q}_k}\textstyle  & = \textstyle  e^{ad T}(q^k \otimes 1)  +\sum_{m\geq 0} (-1)^{m(k+1)+ \frac{m(m+1)}{2}}  \\
& \quad  \textstyle D^m \left( \sum_{t\geq 1} \sum_{t'=0}^{t-1}\frac{1}{t!}
( ad T)^{t'}  (ad \frac{\partial T}{\partial {\bar{q}_k}^{(m)}}) (adT)^{t-1-t'} \mathcal{L}_s \right)
 \\
&  \quad \textstyle  -\sum_{m\geq 1}\sum_{t\geq 1} (-1)^{m(k+1)+\frac{m(m+1)}{2}}D^m \left( (ad T)^{t-1} (\frac{\partial T}{\partial {\bar{q}_k\, (m-1)} })\right).
\end{aligned}
\end{equation}
Note that  by Lemma \ref{Lemma for Lemma h_C}, 
\begin{equation*}
\textstyle 
\sum_{t\geq 1} \sum_{t'=0}^{t-1}\frac{1}{t!}
( ad T)^{t'}  (ad \frac{\partial T}{\partial {\bar{q}_k}^{(m)}}) (adT)^{t-1-t'} \mathcal{L}_s =\sum_{h\geq 1}\frac{1}{h !}\left[ (ad T)^{h-1} \left(\frac{\partial T}{\partial {\bar{q}_k}^{(m)}}\right), e^{ad T} (\mathcal{L}_s)\right.
]
\end{equation*}
Therefore \eqref{lemma_5.9_5} can be rewritten as 

\begin{equation}
\begin{aligned}
 \frac{\delta K}{\delta \bar{q}_k}& = \textstyle  e^{ad T}(q^k \otimes 1) + \sum_{m\geq 0} (-1)^{m(i+1)+ \frac{m(m+1)}{2}} \\
&   \quad \cdot  \textstyle D^m \left( \sum_{h\geq 1}\frac{1}{h !}\left[ (ad T)^{h-1} \left(\frac{\partial T}{\partial {\bar{q}_i}^{(m)}}\right), e^{ad T} (\mathcal{L}_s) -D\right] \right).
\end{aligned}
\end{equation}

Since $[ (e^{ad T} \mathcal{L}_s-D)^{(l)}, C\otimes 1]=0$ for any $l \geq 0$ we get by invariance of the bilinear form
\begin{equation}
\begin{aligned}
 \frac{\delta h_C}{\delta\bar{q}_k}=  (  e^{ad T}(q^k \otimes 1)|C \otimes 1)=(  q^k \otimes 1| e^{ad -T}(C\otimes 1) )= (q^k \otimes 1|M_C)
\end{aligned}
\end{equation}
which conludes the proof.
\end{proof}

We will abuse notations using the following definition and use $d/dt_C$ to denote both the derivation $\{ \smallint h_C {}_{\chi} \bullet \}_2 |_{\chi=0}$ of $\mathcal{P}(\overline \g)$ and its extension to $\mathfrak{g} \otimes \mathcal{P}(\overline \g)$, for any $C \in \mathcal{Z}$.

\begin{definition}\label{extder}
Let $d$ be a derivation of $\mathcal{P}(\overline \g)$. We extend $d$ to a map $\hat{d}$  from $\g \otimes \mathcal{P}(\overline \g)$ to itself by letting, for all $t \in \tilde I$ and $p_t \in \mathcal{P}(\overline \g)$ 
\begin{equation} \label{extder}
\hat{d} (q^t \otimes p_t)=(-1)^{t\, p(d)}q^t \otimes d(p_t).
\end{equation}
$d$ is a derivation of the Lie superalgebra $\g \otimes \mathcal{P}(\overline \g)$. Moreover it is clear that, for any two derivations $d_1, d_2$ of $\mathcal{P}(\overline \g)$, we have $\widehat{[d_1,d_2]}=[{\hat{d}_1},{\hat{d}_2}]$.
\end{definition}

\begin{lemma} \label{newlemma}
Let $C \in \mathcal{Z}$. The derivation $d/dt_C$ has parity $p(C)$ and
\begin{equation} \label{toprove}
\frac{ d\mathcal{L}_s}{dt_C}=-[M_{z^{-1}C}^+,\mathcal{L}_s].
\end{equation}
\end{lemma}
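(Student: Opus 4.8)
The plan is to establish the parity statement directly and then to reduce the Lax identity to the single equation $\tfrac{d\mathcal{L}_s}{dt_C}=[M_C^0,s\otimes 1]$, which I would verify by computing each side explicitly and comparing.

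\emph{Parity.} Since $D$, $\mathcal{Q}_u\in\mathcal{F}$ and $\Lambda\otimes 1$ are all odd, the operator $\mathcal{L}_s$ is odd, and as $T^c$ is even it follows from Lemma~\ref{second lemma_DS} that $K^c=e^{\text{ad}T^c}\mathcal{L}^c-D-\Lambda\otimes 1\in\mathcal{K}\otimes\mathcal{P}$ is odd. Hence every homogeneous term $X\otimes u$ of $K^c$ has $p(X)+p(u)=1$, and pairing against $C\otimes 1$ selects the terms with $p(X)=p(C)$; thus $h_C=(K^c|C\otimes 1)$ has parity $p(C)+1$. The derivation $\{\smallint h_C{}_\chi\bullet\}_2|_{\chi=0}$ then has parity $p(h_C)+1=p(C)$, which is the parity appearing in the extension rule~\eqref{extder}.

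\emph{Reduction by $z$-grading.} Because $\text{ad}\,T$ is $z$-linear we have $M_{z^{-1}C}=e^{-\text{ad}T}(z^{-1}C\otimes 1)=z^{-1}M_C$, and since $z^{-1}C\in\mathcal{Z}$ (multiplication by $z^{-1}$ preserves both $\mathcal{K}$ and its center), equation~\eqref{LMCcom} gives $[M_{z^{-1}C},\mathcal{L}_s]=0$, whence $[M_{z^{-1}C}^+,\mathcal{L}_s]=-[M_{z^{-1}C}^-,\mathcal{L}_s]$. In $\mathcal{L}_s=D+\mathcal{Q}_u+f\otimes 1+sz\otimes 1$ the powers of $z$ lie in $\{0,1\}$, so the left bracket carries only nonnegative and the right bracket only nonpositive powers of $z$; their common value is therefore concentrated in degree $z^0$. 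Writing $M_{z^{-1}C}^-=z^{-1}(M_C^0+M_C^-)$, the sole $z^0$ contribution comes from bracketing $z^{-1}M_C^0$ with $sz\otimes 1$, so
\[
-[M_{z^{-1}C}^+,\mathcal{L}_s]=[M_{z^{-1}C}^-,\mathcal{L}_s]=[M_C^0,s\otimes 1].
\]
By (A-4) we have $[s,\mathfrak{m}]=0$ since $\mathfrak{m}\subseteq\n$, so the $\mathfrak{m}\otimes\mathcal{P}$-part of $M_C^0$ drops out, and Lemma~\ref{Lemma: h_C} turns the right-hand side into $\big[\sum_{k\in I}q_k\otimes\tfrac{\delta h_C}{\delta\bar q_k},\,s\otimes 1\big]$.

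\emph{Matching the left-hand side.} Only the coefficients $\bar q_t$ of $\mathcal{Q}_u$ are differentiated, so by~\eqref{extder}, $\tfrac{d\mathcal{L}_s}{dt_C}=\sum_{t\in I}(-1)^{t\,p(C)}q^t\otimes\{\smallint h_C{}_\chi\bar q_t\}_2|_{\chi=0}$. The second bracket $\{\bar q_k{}_\chi\bar q_t\}_2=(-1)^{p(q_k)+1}(s|[q_k,q_t])$ is constant in $\chi$ and $D$, so formula~\eqref{z} applied to $\{\,\}_2$ collapses to $\{\smallint h_C{}_\chi\bar q_t\}_2|_{\chi=0}=\sum_{k\in I}(\pm)(s|[q_k,q_t])\tfrac{\delta h_C}{\delta\bar q_k}$. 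Expanding $[s,q_k]=\sum_{t}(-1)^t(q_t|[s,q_k])q^t$ in the dual basis and rewriting $(q_t|[s,q_k])$ through invariance of the form in terms of $(s|[q_k,q_t])$, the sum over $t$ reassembles $[s,q_k]$, giving $\tfrac{d\mathcal{L}_s}{dt_C}=[\sum_k q_k\otimes\tfrac{\delta h_C}{\delta\bar q_k},\,s\otimes 1]$ and hence~\eqref{toprove}. The one genuinely delicate point is this final sign reconciliation: the Koszul signs produced by~\eqref{z}, by the extension rule~\eqref{extder}, and by the $\mathcal{P}$-valued bracket of Definition~\ref{extbrabil}, together with the duality $(q^t|q_{t'})=\delta_{t,t'}$, must combine to yield exactly the structure constants of $[s,q_k]$ with the sign dictated by the grading reduction; everything else is routine.
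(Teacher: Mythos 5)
Your proof is correct and follows essentially the same route as the paper's: both exploit the vanishing of $[M,\mathcal{L}_s]$ to localize the bracket in $z$-degree zero, reduce it to $[M_C^0,s\otimes 1]$ via Lemma \ref{Lemma: h_C} together with (A-4), and match this against the flow using formula \eqref{z}. The only differences are cosmetic --- the paper works with $M_C$ and $h_{zC}$ and pairs against the dual basis $q_{t'}\otimes 1$, whereas you work directly with $M_{z^{-1}C}$ and expand the left-hand side --- and the sign bookkeeping you defer at the end is precisely what the paper's computation \eqref{y} carries out.
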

\begin{proof}
We know that $[M_C,\mathcal{L}_s]=0$, hence $[M_C^+,\mathcal{L}_s]=-[M_C^-,\mathcal{L}_s]$ does not depend on $z$. More precisely we have
\begin{equation}
[M_C^+,\mathcal{L}_s]=-[ M_C^{-1}, s \otimes 1].
\end{equation}
  By Lemma \ref{Lemma: h_C}, 
\begin{equation*}
[M_C^{-1},s \otimes 1]=[(M_C^{-1})_{\mathfrak{b}_-},s \otimes 1]=[(M_{zC}^0)_{\mathfrak{b}_-},s \otimes 1]=[\,  \textstyle  \sum_{t\in I} q_t \otimes \frac{\delta h_{zC}}{\delta \bar{q}_t},s \otimes 1]
\end{equation*}
where we used that $s \in \text{ker} \, \text{ad} \, \, \n \supset \mathfrak{m}$.
Hence, by invariance of the bilinear form, we have for all $t' \in \tilde {I}$
\begin{equation} \label{y}
\begin{split}
\textstyle ([ M_C^{-1},s \otimes 1]|q_{t'}\otimes 1) &=(-1)^{p(C)+1}(s \otimes 1 | [ \textstyle \sum_{t\in I} q_t \otimes \frac{\delta h_{zC}}{\delta \bar{q}_t}, q_{t'} \otimes 1]) \\
&=(-1)^{p(C)+1} \textstyle  \sum_{t\in I} (-1)^{(t+p(C))t'} (s|[q_t,q_{t'}]) \frac{\delta h_{zC}}{\delta \bar{q}_t} \\
 &= \{ \smallint h_{zC}{}_\chi \bar{q}_{t'} \}_2|_{\chi=0}.
\end{split}
\end{equation}
We used equation \eqref{z} to deduce the last equality in \eqref{y}. Note that for $t' \in \tilde I \setminus I$, this quantity is $0$ since $\{ \overline \n {}_\chi \mathcal{P}(\overline \g) \}_2=0$. It follows from \eqref{y} that 
\begin{equation}
[M_C^{-1},s \otimes 1]=\sum_{t \in I} (-1)^{tp(C)} q^t \otimes  \{ \smallint h_{zC}{}_\chi \bar{q}_{t} \}_2|_{\chi=0},
\end{equation}
which concludes the proof.
\end{proof}

\begin{theorem}\label{bigthe}
The super Hamiltonian equations 
\begin{equation} \label{hamiltequa}
\frac{d \bar{q}_t }{dt_C}=\{\smallint h_C {}_\chi \bar{q}_t \}_2 |_{\chi=0}, \, \, t \in \tilde I
\end{equation}
associated with elements $C \in \mathcal{Z}$ are pairwise compatible. In other words, the derivations $d/dt_{C_1}$ and $d/dt_{C_2}$ of $\mathcal{P}(\overline \g)$  supercommute for all $C_1, C_2 \in \mathcal{Z}$. For all $C \in \mathcal{Z}$, $\smallint h_C$ is an integral of motion of each of these equations. 
\end{theorem}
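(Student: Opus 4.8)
The plan is to reduce both assertions to the single statement that the Hamiltonians are pairwise in involution for the second bracket, namely $\{\smallint h_{C_1}, \smallint h_{C_2}\}_2 = 0$ for all $C_1, C_2 \in \mathcal{Z}$. Granting this, pairwise compatibility is immediate from Lemma \ref{eqcom} applied to the SUSY PVA $(\mathcal{P}(\overline{\g}), \{\,_\chi\,\}_2)$ of Example \ref{Ex:affine}: since $d/dt_{C} = X_{\smallint h_{C}}$ by definition, one gets $[d/dt_{C_1}, d/dt_{C_2}] = (-1)^{p(h_{C_1})+1} X_{\{\smallint h_{C_1}, \smallint h_{C_2}\}_2} = 0$. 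The conservation statement is then exactly Definition \ref{defint}(3): the identity $\{\smallint h_{C'}, \smallint h_{C}\}_2 = 0$ says precisely that $\smallint h_{C}$ is an integral of motion of the $t_{C'}$-flow.

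To prove the involution I would first write $\{\smallint h_{C_1}, \smallint h_{C_2}\}_2 = \frac{d}{dt_{C_1}} \smallint h_{C_2}$ and, invoking Corollary \ref{cor}, replace $\smallint h_{C_2}$ by $\smallint (K \,|\, C_2 \otimes 1)$, so that the task becomes showing $\smallint \frac{d}{dt_{C_1}}(K \,|\, C_2 \otimes 1) = 0$. The main tool is the relation $e^{\operatorname{ad} T} \mathcal{L}_s = D + K + \Lambda \otimes 1$ of Lemma \ref{second lemma_DS}. Since $D$ and $\Lambda \otimes 1$ do not depend on $t_{C_1}$, differentiating gives $\frac{dK}{dt_{C_1}} = \frac{d}{dt_{C_1}}\big(e^{\operatorname{ad} T}\mathcal{L}_s\big)$, and I would rewrite the right-hand side as a single commutator $[\Psi, e^{\operatorname{ad} T}\mathcal{L}_s]$: the contribution of $dT/dt_{C_1}$ passes through the derivative of the exponential as $[\phi, e^{\operatorname{ad} T}\mathcal{L}_s]$, while Lemma \ref{newlemma} turns $e^{\operatorname{ad} T}\frac{d\mathcal{L}_s}{dt_{C_1}} = -e^{\operatorname{ad} T}[M_{z^{-1}C_1}^{+}, \mathcal{L}_s]$ into $-[e^{\operatorname{ad} T} M_{z^{-1}C_1}^{+}, e^{\operatorname{ad} T}\mathcal{L}_s]$, so that $\Psi = \phi - e^{\operatorname{ad} T} M_{z^{-1}C_1}^{+}$.

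The heart of the argument is then a cancellation forced by centrality. Because $C_2$ lies in the center $\mathcal{Z}$ of $\mathcal{K}$, and $K \in \mathcal{K} \otimes \mathcal{P}$ while $\Lambda \in \mathcal{K}$ (as $[\Lambda^2, \Lambda] = 0$), one has $[K, C_2 \otimes 1] = [\Lambda \otimes 1, C_2 \otimes 1] = 0$, together with $[D, C_2 \otimes 1] = 0$; hence $[e^{\operatorname{ad} T}\mathcal{L}_s, C_2 \otimes 1] = 0$. Pairing $\frac{dK}{dt_{C_1}} = [\Psi, e^{\operatorname{ad} T}\mathcal{L}_s]$ with $C_2 \otimes 1$ and moving the bracket onto $C_2 \otimes 1$ by invariance of the form then produces $\pm(\Psi \,|\, [e^{\operatorname{ad} T}\mathcal{L}_s, C_2 \otimes 1]) = 0$. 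I expect the main obstacle to be the careful treatment of the $D$-summand of $e^{\operatorname{ad} T}\mathcal{L}_s$: for it the invariance identity holds only after applying $\smallint$, where the defect is a total derivative that vanishes because $C_2 \otimes 1$ is $D$-closed ($(C_2 \otimes 1)' = 0$). Tracking the super-signs through this invariance/integration-by-parts step is the only delicate bookkeeping, whereas the purely Lie-theoretic vanishing $[\,\cdot\,, C_2 \otimes 1] = 0$ is the conceptual core that makes the computation collapse.
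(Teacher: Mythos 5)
Your proposal is correct, but it runs the logic in the opposite direction from the paper's own proof. The paper follows Lemma 1.6 of \cite{DS}: setting $A_i=-M_{z^{-1}C_i}$, it derives $\frac{dK}{dt_{C_i}}=[\tilde A_i,\,D+K+\Lambda\otimes 1]$ with $\tilde A_i=e^{\operatorname{ad} T}A_i^++\frac{dT}{dt_{C_i}}+\tfrac12[T,\frac{dT}{dt_{C_i}}]+\cdots$, shows $\tilde A_i\in\mathcal K\otimes\mathcal P$, deduces $\frac{dA_j}{dt_{C_i}}=[A_i^+,A_j]$ and from there the zero-curvature identity $[d/dt_{C_1},d/dt_{C_2}](\mathcal L_s)=0$; the involution of the Hamiltonians is then extracted afterwards as a consequence of the commuting flows. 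You instead prove the involution $\{\smallint h_{C_1},\smallint h_{C_2}\}_2=0$ first --- using the very same intermediate identity (your $\Psi$ is precisely the paper's $\tilde A_1$), but then pairing with $C_2\otimes 1$ and collapsing everything via centrality of $C_2$ in $\mathcal K$, $K\in\mathcal K\otimes\mathcal P$ (Lemma \ref{second lemma_DS}), and invariance of the form --- and you recover the commutativity of the flows from Lemma \ref{eqcom}. Your direction is arguably the cleaner one: passing from commuting flows to involution requires the map $\smallint a\mapsto X_{\smallint a}$ to be injective for the degenerate second bracket, whereas involution implies commuting flows immediately; moreover you do not need to establish $\Psi\in\mathcal K\otimes\mathcal P$, only $[D+K+\Lambda\otimes1,\,C_2\otimes1]=0$. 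What you give up is the explicit Lax-type evolution of the $M_{C}$'s, which the paper's route produces along the way and which is of independent interest. Your flagged care point --- the $D$-summand contributing only a total derivative, killed by $\smallint$ because $(C_2\otimes 1)'=0$ --- is exactly the right one, and the deduction of the conservation-law statement from the involution is as you say just Definition \ref{defint}. So the proposal is sound; it is a genuinely different, and somewhat more economical, organization of the same underlying computation.
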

\begin{proof}
We are going to prove that for any $C_1,C_2 \in \mathcal{Z}$, $d/dt_{C_1}$ and $d/dt_{C_2}$ supercommute as derivations of $\mathfrak{g} \otimes \mathcal{P}(\overline \g)$. By Definition \ref{extder}, it  implies that $d/dt_{C_1}$ and $d/dt_{C_2}$ supercommute as derivations of $\mathcal{P}(\overline \g)$ which in turn implies that $\smallint h_{C_1}$ and $\smallint h_{C_2}$ are in involution for the second reduced bracket by Lemma \ref{eqncom} and Lemma \ref{newlemma}.
\\
In the rest of the proof we follow the lines of Lemma $1.6$ in \cite{DS}.
Let $A_1=-M_{z^{-1}C_1}$ and $A_2=-M_{z^{-1}C_2}$. We have
\begin{equation}
\frac{d \mathcal{L}_s}{dt_{C_1}}=[A_1^+, \mathcal{L}_s], \, \, \, \, 
\frac{d \mathcal{L}_s}{dt_{C_2}}=[A_2^+, \mathcal{L}_s].
\end{equation}
Recall that $\mathcal{L}_s=e^{-\text{ad} \, T}(D+K+\Lambda\otimes 1)$. It follows that 
\begin{equation}
\frac{d K}{dt_{C_i}}=[ \tilde{A_i}, D+K+\Lambda \otimes 1] , \  i=1,2, 
\end{equation}
where $\tilde{A_i}=e^{\text{ad}\, T}A_i^+ +\frac{dT}{dt_{C_i}}+\frac{1}{2} [T,\frac{dT}{dt_{C_i}}]+\frac{1}{6} [T,[T,\frac{dT}{dt_{C_i}}]]+\cdots$. By the same argument as in the proof of Lemma \ref{second lemma_DS}, we deduce  that $\tilde{A}_i \in \mathcal{K}\otimes \mathcal{P}$. 
We have 
\begin{equation}
0=\frac{d}{dt_{C_i}}(C_j \otimes 1)=[\tilde{A}_i,C_j \otimes 1], \, \, i,j=1,2,
\end{equation}
therefore 
\begin{equation}
\frac{d A_j}{dt_{C_i}}=[{A_i}^+,A_j], \, \, \, i,j=1,2.
\end{equation}
From here, and using $[A_i,A_j]=0$, it is straighforward to check that
\begin{equation}
[d/dt_{C_1},d/dt_{C_2}](\mathcal{L}_s)=0.
\end{equation}
\end{proof}

\subsection{Integrable super bi-Hamiltonian equations on $\mathcal{W}(\g,\n,\mathfrak{m},f)$}

In the remaining part of this section we will see how the equations \eqref{hamiltequa}, after being reduced to the generalized $W$-algebra
$\mathcal{W}(\g,\n,\mathfrak{m},f)$, give rise to bi-Hamiltonian systems for the two compatible SUSY PVA structures \eqref{bracket_H} and \eqref{secondbra}.

\begin{lemma} \label{Lemma:first_Hamiltonian}
Let $C \in \mathcal{Z}$. For all $\phi\in \mathcal{W}(\g,\n,\mathfrak{m},f)$, we have 
\begin{equation}
\begin{aligned}
\int (M^+_C|[\mathcal{L}_s, \sum_{t\in I} q_t \otimes \frac{\delta \phi}{\delta \bar{q}_t} ] )  =-  \int \{h_C{}_\chi \phi\}_1^{\mathcal{W}}|_{\chi=0}.
\end{aligned}
\end{equation}
\end{lemma}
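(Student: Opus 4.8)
The plan is to peel $M_C^+$ down to the piece that the right-hand side actually sees, in two reductions. Throughout write $P:=\sum_{t\in I}q_t\otimes\frac{\delta\phi}{\delta\bar{q}_t}\in\mathfrak{b}_-\otimes\mathcal{P}$, so the assertion to prove is $\int(M_C^+|[\mathcal{L}_s,P])=-\int\{h_C{}_\chi\phi\}_1^{\mathcal{W}}|_{\chi=0}$. First I would reduce $M_C^+$ to its $z$-constant term. Since $\mathcal{L}_s=D+\mathcal{Q}_u+f\otimes1+sz\otimes1$ and $P$ has $z$-degree $0$, the bracket $[\mathcal{L}_s,P]$ has $z$-degrees in $\{0,1\}$. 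The bilinear form of Definition \ref{extbrabil} pairs $z^m$ with $z^n$ only when $m+n=0$, and $M_C^+=\sum_{n\geq0}M_C^nz^n$ carries only nonnegative powers of $z$; hence only the $z^0$-against-$z^0$ contribution survives and $(M_C^+|[\mathcal{L}_s,P])=(M_C^0|[\mathcal{L}_s,P])$ already as elements of $\mathcal{P}$.

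Next I would decompose $M_C^0=(M_C^0)_{\mathfrak{b}_-}+(M_C^0)_{\mathfrak{m}}$ along $\g=\mathfrak{b}_-\oplus\mathfrak{m}$ and treat the $\mathfrak{b}_-$-part. By Lemma \ref{Lemma: h_C} one has $(M_C^0)_{\mathfrak{b}_-}=\sum_{k\in I}q_k\otimes\frac{\delta h_C}{\delta\bar{q}_k}$, so the first identity of Proposition \ref{Prop: Lax_bracket_W} applied to the pair $(h_C,\phi)$ reads $-\{\int h_C,\int\phi\}_1^{\mathcal{W}}=\int((M_C^0)_{\mathfrak{b}_-}|[\mathcal{L}_s,P])$. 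Since $\{\int h_C,\int\phi\}_1^{\mathcal{W}}=\int\{h_C{}_\chi\phi\}_1^{\mathcal{W}}|_{\chi=0}$, this part contributes exactly $-\int\{h_C{}_\chi\phi\}_1^{\mathcal{W}}|_{\chi=0}$, the desired right-hand side.

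It then remains to show $\int((M_C^0)_{\mathfrak{m}}|[\mathcal{L}_s,P])=0$, and I would prove the stronger statement that $\int(X|[\mathcal{L}_s,P])=0$ for every $X\in\mathfrak{m}\otimes\mathcal{P}$. Because $\mathfrak{m}\subset\n$, $\n\subset\mathfrak{b}$, $[\n,\mathfrak{b}]\subset\mathfrak{b}$, $[\n,f]\subset\mathfrak{b}$ and $[\n,s]=0$, the bracket $[X,\mathcal{L}_s]$ lies in $\mathfrak{b}\otimes\mathcal{P}$ at $z$-degree $0$, where it is precisely the first-order change of the coefficient $\mathcal{Q}_u$ under the infinitesimal gauge flow $e^{\text{ad}\epsilon X}$. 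Moving the bracket by invariance of the form modulo total derivatives gives $\int(X|[\mathcal{L}_s,P])=\pm\int(P|[X,\mathcal{L}_s])$, and the chain rule together with the definition of the variational derivative identifies $\int(P|[X,\mathcal{L}_s])$ with $\pm\int\frac{d}{d\epsilon}\phi(e^{\text{ad}\epsilon X}\mathcal{L}_s)|_{\epsilon=0}$. This last quantity vanishes because $\phi\in\mathcal{W}(\g,\n,\mathfrak{m},f)$ is gauge invariant (Theorem \ref{Lem:equi_W_1} together with Lemma \ref{Lem:gauge_equivalent}). Adding this to the contribution of the previous paragraph proves the lemma.

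The one genuinely delicate step is the last one: it is nothing but the translation of the infinitesimal gauge invariance of $\phi$ into the pairing language, and it must also cover odd $C$, for which $(M_C^0)_{\mathfrak{m}}$ is odd and the literal exponential $e^{\text{ad}\epsilon X}$ is not available. I would handle that case by replacing the exponentiated gauge transformation with the parity-independent defining condition $\{\overline{\mathfrak{m}}{}_\chi\phi\}_1\subset\mathcal{I}_{f,\mathfrak{m}}[\chi]$ of $\mathcal{W}(\g,\n,\mathfrak{m},f)$, which holds for all $m\in\mathfrak{m}\subset\n$ irrespective of parity, together with Lemma \ref{Lem:equiv}, whose identity expresses these $\chi$-brackets through the very same coefficients that control the variation $\int(X|[\mathcal{L}_s,P])$; gauge invariance then forces those coefficients, and hence the pairing, to vanish in all parities.
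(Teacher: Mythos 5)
Your proof is correct and its skeleton is the same as the paper's: reduce $M_C^+$ to $M_C^0$ by $z$-degree counting (the paper does this silently), split $M_C^0$ along $\g\otimes\mathcal{P}=(\mathfrak{b}_-\otimes\mathcal{P})\oplus(\mathfrak{m}\otimes\mathcal{P})$, identify the $\mathfrak{b}_-$-part with $\sum_{k}q_k\otimes\frac{\delta h_C}{\delta \bar{q}_k}$ via Lemma \ref{Lemma: h_C} and invoke Proposition \ref{Prop: Lax_bracket_W}, then show the $\mathfrak{m}$-part pairs to zero. The only real divergence is in that last step. The paper never passes through exponentiated gauge transformations: for $n\in\n$ it computes the pairing $(n\otimes 1\,|\,[\mathcal{L}_s,\sum_t q_t\otimes\frac{\delta\phi}{\delta\bar{q}_t}])$ directly via the local master formula \eqref{Eqn:mater formula-local}, recognizes it as $\pm\{\phi\,{}_\chi\,\bar{n}\}_1|_{\chi=0}$, which lies in $\mathcal{I}_{f,\mathfrak{m}}$ (by the defining property of $\mathcal{W}(\g,\n,\mathfrak{m},f)$ together with skew-symmetry) and also in $\mathcal{P}$, hence vanishes since $\mathcal{P}\cap\mathcal{I}_{f,\mathfrak{m}}=0$; the pairing against a general $n\otimes r$ then vanishes by $\mathcal{P}$-bilinearity of the form. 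This argument is pointwise (no $\smallint$ needed) and parity-blind, so the complication you flag for odd $C$ simply never arises. Your fallback through Lemma \ref{Lem:equiv} is sound and is in substance this same argument (it is exactly how the paper proves Theorem \ref{Lem:equi_W_1}), so the parity gap you identified is genuinely closed; the lesson is only that arguing with the $\chi$-bracket from the outset, rather than with the flow $e^{\text{ad}\,\epsilon X}$, avoids having to treat the two parities separately.
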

\begin{proof}
By Proposition \ref{Prop: Lax_bracket_W}, we can see that 
\begin{equation*}
\textstyle \int (\sum_{t\in I} q_t \otimes \frac{\delta h_C}{\delta \bar{q}_t}|[\mathcal{L}_s,\sum_{t'\in I } q_{t'} \otimes \frac{\delta \phi}{\delta \bar{q}_{t'}}] ) =-  \int \{h_C{}_\chi \phi\}_1^{\mathcal{W}}|_{\chi=0}.
\end{equation*}
 Hence we need to show that
\begin{equation}
\begin{aligned}
\textstyle \int (M^0_C|[\mathcal{L}_s, \sum_{t\in I} q_t \otimes \frac{\delta \phi}{\delta \bar{q}_t} ] )  \textstyle =\int (\sum_{t\in I} q_t \otimes \frac{\delta h_C}{\delta \bar{q}_t}|[\mathcal{L}_s,\sum_{t'\in I} q_{t'} \otimes \frac{\delta \phi}{\delta \bar{q}_{t'}}] ).
\end{aligned}
\end{equation}
By Lemma \ref{Lemma: h_C}, it is enough to check that 
\begin{equation} \label{Eqn:5.9_1}
 \textstyle    ((M_C)^0_{\n}|[\mathcal{L}_s, \sum_{t \in I} q_t \otimes \frac{\delta \phi}{\delta \bar{q}_t} ] )=0.
 \end{equation}
It is clear by definition of the bilinear form that
\begin{equation}\label{n_lax=0}
\begin{aligned}
\textstyle   (n\otimes r |[\mathcal{L}_s, \sum_{t\in I} q_t \otimes \frac{\delta \phi}{\delta \bar{q}_t} ] )
& = \textstyle  (-1)^{p(r) p(\phi)}  (n\otimes 1 |[\mathcal{L}_s, \sum_{t\in I} q_t \otimes \frac{\delta \phi}{\delta \bar{q}_t} ] )r 
\end{aligned}
\end{equation}
for all $n \otimes r \in \n \otimes \mathcal{P}$.
For all $n \in \n$, 
since $\{\overline{n}{}_\chi \phi\}_1 \in  \mathcal{I}_{f, \mathfrak{m}}[\chi]$, 
\begin{equation} \label{n_lax=1}
\begin{aligned}
 & \textstyle  (n\otimes 1 |[D+\sum_{t\in \tilde{I}} q^t \otimes \overline{q}_t, \sum_{t'\in I} q_{t'} \otimes \frac{\delta \phi}{\delta \overline{q}_{t'}} ] )\\
& = \textstyle \sum_{t'\in I} (-1)^{p(n)p(t')+p(n)} (\overline{[q_{t'},n]}+(q_{t'}|n)D) \frac{\delta \phi}{\delta\overline{q}_{t'}}  \\
& = (-1)^{p(\bar{n})p(\phi)+1} \{ \phi{}_\chi \bar{n}\}_1|_{\chi=0} \in  \mathcal{I}_{f, \mathfrak{m}}[\chi].
\end{aligned}
\end{equation} 
To obtain the last equality of \eqref{n_lax=1}, we used \eqref{Eqn:mater formula-local}. 

After replacing $\bar{q}_t$ with $(f|q_t)$ for $t\in \tilde{I} \setminus I$, we deduce that 
\begin{equation} \label{n_lax=4}
\textstyle   (n\otimes 1 |[\mathcal{L}_s, \sum_{t\in I} q_t \otimes \frac{\delta \phi}{\delta \bar{q}_t} ] )
 \in \mathcal{I}_{f, \mathfrak{m}} \cap \mathcal{P}=\{0\}.
\end{equation}
Equation \eqref{Eqn:5.9_1} follows from \eqref{n_lax=0} and \eqref{n_lax=4}.
\end{proof}

\begin{lemma}\label{Lemma:second_Hamiltonian}
Let $C \in \mathcal{Z}$. For all $\phi\in \mathcal{W}(\g,\n,\mathfrak{m},f)$, we have 
\begin{equation}\label{Eqn:5.10_1}
\begin{aligned}
\int (M^{-}_{C} |[\mathcal{L}_s, \sum_{t\in I} q_t \otimes \frac{\delta \phi}{\delta \bar{q}_t} ] ) =  \int \{h_{zC}{}_\chi \phi \}_2^{\mathcal{W}} |_{\chi=0}.
\end{aligned}
\end{equation}
\end{lemma}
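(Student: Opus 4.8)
The plan is to mirror the proof of Lemma~\ref{Lemma:first_Hamiltonian}, with the holomorphic part $M_C^+$ and the first bracket replaced by the meromorphic part $M_C^-$ and the second bracket, at the price of a shift $C\mapsto zC$. Throughout write $B:=\sum_{t\in I} q_t\otimes\frac{\delta\phi}{\delta\bar q_t}\in\mathfrak{b}_-\otimes\mathcal{P}$, a $z$-free element. The first step is a count of $z$-degrees. Since $\mathcal{L}_s=\mathcal{L}_u+sz\otimes1$ with $\mathcal{L}_u$ independent of $z$, the bracket $[\mathcal{L}_s,B]$ has $z$-degree at most one, and its coefficient of $z^1$ is exactly $[s\otimes1,B]$. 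As $M_C^-=\sum_{n<0}M_C^n z^n$ involves only strictly negative powers of $z$, the residue pairing $\delta_{m+n,0}$ of Definition~\ref{extbrabil} kills every term except the pairing of $M_C^{-1}z^{-1}$ against the $z^1$-coefficient of $[\mathcal{L}_s,B]$. Hence
\[ \int(M_C^-\,|\,[\mathcal{L}_s,B])=\int(M_C^{-1}\,|\,[s\otimes1,B]), \]
the right-hand side now using the $z$-free form \eqref{Bilinear form}.

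Next I would identify $M_C^{-1}$. Because $z$ is central in $\hat\g$, multiplication by $z$ commutes with $\mathrm{ad}\,T$, so $M_{zC}=e^{-\mathrm{ad}\,T}(zC\otimes1)=z\,M_C$ and therefore $M_{zC}^0=M_C^{-1}$. Since $zC\in\mathcal{Z}$, Lemma~\ref{Lemma: h_C} applied to $zC$ yields
\[ (M_C^{-1})_{\mathfrak{b}_-}=(M_{zC}^0)_{\mathfrak{b}_-}=\sum_{k\in I} q_k\otimes\frac{\delta h_{zC}}{\delta\bar q_k}. \]

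It remains to discard the $\mathfrak{m}$-component of $M_C^{-1}$. Decomposing $M_C^{-1}=(M_C^{-1})_{\mathfrak{b}_-}+(M_C^{-1})_{\mathfrak{m}}$ along $\g\otimes\mathcal{P}=(\mathfrak{b}_-\otimes\mathcal{P})\oplus(\mathfrak{m}\otimes\mathcal{P})$, invariance of the form gives $((M_C^{-1})_{\mathfrak{m}}\,|\,[s\otimes1,B])=\pm([s\otimes1,(M_C^{-1})_{\mathfrak{m}}]\,|\,B)$, which vanishes because $[s,\mathfrak{m}]=0$ (recall $\mathfrak{m}\subset\n$ and (A-4)). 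Substituting the previous displays,
\[ \int(M_C^-\,|\,[\mathcal{L}_s,B])=\int\Big(\sum_{t\in I} q_t\otimes\frac{\delta h_{zC}}{\delta\bar q_t}\,\Big|\,\big[s\otimes1,\textstyle\sum_{k\in I} q_k\otimes\frac{\delta\phi}{\delta\bar q_k}\big]\Big), \]
and the second line of Proposition~\ref{pro} identifies this with $\{\int h_{zC},\int\phi\}_2^{\mathcal{W}}=\int\{h_{zC}{}_\chi\phi\}_2^{\mathcal{W}}|_{\chi=0}$, as claimed.

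The argument is essentially bookkeeping. The two places that need care are the $z$-degree truncation isolating $M_C^{-1}$ (a miscount would wrongly feed in the $z^0$-part $[\mathcal{L}_u,B]$ and destroy the identity) and the check that $zC$ again lies in $\mathcal{Z}$ so that Lemma~\ref{Lemma: h_C} applies; both are immediate. Beyond tracking the Koszul signs in the invariance step, I foresee no real obstacle.
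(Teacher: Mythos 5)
Your proof is correct and follows essentially the same route as the paper's: reduce to the pairing of $M_C^{-1}$ against $[s\otimes 1,\textstyle\sum_t q_t\otimes\frac{\delta\phi}{\delta\bar q_t}]$, discard the $\mathfrak m$-component using $[s,\mathfrak m]=0$, identify $(M_C^{-1})_{\mathfrak b_-}=(M_{zC}^{0})_{\mathfrak b_-}$ via Lemma \ref{Lemma: h_C}, and conclude with Proposition \ref{pro}. The only difference is that you spell out the $z$-degree bookkeeping that the paper leaves implicit, which is a welcome clarification rather than a new idea.
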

\begin{proof}
Let $\phi\in \mathcal{W}(\g,\n,\mathfrak{m},f)$.
We know by Proposition \ref{pro} that 
\begin{equation}\label{Eqn:5.10_2}
\begin{aligned}
 \textstyle \int (\sum_{t\in I} q_t \otimes \frac{\delta h_{zC}}{\delta \bar{q}_t}|[s\otimes 1,\sum_{t'\in I} q_{t'} \otimes \frac{\delta \phi}{\delta \bar{q}_{t'}}] )= \textstyle \int \{h_{zC}{}_\chi \phi \}_2^{\mathcal{W}} |_{\chi=0}.
\end{aligned}
\end{equation}
Moreover, since
 $(m \otimes r|[s\otimes 1,\sum_{t\in I} q_t \otimes \frac{\delta \phi}{\delta \bar{q}_t}] )=0$
for any $m \in \mathfrak{m} \subset \n$ and $r \in \mathcal{P}$, we have
\begin{equation} \label{aaa}
\textstyle (M_C^{-1}|[s\otimes 1,\sum_{t\in I} q_t \otimes \frac{\delta \phi}{\delta \bar{q}_t}] )=((M_C^{-1})_{\mathfrak{b}_-}|[s\otimes 1,\sum_{t\in I} q_t \otimes \frac{\delta \phi}{\delta \bar{q}_t}] ).
\end{equation}
Finally, from Lemma \ref{Lemma: h_C} we get that
\begin{equation}\label{bbb}
\textstyle (M_C^{-1})_{\mathfrak{b}_-}=(M_{zC}^0)_{\mathfrak{b}_-}=\sum_{t\in I} q_t \otimes \frac{\delta h_{zC}}{\delta \bar{q}_t}.
\end{equation}
The result follows from equations \eqref{Eqn:5.10_2}, \eqref{aaa} and \eqref{bbb}.
\end{proof}

\begin{theorem} \label{mainth}
 The super Hamiltonian equations
\begin{equation}\label{redSHE}
\frac{d w_t}{dt_C} = \{ \smallint h_{C}{}_\chi  w_t \}_1^{\mathcal{W}}|_{\chi=0}, \, \,t \in J
\end{equation} 
associated to elements $C \in \mathcal{Z}$ are pairwise compatible. In particular, the Hamiltonians $(\int h_{D})_{D \in \mathcal{Z}}$ are  integrals of motion for each of these equations. Moreover, \eqref{redSHE} can be rewritten using the second reduced bracket:
\begin{equation*}
\frac{d w_t}{dt_C} = \{ \smallint h_{zC}{}_\chi w_t \}_2^{\mathcal{W}}|_{\chi=0}, \, \, t \in J.
\end{equation*}
\end{theorem}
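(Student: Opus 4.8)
The plan is to prove the three assertions in the order (i) the ``moreover'' part, i.e. that the two descriptions of the flow agree; (ii) pairwise compatibility; and (iii) that the $\int h_D$ are integrals of motion, the last being essentially free once (i) and (ii) are in place. The engine is the pair of identities in Lemma \ref{Lemma:first_Hamiltonian} and Lemma \ref{Lemma:second_Hamiltonian}, combined with the commutation relation \eqref{LMCcom} and the upgrade provided by Corollary \ref{cor_1&2}.

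First I would fix $C \in \mathcal{Z}$ and $\phi \in \mathcal{W}(\g,\n,\mathfrak{m},f)$, and abbreviate $X_\phi := \sum_{t\in I} q_t \otimes \frac{\delta \phi}{\delta \bar q_t}$. Since $[M_C,\mathcal{L}_s]=0$ by \eqref{LMCcom}, the invariance of the bilinear form (with the $D$-summand of $\mathcal{L}_s$ handled by integration by parts under $\int$) gives $\int (M_C | [\mathcal{L}_s, X_\phi]) = \pm \int ([\mathcal{L}_s, M_C] | X_\phi) = 0$. Splitting $M_C = M_C^+ + M_C^-$ then yields $\int (M_C^+ | [\mathcal{L}_s, X_\phi]) = -\int (M_C^- | [\mathcal{L}_s, X_\phi])$. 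Inserting Lemma \ref{Lemma:first_Hamiltonian} on the left and Lemma \ref{Lemma:second_Hamiltonian} on the right produces $\int \{ h_C {}_\chi \phi\}_1^{\mathcal{W}}|_{\chi=0} = \int \{ h_{zC} {}_\chi \phi\}_2^{\mathcal{W}}|_{\chi=0}$, that is $\{\int h_C, \int \phi\}_1^{\mathcal{W}} = \{\int h_{zC}, \int \phi\}_2^{\mathcal{W}}$ for every $\phi$. Because $\mathcal{W}(\g,\n,\mathfrak{m},f)=S(\C[D]\otimes \text{Span}_\C\{w_t\})$ carries the two compatible brackets \eqref{bracket_H} and \eqref{secondbra}, Corollary \ref{cor_1&2} upgrades this equality of functionals to the pointwise identity $\{ h_C {}_\chi \phi\}_1^{\mathcal{W}}|_{\chi=0} = \{ h_{zC} {}_\chi \phi\}_2^{\mathcal{W}}|_{\chi=0}$; taking $\phi = w_t$ gives the ``moreover'' statement.

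For compatibility, by Lemma \ref{eqcom} applied in $\mathcal{W}(\g,\n,\mathfrak{m},f)$ the derivations attached to $\int h_{C_1}$ and $\int h_{C_2}$ for the first reduced bracket supercommute as soon as $\{ \int h_{C_1}, \int h_{C_2}\}_1^{\mathcal{W}} = 0$. Using the functional identity above with $\phi = h_{C_2}$ and $C = C_1$ gives $\{ \int h_{C_1}, \int h_{C_2}\}_1^{\mathcal{W}} = \{ \int h_{zC_1}, \int h_{C_2}\}_2^{\mathcal{W}}$. Here I would record that $\mathcal{Z}$ is stable under multiplication by $z$: indeed $z$ commutes with $\text{ad}\,\Lambda^2$ in $\hat\g$, and $[zC_1, Y] = z[C_1,Y] = 0$ for all $Y \in \mathcal{K}$, so $zC_1 \in \mathcal{Z}$. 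Then Theorem \ref{bigthe} asserts that $\int h_{zC_1}$ and $\int h_{C_2}$ are in involution for the second bracket, which on functionals of $\mathcal{W}$ coincides with $\{\ ,\ \}_2^{\mathcal{W}}$; hence $\{ \int h_{C_1}, \int h_{C_2}\}_1^{\mathcal{W}} = 0$. This is exactly pairwise compatibility, and the integral-of-motion claim is the same vanishing read as ``$\int h_D$ Poisson-commutes with $\int h_C$ for the first reduced bracket''.

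The hard part is concentrated in the second paragraph: converting the commutation $[M_C,\mathcal{L}_s]=0$ into the clean functional equality $\{\int h_C, \int \phi\}_1^{\mathcal{W}} = \{\int h_{zC}, \int \phi\}_2^{\mathcal{W}}$. The delicate points are tracking the Koszul signs through the invariance step and confirming that the $D$-term of $\mathcal{L}_s$ produces no boundary contribution under $\int$, so that pairing $M_C$ against $[\mathcal{L}_s, X_\phi]$ genuinely equals (up to sign) pairing $[\mathcal{L}_s, M_C]$ against $X_\phi$. Once this identity is secured everything else is formal bookkeeping: Corollary \ref{cor_1&2} handles the local-versus-functional upgrade, and Theorem \ref{bigthe} together with the $z$-stability of $\mathcal{Z}$ closes both the compatibility and the integrals-of-motion assertions without further computation.
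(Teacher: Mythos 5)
Your proposal is correct and follows essentially the same route as the paper's proof: the commutation $[M_C,\mathcal{L}_s]=0$ plus invariance of the form combine Lemmas \ref{Lemma:first_Hamiltonian} and \ref{Lemma:second_Hamiltonian} into the functional identity $\{\smallint h_C,\smallint\phi\}_1^{\mathcal{W}}=\{\smallint h_{zC},\smallint\phi\}_2^{\mathcal{W}}$, Corollary \ref{cor_1&2} removes the integrals, and Theorem \ref{bigthe} together with Lemma \ref{eqcom} yields involutivity and compatibility. Your added details (the explicit split $M_C=M_C^++M_C^-$, the integration-by-parts treatment of the $D$-term, and the check that $zC_1\in\mathcal{Z}$) are all consistent with what the paper leaves implicit.
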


\begin{proof}
Since $M_C$ commutes with $\mathcal{L}_s$, it follows immediately from Lemma \ref{Lemma:first_Hamiltonian}, Lemma \ref{Lemma:second_Hamiltonian} and from the invariance of the bilinear form that
\begin{equation}\label{Eqn:thm_biHamil_3}
 \{\smallint h_{C}, \smallint\phi\}_1^{\mathcal{W}}=  \{\smallint h_{zC}, \smallint \phi\}_2^{\mathcal{W}} \, \, \text{for all} \, \, \phi \in \mathcal{W}(\g,\n,\mathfrak{m},f).
\end{equation}
By Corollary \ref{cor_1&2}, we can remove integrals and deduce that
\begin{equation}
\{h_{C}{}_\chi  \phi\}_1^{\mathcal{W}}|_{\chi=0}=\{h_{zC}{}_\chi  \phi\}_2^{\mathcal{W}}|_{\chi=0} \, \, \text{for all} \, \, \phi \in \mathcal{W}(\g,\n,\mathfrak{m},f).
\end{equation}
On the other hand, by definition of the second reduced bracket introduced in Proposition \ref{keyprop} and by Theorem \ref{bigthe} one has
\begin{equation}
\textstyle \{ \int h_C, \int h_D \}_2^{\mathcal{W}}=0, \, \, \, \text{for all} \, \, C,D \in \mathcal{Z}.
\end{equation}
The result now follows from Lemma \ref{eqcom}.
\end{proof}

\section{Example : $\g=\mathfrak{osp}(2|2)$}\label{Sec:example}
\if Recall that $i'+i'' \geq i$ implies that $[f, \mathfrak{n}]$ is a subset of $\mathfrak{b}$. The inequality $d-j \leq i''$ corresponds to $s \in \text{ker} \, \text{ad} \, \mathfrak{n}$ and $d-j <i'$ corresponds to $Im \, ad \, s \subset \mathfrak{b}$. Of course we want $s \in \mathfrak{n}$ so $j > i''$.
\\
\\
\fi
Let $\mathfrak{g}$ be the Lie superalgebra $\mathfrak{osp}(2|2)$. Recall that it is $4+4$ dimensional, with the even space spanned by the elements $f_3,h_1,h_2,e_3$, and the odd space spanned by the elements $e_1,e_2,f_1,f_2$. The relations between these elements are given by 
\begin{equation*}
\begin{split}
[e_1,e_2]&=e_3, \, \, \, \,  \, \, \,\, \,    [f_1,f_2]=f_3, \, \,\, \,  \, \, \, \,[h_2,e_1]=2 e_1, \,  \, [h_1,e_2]=2e_2,\, 
\\ 
[h_1,f_2]&=-2f_2,\,    [h_2,f_1]=-2f_1, \, [e_1,f_1]=h_1, \, \, \, \,[e_2,f_2]=h_2  . 
\end{split}
\end{equation*}
The Lie superalgebra $\mathfrak{g}$ can be represented as a sub Lie superalgebra of $\mathfrak{sl}(2|2)$ with 
\begin{equation*}
e_1=
 \begin{pmatrix} 
0 & 0 &0 &0  \\
0& 0 & 0& 1 \\
1&0 &0 & 0 \\
0 & 0 & 0 & 0 
\end{pmatrix}
e_2=
\begin{pmatrix} 
0 & 0 &0 &1  \\
0& 0 & 0& 0 \\
0&1 &0 & 0 \\
0 & 0 & 0 & 0 
\end{pmatrix}
f_1=
\begin{pmatrix} 
0 & 0 &0 &0  \\
0& 0 & 1& 0 \\
0&0 &0 & 0 \\
-1 & 0 & 0 & 0 
\end{pmatrix}
f_2=
\begin{pmatrix} 
0 & 0 &1 &0  \\
0& 0 & 0& 0 \\
0&0 &0 & 0 \\
0 & -1 & 0 & 0 
\end{pmatrix}
\end{equation*}

\if 
Note that $(\frac{e_3}{2}, \frac{h_1+h_2}{2}, \frac{-f_3}{2})$ is a $sl(2)$ triple. We denote $\frac{h_1+h_2}{2}$ by $h$.
\fi

The only non trivial pairings via the super Killing form $( \ | \ )$ are given by 
\begin{equation*}
(e_1 |f_1)=-2, \, \, (e_2|f_2)=-2, \, \, (e_3|f_3)=4, \, \, (h_1|h_2)=-4.
\end{equation*}
Let us consider the following grading on $\mathfrak{g}$:
\begin{equation} \label{gr1}
\mathfrak{g}_{1}=\text{Span}_\C\{e_2, e_3 \}, \quad  \mathfrak{g}_0= \text{Span}_\C\{ f_1,  h_1, h_2,  e_1 \}, \quad \mathfrak{g}_{-1}=\text{Span}_\C\{ f_2,  f_3 \}
\end{equation} 
and let
\begin{equation}
\mathfrak{m}=\mathfrak{n}=\mathfrak{g}_1, \quad \mathfrak{b}=\mathfrak{g}_1 \oplus \mathfrak{g}_0, \quad \mathfrak{b}_-=\mathfrak{g}_{-1} \oplus \mathfrak{g}_0,\quad f=f_2, \quad s= e_2.
\end{equation} 
Using the notations in the previous sections, we can check that  the assumptions (A-1), $\cdots$, (A-5) hold.

\subsection{SUSY PVA structures on $\mathcal{W}(\g,\n,\mathfrak{m},f)$} 

Let us find the generators of $\mathcal{W}(\g,\n,\mathfrak{m},f)$. Recall that
 \[\mathcal{P}= S(\C[D]\otimes \bar{\mathfrak{b}}_{-}),\]
where $ \bar{\mathfrak{b}}_{-}$ is spanned by the even elements  $\bar{f}_2, \bar{f}_1, \bar{e}_1$ and the odd elements $\bar{h}_1, \bar{h}_2, \bar{f}_3$. The \textit{universal Lax operator} associated to the given data is 
\begin{equation*}
\mathcal{L}_u=D+\mathcal{Q}_u+f_2 \otimes 1,
\end{equation*} 
where
\begin{equation*}
\begin{split}
 &  \textstyle \mathcal{Q}_u=e_3 \otimes \frac{\bar{f}_3}{4}- e_2 \otimes \frac{\bar{f}_2}{2}- e_1 \otimes \frac{\bar{f}_1}{2}- h_1 \otimes \frac{\bar{h}_2}{4}- h_2 \otimes \frac{\bar{h}_1}{4}+ f_1 \otimes \frac{\bar{e}_1}{2}.
 \end{split}
 \end{equation*}
    \if 
   One should think of $\mathcal{L}_s$ as an element of $\mathfrak{K} \ltimes (\hat{\g} \otimes \mathcal{P}) $. 
   To make $\mathcal{L}_s$ homogeneous of degree $-1$, we let $D, \bar{f_1}, \bar{h_1}, \bar{h_2}$ and $\bar{e_1}$ have grading $-1$ and $z, \bar{f_2}$ and $\bar{f_3}$ have grading $-2$. 
   \fi


Let us take the subspace $V= \text{Span}_{\C} \{ f_1, h_1, e_2, e_3 \} \subset \g$. Then $\mathfrak{b}= [f_2, \n] \oplus V$. Hence, by Lemma \ref{Lem:Lax_can}, there exists a unique $N \in (\mathfrak{n} \otimes \mathcal{P})_{\bar{0}}$ such that 
 \begin{equation*}
 e^{ad N} \mathcal{L}_u=D+\mathcal{Q}^c + f \otimes 1:= \mathcal{L}^{c},
 \end{equation*}
 where $\mathcal{Q}^c$ has the form $e_3 \otimes w_4+ e_2 \otimes w_3+h_1 \otimes w_2+ f_1 \otimes w_1$.  A direct computation gives
 \begin{equation*}
 \begin{split}
& \textstyle  N=-e_2 \otimes \frac{\bar{h}_1}{4}- e_3 \otimes \frac{\bar{f}_1}{4},\\ 
&  \textstyle w_1=\frac{\bar{e}_1}{2}, \, \, w_2=-\frac{\bar{h}_2}{4}, \, \,  w_3=-\frac{\bar{f}_2}{2} -\frac{\bar{h}'_1}{4}+ \frac{\bar{e}_1\bar{f}_1}{4}-\frac{\bar{h}_1\bar{h_2}}{8},\,  \, w_4=\frac{\bar{f}_3}{4}+\frac{\bar{f}'_1}{4}-\frac{\bar{f}_1 (\bar{h}_1+\bar{h}_2)}{8}.
\end{split}
\end{equation*}
By  Theorem \ref{Lem:equi_W_1}, the generalized $W$-algebra associated to $\g, f, \mathfrak{m}$ and $\n$ is 
\[ \mathcal{W}(\g,\n,\mathfrak{m},f)= S(\C[D]\otimes \text{Span}_\C\{ w_1, w_2, w_3, w_4\})\subset \mathcal{P}.\]

\if
 It is the subalgebra of invariant functions on $\mathcal{P}$ under the action of  $(\mathfrak{n} \otimes \mathcal{P})_0$. More precisely, for all $X  \in (\mathfrak{n} \otimes \mathcal{P})_{\bar{0}}$ of degree $0$, we let $\bar{f_3}^X,\cdots, \bar{e_1}^X$ be the elements of $\mathcal{P}$ such that 
\begin{equation*}
e^{ad X} \mathcal{L}_s= D+ e_3 \otimes \frac{\bar{f_3}^X}{4}-\cdots+f_1 \otimes \frac{\bar{e}_1^X}{2}+ \Lambda \otimes 1
 \end{equation*}
 and naturally extend the action of $X$ to $\mathcal{P}$. Then, by Theorem \ref{Lem:equi_W_1}, an element $p$ in $\mathcal{P}$ is such that $p^X=p$ for all $X \in (\mathfrak{n} \otimes \mathcal{P})_0$ if and only if $p \in \mathcal{W}$. 
 \fi
 
 By Proposition \ref{keyprop}, $\mathcal{W}(\g,\n,\mathfrak{m},f)$ admits two compatible SUSY PVA structures, induced from the compatible affine SUSY PVA structures on $S(\mathbb{C}[D] \otimes \bar{ \mathfrak{g}})$.  We give the $\chi$-brackets between the generators $w_i$ for the first structure:
\begin{equation*}
\begin{split}
\{{w_1}_{\chi} w_2 \}_{1}&=-\frac{1}{2}w_1, \, \, \, \{{w_1}_{\chi} w_3 \}_{1}=\frac{1}{2} \chi w_1, \, \, \, \{{w_1}_{\chi} w_4 \}_{1}=-\frac{1}{2}w_3+\frac{1}{2} \chi w_2 - \frac{1}{4} \chi^2.
\\
\{{w_2}_{\chi} w_3 \}_{1}&=\frac{1}{2} \chi w_2+ \frac{1}{4} \chi^2, \, \, \, \{{w_2}_{\chi} w_4 \}_{1}=\frac{1}{2} w_4, \, \, \,\{{w_3}_{\chi} w_4 \}_{1}=-\frac{1}{2} w_4'-2 w_2 w_4,\\
\{{w_3}_{\chi} w_3 \}_{1}&=-\frac{1}{2}w_3'+2w_1w_4-2w_2w_3, \, \, \, \, \{{w_1}_{\chi} w_1 \}_{1}=\{{w_2}_{\chi} w_2 \}_{1}=\{{w_4}_{\chi} w_4 \}_{1}=0.
\end{split}
\end{equation*}
As for the second SUSY PVA structure, it is defined by the two $\chi$-brackets
\begin{equation*}
\{{w_1}_{\chi} w_4 \}_{2}=\frac{1}{2}, \, \, \, \{{w_3}_{\chi} w_3 \}_{2}=2w_2,
\end{equation*}
and all other $\chi$-brackets on pairs of generators being trivial.

\subsection{Super bi-Hamiltonian system associated to $\mathcal{W}(\g,\n,\mathfrak{m},f)$} \ 

Since  $\Lambda^2=z h_2$ is a semisimple element in $\hat{\g}$, we have  $\hat{\g}= \text{ker} \, \text{ad} \, \Lambda^2\oplus \text{im} \, \text{ad} \, \Lambda^2$.  By a direct computation,  we get 
\begin{itemize}
\item $\mathcal{K}:= \text{ker} \, \text{ad} \, \Lambda^2= \C(\!( z^{-1})\!) \otimes \text{Span}_\C\{ f_2, \, h_1, \, h_2, \, e_2\}$, 
\item $ \mathcal{I}:=\text{im} \, \text{ad} \, \Lambda^2= \C(\!( z^{-1})\!) \otimes \text{Span}_\C\{ f_3, \, f_1, \, e_1, \, e_3\}$, 
\item $\mathcal{Z}(\mathcal{K})=\C(\!( z^{-1})\!) \otimes \text{Span}_\C\{ h_2\}$.
\end{itemize}


\vspace{1 mm}

Let  $(T^c,H^c) \in (\mathcal{I} \otimes \mathcal{W}(\g,\n,\mathfrak{m},f) \times (\mathcal{K} \otimes \mathcal{W}(\g,\n,\mathfrak{m},f) $ as in Lemma \ref{KW} be such that

\if
Recall that the element  $K^c\in \mathcal{K} \otimes \mathcal{W}(\g,\n,\mathfrak{m},f)$ in Lemma \ref{KW} can be computed by the unique element $T^c=T_{\mathcal{Q}_u^c}\in \mathcal{I} \otimes \mathcal{W}(\g,\n,\mathfrak{m},f)$ in Lemma \ref{first lemma_DS} and \eqref{TH2} in Lemma  \ref{second lemma_DS}: 
\fi
\begin{equation} 
e^{ad \, T^c}\mathcal{L}^c=D+K^c+\Lambda \otimes 1. 
\end{equation}

\begin{theorem}
For all $n \geq 0$, let 
\begin{equation}
\smallint \rho_n=\smallint (K^c|h_2 \otimes z^n).
\end{equation}
The super hamiltonian equations
\begin{equation} \label{hierarchy}
\frac{dw_t}{dt_n} = \{\smallint \rho_n{}_\chi  w_t \}_1|_{\chi=0}, \, \, t=1,2,3,4
\end{equation} 
are  pairwise compatible. Moreover, the hamiltonians $\int \rho_k,\,  k \geq 0$ are integrals of motion for each of these equations, and linearly independent over $\mathbb{C}$. Finally, for all $n \geq 0$,  the $n$-th system of equation in the hierarchy can be rewritten using the second bracket:
\begin{equation}\label{hierarchy2}
\frac{d w_t}{dt_n} = \{ \smallint \rho_{n+1}{}_\chi  w_t \}_2|_{\chi=0}, \, \, t=1,2,3,4.
\end{equation}
\end{theorem}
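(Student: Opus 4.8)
The plan is to derive everything except the linear independence directly from the general results of Section~\ref{Sec:DS hier}, and then to treat linear independence by hand via Proposition~\ref{indept}. We have already verified that the tuple $(\g,\n,\mathfrak{m},f)$ with $\g=\mathfrak{osp}(2|2)$ satisfies (A-1)--(A-5), and that $\mathcal{Z}=\C(\!(z^{-1})\!)\otimes\text{Span}_{\C}\{h_2\}$. I would set $C_n:=h_2z^n\in\mathcal{Z}$ for $n\ge0$, so that $\rho_n=h_{C_n}$ and, since $zC_n=C_{n+1}$, also $h_{zC_n}=\rho_{n+1}$. Theorem~\ref{mainth}, applied to the family $(C_n)_{n\ge0}$, then gives at once the pairwise compatibility of the flows \eqref{hierarchy}, the fact that each $\smallint\rho_k$ is an integral of motion of every flow, and the reformulation \eqref{hierarchy2} through the second bracket; here one only needs to note that on $\mathcal{W}(\g,\n,\mathfrak{m},f)$ the affine bracket $\{\ {}_\chi\ \}_1$ agrees with the reduced bracket $\{\ {}_\chi\ \}_1^{\mathcal{W}}$. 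The sole remaining point is the linear independence of $(\smallint\rho_n)_{n\ge0}$ over $\C$.

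For this I would first set up a homogeneity argument that reduces linear independence to nonvanishing. Extend the $\Z$-grading of $\hat\g$ (with $\deg z=-2$ and $\deg D=-1$) to $\mathcal{P}$ by declaring $\deg\bar q_t=\deg q_t-1$; then $\mathcal{L}_s$ is homogeneous of degree $-1$, and by the uniqueness in Lemma~\ref{first lemma_DS} and Lemma~\ref{second lemma_DS} so are $T^c$ and $K^c$ (of degrees $0$ and $-1$). Because the bilinear form is homogeneous of degree $0$ and pairs $h_2$ only with $h_1$ (via $(h_1|h_2)=-4$), the quantity $\rho_n=(K^c\,|\,h_2z^n)$ extracts, up to the scalar $-4$, the coefficient of $h_1z^{-n}$ in $K^c$, which is homogeneous of degree $-1-2n$. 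Hence the classes $\smallint\rho_n$ sit in pairwise distinct graded components of $\mathcal{W}(\g,\n,\mathfrak{m},f)/D\mathcal{W}(\g,\n,\mathfrak{m},f)$, and linear independence over $\C$ will follow as soon as each $\smallint\rho_n$ is shown to be nonzero.

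To prove $\smallint\rho_n\ne0$ I would invoke Proposition~\ref{indept} with $\mathcal{Q}_{\mathcal K}^c=e_2\otimes w_3+h_1\otimes w_2$ and $\mathcal{Q}_{\mathcal I}^c=e_3\otimes w_4+f_1\otimes w_1$. The linear part $\smallint(\mathcal{Q}_{\mathcal K}^c\,|\,h_2z^n)$ equals $-4\smallint w_2\ne0$ for $n=0$ and vanishes for $n\ge1$ because of the factor $\delta_{m+n,0}$ in the form, so $\rho_0$ is already distinguished. For $n\ge1$ one is reduced to the quadratic part $\tfrac12\smallint(\mathcal{Q}_{\mathcal I}^c\,|\,[h_2z^n,T_l])$, where $T_l$ is the linear part of $T^c$, characterized by $\mathcal{Q}_{\mathcal I}^c=T_l'+[\Lambda\otimes1,T_l]$. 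I would solve this relation as a series in $z^{-1}$ using $\Lambda^2=zh_2$ together with the invertibility of $\text{ad}\,\Lambda$ on $\mathcal{I}$ (a consequence of (A-5)): matching powers of $z$ yields a recursion in which each descent in the power of $z$ costs a derivative $D$, so that the $z^{-n}$ component of $T_l$ (the only one contributing to the pairing with $h_2z^n$) carries degree $-2n$ and thus differential order growing with $n$. Pairing with $\mathcal{Q}_{\mathcal I}^c$ and reducing modulo $D\mathcal{W}(\g,\n,\mathfrak{m},f)$, the top-weight contribution should be a nonzero multiple of a monomial such as $\smallint w_1\,D^{2n-2}w_3$ (or $\smallint w_2\,D^{2n-2}w_4$), which does not vanish.

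The main obstacle is exactly this last step: controlling the $z^{-1}$-recursion for $T_l$ precisely enough to guarantee that the top-weight quadratic term of $\smallint\rho_n$ survives---both the pairing and the passage to $\mathcal{W}(\g,\n,\mathfrak{m},f)/D\mathcal{W}(\g,\n,\mathfrak{m},f)$---for every $n\ge1$ rather than cancelling. The homogeneity step is what keeps this manageable, since it eliminates any need to compare different $\smallint\rho_n$ and reduces the whole question to exhibiting one nonzero leading coefficient in each degree $-1-2n$.
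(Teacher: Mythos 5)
The reduction of everything except linear independence to Theorem \ref{mainth} (applied to $C_n=h_2z^n$, noting $zC_n=C_{n+1}$) is exactly the paper's argument, and your homogeneity observation — that $\smallint\rho_n$ is concentrated in degree $-1-2n$, so independence over $\C$ reduces to the nonvanishing of each $\smallint\rho_n$ — is a legitimate simplification that the paper does not spell out. The gap is that for $n\ge 1$ you never actually establish nonvanishing, and the monomials you guess for the ``surviving'' term cannot occur. By Proposition \ref{indept} the quadratic part of $\smallint\rho_n$ is $\tfrac12\smallint(\mathcal{Q}^c_{\mathcal I}\,|\,[h_2\otimes z^n,T_l])$, where $T_l$ is determined by the linear relation $\mathcal{Q}^c_{\mathcal I}=T_l'+[\Lambda\otimes 1,T_l]$. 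Since $\mathcal{Q}^c_{\mathcal I}=e_3\otimes w_4+f_1\otimes w_1$, both $T_l$ and the pairing depend only on $w_1$ and $w_4$; a term proportional to $\smallint w_1\,D^{2n-2}w_3$ or $\smallint w_2\,D^{2n-2}w_4$ is therefore impossible, and the answer must be a bilinear expression in $w_1$ and $w_4$.

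What is missing is the explicit resolution of the recursion you describe. The paper writes $T_l=e_3\otimes A+e_1\otimes B+f_1\otimes C+f_3\otimes E$ and projects the linear part of the squared equation of Lemma \ref{first lemma_DS} onto $\mathcal I$, obtaining $T_l''+[h_2\otimes z,T_l]=(\mathcal{U}_{\mathcal I})_l$ with $(\mathcal{U}_{\mathcal I})_l=e_3\otimes w_4'+2e_1\otimes w_4-f_1\otimes w_1'+f_3\otimes w_1$; this decouples into four scalar ODEs, of which the two relevant ones are $B''+2zB=2w_4$ and $E''-2zE=w_1$, with explicit formal solutions
\begin{equation*}
B=2\sum_{k\ge1}(-1)^{k-1}\frac{w_4^{(2k-2)}}{(2z)^k},\qquad E=-\sum_{k\ge1}\frac{w_1^{(2k-2)}}{(2z)^k}.
\end{equation*}
Feeding the $z^{-n}$ coefficients into Proposition \ref{indept} yields $\smallint(\rho_n)_{\mathrm{quad}}=2^{3-n}\smallint w_1^{(2n-2)}w_4$, which is nonzero modulo $D\mathcal{W}(\g,\n,\mathfrak{m},f)$ (its variational derivative in $w_4$ is $\pm w_1^{(2n-2)}\ne0$) and, as it happens, already gives linear independence directly without the grading argument. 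Until you carry out this computation — or an equivalent closed-form control of the $z^{-1}$-expansion of $T_l$ — the assertion that the top-weight quadratic term ``survives'' is unsubstantiated; note that the alternating signs in $B$ show the coefficients of the recursion are not all of one sign, so a cancellation cannot be excluded a priori.
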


\begin{proof}
By Theorem \ref{mainth}, the equations \eqref{hierarchy} are pairwise compatible super Hamiltonian system with integrals of motion $\int \rho_k,\,  k \geq 0$. 

Let us show that  $\smallint \rho_n$ are linearly independent hamiltonians using Proposition \ref{indept}. In order to compute the linear part of $T^c$, it is easier to use the equation 
\begin{equation} \label{eq111}
e^{ad \, T^c} [\mathcal{L}^c, \mathcal{L}^c]/2=D^2+H^c+\Lambda^2 \otimes 1.
\end{equation}
which can be found in Lemma \ref{first lemma_DS}.
We have 
\begin{equation*}
[\mathcal{L}^c,\mathcal{L}^c]/2=D^2+\mathcal{U}+\Lambda^2 \otimes 1
\end{equation*}
 with $\mathcal{U}$ breaking into three homogeneous parts
\begin{equation*}
\begin{split}
& \mathcal{U}_1=e_3 \otimes [w_4'+2w_2w_4]+e_2 \otimes [-w_3'+2w_1w_4-2w_2w_3],\\
& \mathcal{U}_0=2 e_1 \otimes w_4 +h_1 \otimes w_2' + h_2 \otimes w_3 - f_1 \otimes w_1' , \\
& \mathcal{U}_{-1} = -2z e_2 \otimes w_2+2 f_2 \otimes w_2 +f_3 \otimes w_1.
\end{split}
\end{equation*}
Hence the linear part of $\mathcal{U}_{\mathcal{I}}$ is
\begin{equation*}
(\mathcal{U}_{\mathcal{I}})_l=e_3 \otimes w_4'+2 e_1 \otimes w_4-f_1 \otimes w_1'+f_3 \otimes w_1.
\end{equation*}
The linear component of equation \eqref{eq111} projected on $\mathcal{I}$ reads
\begin{equation*}
T_l''+[h_2 \otimes z,T_l]=(\mathcal{U}_{\mathcal{I}})_l.
\end{equation*}
In terms of coordinates, $T_l=e_3 \otimes A+e_1 \otimes B+f_1 \otimes C +f_3 \otimes D$ with even elements $A,D$ and odd elements $B,C$. This equation is equivalent to the four ODEs
\begin{equation*}
\begin{split}
A''+2zA&=w_4', \, \, \,  \, \, \,C''-2zC=-w_1',\\
 B''+2zB&=2w_4, \, \, \,  D''-2zD=w_1.
\end{split}
\end{equation*}
We solve these equations for $B$ and $D$:
\begin{equation*}
\textstyle B=2\sum_{k \geq 1} (-1)^{k-1}\frac{w_4^{(2k-2)}}{(2z)^k}, \, \, \, \, \, \, \, \, \, D=-\sum_{k \geq 1}\frac{w_1^{(2k-2)}}{(2z)^k}.
\end{equation*}
By Proposition \ref{indept}, the quadratic part of $\int \rho_n$ is given by the expression
\begin{equation*}
\begin{split}
\smallint (\rho_n)_{quad}&=\frac{1}{2} \smallint (e_3 \otimes w_4+f_1 \otimes w_1| z^n(2 e_3 \otimes A+2 e_1 \otimes B-2 f_1 \otimes C -2 f_3 \otimes D))\\
&=\smallint -4w_4D_n+2w_1B_n \\
&=4 \smallint \frac{w_4w_1^{(2n-2)}}{2^n}+(-1)^{n-1}\frac{w_1w_4^{(2n-2)}}{2^n}\\
&=2^{3-n}\smallint w_1^{(2n-2)}w_4.
\end{split}
\end{equation*}
\end{proof}

\begin{proposition}
 The first two equations in the hierarchy \eqref{hierarchy} are given by
\begin{equation}\label{ex_eq1}
\frac{dw_1}{dt_0}=2w_1, \, \, \, \frac{dw_2}{dt_0}=0, \, \, \, \frac{dw_3}{dt_0}=0, \, \, \,\frac{dw_4}{dt_0}=-2w_4,
\end{equation}
and
\begin{equation}\label{ex_eq1}
\begin{split}
\frac{dw_1}{dt_1}&=w_1''-2(w_1w_2)'-2w_1w_3, \, \, \, \frac{dw_2}{dt_1}=0, \\
\frac{dw_3}{dt_1}&=-8 w_1 w_2 w_4, \, \, \,\frac{dw_4}{dt_1}=w_4''+2w_2w_4'+2w_3w_4.
\end{split}
\end{equation}
The three first integrals of motion of the hierarchy \eqref{hierarchy} are
\begin{equation*}
\smallint \rho_0=-\smallint 4 w_2, \, \, \, \smallint \rho_1= 4 \smallint w_1w_4, \, \, \, \smallint \rho_2= \smallint 2w_1''w_4-4(w_1w_2)'w_4-4w_1w_3w_4,
\end{equation*}

\end{proposition}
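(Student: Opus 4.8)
The statement is purely computational: Theorem \ref{mainth} already guarantees that the flows \eqref{hierarchy} are pairwise compatible, that the $\smallint\rho_n$ are integrals of motion, and that \eqref{hierarchy} and \eqref{hierarchy2} agree, so the plan is only to make the first two flows and the first three Hamiltonians explicit. Throughout I would use that $w_1,w_3$ are even while $w_2,w_4$ are odd (consistent with the parity reversal, since $\bar e_1,\bar f_2$ are even and $\bar h_2,\bar f_3$ are odd), together with the $\chi$-brackets of the $w_i$ listed just above. To organize the bookkeeping I assign the weights $\Delta(w_1)=\Delta(w_2)=1$, $\Delta(w_3)=\Delta(w_4)=2$, $\Delta(D)=1$; since $z$ carries weight $2$, the functional $\smallint\rho_n=\smallint(K^c\,|\,h_2\otimes z^n)$ is homogeneous of weight $2n+1$.

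First I would pin down the Hamiltonians. For $n=0$ the only weight‑$1$ monomials are $w_1,w_2$, so $\smallint\rho_0$ is linear; Proposition \ref{indept} gives its value as $\smallint(\mathcal{Q}^c_{\mathcal{K}}\,|\,h_2\otimes 1)$ with $\mathcal{Q}^c_{\mathcal{K}}=e_2\otimes w_3+h_1\otimes w_2$, and since $(e_2|h_2)=0$ and $(h_1|h_2)=-4$ this yields $\smallint\rho_0=-4\smallint w_2$. For $n=1$ the weight‑$3$ linear terms are total derivatives and drop out of the functional, while the quadratic part supplied by Proposition \ref{indept} is $2^{3-1}\smallint w_1^{(0)}w_4=4\smallint w_1w_4$; weight and parity then leave $\smallint\rho_1=4\smallint w_1w_4$. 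For $\smallint\rho_2$ the same formula gives the quadratic part $2\smallint w_1''w_4$, but there is now a genuine cubic contribution; I would obtain it by carrying the recursion of Lemmas \ref{first lemma_DS} and \ref{second lemma_DS} one order beyond the proof of the preceding theorem — the linear part $T_l$ of $T^c$ is already recorded there, so solving the quadratic component of \eqref{TH} for $T_q$ and the ensuing cubic component for $K^c$ and pairing against $h_2\otimes z^2$ produces $\smallint\rho_2=\smallint\,2w_1''w_4-4(w_1w_2)'w_4-4w_1w_3w_4$.

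Next I would evaluate the two flows. The $t_0$‑system is immediate from $\tfrac{dw_t}{dt_0}=-4\{w_2{}_\chi w_t\}_1|_{\chi=0}$: rewriting each $\{w_2{}_\chi w_t\}_1$ by skew‑symmetry in terms of the listed brackets and setting $\chi=0$ reproduces the displayed $t_0$‑equations. For the $t_1$‑system I would expand $\tfrac{dw_t}{dt_1}=4\{w_1w_4{}_\chi w_t\}_1|_{\chi=0}$ by the left Leibniz rule of Lemma \ref{Lem:left leibniz},
\[
\{w_1w_4{}_\chi w_t\}_1=(-1)^{p(w_t)}\{w_1{}_{\chi+D}w_t\}_\to w_4+\{w_4{}_{\chi+D}w_t\}_\to w_1,
\]
convert each generator bracket by skew‑symmetry and sesquilinearity, and set $\chi=0$. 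Since the second structure is extremely sparse — only $\{w_1{}_\chi w_4\}_2=\tfrac12$ and $\{w_3{}_\chi w_3\}_2=2w_2$ are nonzero — the cleaner route I would actually take is to compute the same system from $\{\smallint\rho_2{}_\chi w_t\}_2|_{\chi=0}$, which by Theorem \ref{mainth} equals the first‑bracket expression; the two agree and give the displayed $t_1$‑equations.

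The main obstacle is not conceptual but is the sign and parity bookkeeping forced by the odd indeterminate $\chi$ (through $D\chi+\chi D=-2\chi^2$) in the super Leibniz and skew‑symmetry rules, which is delicate in the $t_1$‑flow handled with the first bracket — this is exactly why, once $\smallint\rho_2$ is known, passing to the sparse second bracket is advantageous, and it also furnishes an independent cross‑check of $\smallint\rho_2$ via \eqref{hierarchy2}. The remaining assertions, namely linear independence of the $\smallint\rho_n$ and their status as integrals of motion of every equation in the hierarchy, are inherited directly from Theorem \ref{mainth} together with the quadratic‑part formula $2^{3-n}\smallint w_1^{(2n-2)}w_4$, whose terms are manifestly independent over $\mathbb{C}$.
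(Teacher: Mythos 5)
Your plan is correct and coincides with the paper's proof, which is literally the one-line ``Direct computation'': you spell out exactly the computation the authors perform (Proposition \ref{indept} for the linear/quadratic parts of the $\smallint\rho_n$, the recursion of Lemmas \ref{first lemma_DS}--\ref{second lemma_DS} for the cubic term of $\smallint\rho_2$, and evaluation of the flows via the listed generator brackets, preferring the sparse second structure). The only soft spot is the claim that ``weight and parity'' alone pin down $\smallint\rho_1=4\smallint w_1w_4$: the monomial $w_1^2w_2$ also has weight $3$ and odd parity and is not a total derivative, so ruling out such cubic contributions still requires carrying the recursion for $K^c$ to that order rather than a counting argument.
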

\begin{proof}
Direct computation.
\end{proof}

\end{document}